\newcommand*{\da@rightarrow}{\mathchar"0\hexnumber@\symAMSa 4B}
\newcommand*{\da@leftarrow}{\mathchar"0\hexnumber@\symAMSa 4C}
\newcommand*{\xdashrightarrow}[2][]{%
  \mathrel{%
    \mathpalette{\da@xarrow{#1}{#2}{}\da@rightarrow{\,}{}}{}%
 }%
}
\newcommand{\xdashleftarrow}[2][]{%
  \mathrel{%
    \mathpalette{\da@xarrow{#1}{#2}\da@leftarrow{}{}{\,}}{}%
 }%
}
\newcommand*{\da@xarrow}[7]{%
  \sbox0{$\ifx#7\scriptstyle\scriptscriptstyle\else\scriptstyle\fi#5#1#6\m@th$}%
  \sbox2{$\ifx#7\scriptstyle\scriptscriptstyle\else\scriptstyle\fi#5#2#6\m@th$}%
  \sbox4{$#7\dabar@\m@th$}%
  \dimen@=\wd0 %
  \ifdim\wd2 >\dimen@
    \dimen@=\wd2 %
  \fi
  \count@=2 %
  \def\da@bars{\dabar@\dabar@}%
  \@whiledim\count@\wd4<\dimen@\do{%
    \advance\count@\@ne
    \expandafter\def\expandafter\da@bars\expandafter{%
      \da@bars
      \dabar@ 
   }%
 }%
  \mathrel{#3}%
  \mathrel{%
    \mathop{\da@bars}\limits
    \ifx\\#1\\%
    \else
      _{\copy0}%
    \fi
    \ifx\\#2\\%
    \else
      ^{\copy2}%
    \fi
 }%
  \mathrel{#4}%
}
\theoremstyle{plain}
\newtheorem{theorem}{Theorem}[section]
\newtheorem{proposition}[theorem]{Proposition}
\newtheorem{lemma}[theorem]{Lemma}
\theoremstyle{definition}
\newtheorem{definition}[theorem]{Definition}
\newtheorem{assumption}[theorem]{Assumption}
\newtheorem{remark}[theorem]{Remark}
\def\1{\bm{1}}
\def\vzero{{\bm{0}}}
\def\vone{{\bm{1}}}
\def\vepsilon{{\bm{\epsilon}}}
\def\va{{\bm{a}}}
\def\vb{{\bm{b}}}
\def\vc{{\bm{c}}}
\def\vd{{\bm{d}}}
\def\vf{{\bm{f}}}
\def\vp{{\bm{p}}}
\def\vq{{\bm{q}}}
\def\vs{{\bm{s}}}
\def\vu{{\bm{u}}}
\def\vx{{\bm{x}}}
\def\eva{{a}}
\def\evc{{c}}
\def\evd{{d}}
\def\evf{{f}}
\def\evh{{h}}
\def\evp{{p}}
\def\evs{{s}}
\def\evu{{u}}
\def\evw{{w}}
\def\evx{{x}}
\def\mA{{\bm{A}}}
\def\mLambda{{\bm{\Lambda}}}
\def\mSigma{{\bm{\Sigma}}}
\DeclareMathAlphabet{\mathsfit}{\encodingdefault}{\sfdefault}{m}{sl}
\SetMathAlphabet{\mathsfit}{bold}{\encodingdefault}{\sfdefault}{bx}{n}
\def\gG{{\mathcal{G}}}
\def\sA{{\mathbb{A}}}
\def\sE{{\mathbb{E}}}
\def\sK{{\mathbb{K}}}
\def\sN{{\mathbb{N}}}
\def\sP{{\mathbb{P}}}
\def\sQ{{\mathbb{Q}}}
\def\sR{{\mathbb{R}}}
\def\sW{{\mathbb{W}}}
\def\sX{{\mathbb{X}}}
\def\emLambda{{\Lambda}}
\def\emSigma{{\Sigma}}
\DeclareMathOperator*{\argmin}{arg\,min}
\newcommand*{\T}{\mathsf{T}}
\DeclareMathOperator\supp{supp}
\newcommand{\interior}[1]{%
  {\kern0pt#1}^{\mathrm{o}}%
}
\newcommand\extrafootertext[1]{%
    \bgroup
    \renewcommand\thefootnote{\fnsymbol{footnote}}%
    \renewcommand\thempfootnote{\fnsymbol{mpfootnote}}%
    \footnotetext[0]{#1}%
    \egroup
}
\def\@listi{\leftmargin\leftmargini}
\def\@listii{\leftmargin\leftmarginii
   \labelwidth\leftmarginii\advance\labelwidth-\labelsep
   \topsep 2pt plus 1pt minus 0.5pt
   \parsep 1pt plus 0.5pt minus 0.5pt
   \itemsep \parsep}
\def\@listiii{\leftmargin\leftmarginiii
    \labelwidth\leftmarginiii\advance\labelwidth-\labelsep
    \topsep 1pt plus 0.5pt minus 0.5pt
    \parsep \z@ \partopsep 0.5pt plus 0pt minus 0.5pt
    \itemsep \topsep}
\def\@listiv{\leftmargin\leftmarginiv
     \labelwidth\leftmarginiv\advance\labelwidth-\labelsep}
\def\@listv{\leftmargin\leftmarginv
     \labelwidth\leftmarginv\advance\labelwidth-\labelsep}
\def\@listvi{\leftmargin\leftmarginvi
     \labelwidth\leftmarginvi\advance\labelwidth-\labelsep}
\title{\textbf{\Large{
Wardrop Equilibrium Can Be Boundedly Rational:\\
A New Behavioral Theory of Route Choice}}}
\author[1]{\small Jiayang Li}
\author[2]{\small Zhaoran Wang}
\author[1]{\small Yu (Marco) Nie\footnote{Corresponding author; \texttt{y-nie@northwestern.edu}.}}
\affil[1]{\footnotesize \textit{Department of Civil and Environmental Engineering, Northwestern University}}
\affil[2]{\footnotesize \textit{Department of Industrial Engineering and Management Science, Northwestern University}}
\date{}
\begin{document}

\maketitle
\vspace{-10pt}
\begin{abstract}
\begin{spacing}{1.15}
\footnotesize
    As one of the most fundamental concepts in transportation science, Wardrop equilibrium (WE) has always had a relatively weak behavioral underpinning.  To strengthen this foundation, one must reckon with bounded rationality in human decision-making processes, such as the lack of accurate information, limited computing power, and sub-optimal choices. This retreat from behavioral perfectionism in the literature, however, was typically accompanied by a conceptual modification of WE. 
    Here, we show that giving up perfect rationality need not force a departure from WE. On the contrary, WE can be reached with global stability in a routing game played by boundedly rational travelers.  We achieve this result by developing a day-to-day (DTD) dynamical model that mimics how travelers gradually adjust their route valuations, hence choice probabilities, based on past experiences.  Our model, called cumulative logit (CumLog), resembles the classical DTD models but makes a crucial change: whereas the classical models assume routes are valued based on the cost averaged over historical data, ours values the routes based on the cost accumulated. To describe route choice behaviors, the CumLog model only uses two parameters, one accounting for the rate at which the future route cost is discounted in the valuation relative to the past ones and the other describing the sensitivity of route choice probabilities to valuation differences. We prove that CumLog always converges to WE, regardless of the initial point, as long as the behavioral parameters satisfy certain mild conditions.  Our theory thus upholds WE's role as a benchmark in transportation systems analysis. It also explains why equally good routes at equilibrium may be selected with different probabilities, which solves the instability problem posed by  \citet{harsanyi1973games}. 
    \\
    \\
    \emph{Keywords:} Wardrop equilibrium, bounded rationality, global stability, day-to-day dynamics
\end{spacing}
\end{abstract}

\section{Introduction}
\label{sec:intro}

Equilibrium is a fundamental instrument for understanding and analyzing social systems involving interactions between self-interested agents. A premise for transportation planning, for example, is that the individual decisions of travelers tend to bring a transportation system to equilibrium. In the simplest form, the interactions between the travelers can be framed as a non-cooperative routing game in which each pursues, in the words of \citet{simon1955behavioral}, ``the highest attainable point on his preference scale."   \citet{wardrop1952road} characterized the equilibrium of such a game as ``the journey times on all the routes actually used are equal, and less than those which would be experienced by a single vehicle on any unused route." He added such equilibrium is appealing in practice because it has a natural behavioral interpretation: ``no driver can reduce his journey time by choosing a new route." While \citet{wardrop1952road}  did not mention other behavioral assumptions, it is clear that his namesake equilibrium, like general equilibrium concepts in game theory \citep{v1928theorie, nash1951non},  implicitly assumes \emph{perfect rationality}, which means full and perfect information, well-defined preferences, and the capacity to compute and compare the utility of each alternative route. As Sheffi asserted in his celebrated book \citep[Section 1.3,][]{sheffi1985urban}, Wardrop equilibrium (WE) implies that ``motorists have full information" (i.e., they know the travel time on every possible route), that ``they consistently make the correct decisions regarding route choice," and that ``they are identical in their behaviors."

\subsection{Critiques of Wardrop equilibrium}
\label{sec:critiques}

Given the foundational role WE plays in transportation science, its validity has been subjected to intense scrutiny. Most questions center on perfect rationality and stability. 

\textbf{Perfect rationality.} The questions about the validity of perfect rationality have been raised in economics since the 1950s \citep{simon1955behavioral,arrow1966exposition,tversky1985framing}. By the early 1970s, transportation professionals began to express similar doubts. \citet{dial1971probabilistic} cited transportation planners’ desire to capture ``the nonoptimal behavior of trip makers" when choosing alternative routes of similar length. \citet{daganzo1977stochastic} observed that assuming travelers always choose the shortest route can produce results unreasonably sensitive to inputs, especially in lightly congested networks. Clearly, the concern here is that, since real travelers are no perfect homo economicus, sticking to an equilibrium that assumes they are does not make much sense. 

\textbf{Global stability.} \citet{beckmann1956studies} pointed out that, to be useful, WE must be stable, or it ``would be just an extreme state of rare occurrence." They mentioned both \emph{local stability}, which ensures the equilibrium can be restored after small perturbations, and \emph{global stability}, which guarantees it is reachable from any initial position (see Section 3.3 of the book),   though their focus was on the latter. \citet{beckmann1956studies} suggested WE may be achieved via an iterative adjustment process, in which travelers who actively search for better routes in one period base their decision on ``the traffic conditions that prevailed in the preceding period." They speculated that WE is globally stable if the fraction of these ``active" travelers decreases as time proceeds. Using a dynamical modeling framework that in some sense ``operationalizes" this idea,  many have investigated the global stability of WE since the 1980s \citep[see, e.g., ][]{smith1984stability,friesz1994day,zhang1996local,yang2009day,he2010link,guo2015link}.

\textbf{Harsanyi's instability problem}  was extensively explored by game theoreticians but less known in transportation. To quote \citet{harsanyi1973games}, an equilibrium point like WE is inherently unstable ``because any player can deviate without penalty from his equilibrium strategy even if all other players stick to theirs." To understand what he exactly meant, consider at WE, travelers split between two routes of equal journey time at a ratio, say, 1 to 2. From the perspective of game theory,  each traveler, in effect, adopts a mixed strategy that assigns a choice probability of 1/3 to one route and 2/3 to the other.   However,  no rational traveler should have the incentive to stick to that mixed strategy other than a desire to keep the system at WE because they can do equally well by shifting to a pure strategy that uses
either route or any probabilistic mixtures of the two pure strategies.

The objection to perfect rationality was typically addressed by injecting into the model \emph{bounded rationality}, which ``takes into account the cognitive limitations of the decision maker -- limitations of both knowledge and computational capacity" \citep{simon1990bounded}.
{In transportation, bounded rationality is often linked specifically to Simon's satisficing theory \citep{simon1955behavioral}, to this theory, 
which defines boundedly rational user equilibrium (BRUE) as a state where all travelers are content with, per their level of aspiration for perfection, the current (non-optimal) travel choices \citep{mahmassani1987boundedly,mahmassani2000transferring,lou2010robust,di2013boundedly}. With the more liberal use of the term in \cite{simon1990bounded}, bounded rationality may also be interpreted as accepting perception errors and other sources of randomness in the system,  leading to the concept of \emph{stochastic user equilibrium} (SUE) in routing games \citep{daganzo1977stochastic,fisk1980some}.} 
Both SUE and BRUE are meant to be a distinct, if not better, alternative to WE. Importantly, the fact that these boundedly rational equilibria presumably converge to WE when random errors vanish or aspiration reaches the highest level does not offer a boundedly rational explanation for WE. This is because rationality is no longer bounded at the limit where SUE or BRUE becomes WE.

Using a day-to-day (DTD) dynamical model of a two-link network, \citet{horowitz1984stability} showed that the global stability of SUE depends on how travelers form their perception of current traffic conditions from past experiences. He also found global stability is lost once random errors are set to zero. More specifically, because of discontinuity in the choice function, WE cannot be reached by his adjustment process. In a similar vein, \citet{watling2003dynamics} noted that the convergence to  SUE through a dynamical process depends on a condition that becomes increasingly more stringent as perception errors become smaller. Indeed, as SUE converges to WE, it becomes impossible to meet the condition.

\citet{harsanyi1973games} argued that bounded rationality could also solve the instability problem he identified. By assuming each player's perception of other players' payoffs is subject to random errors, Harsanyi created a ``disturbed game" that is always stable because its equilibrium only admits pure strategies. An example of Harsanyi's disturbed game is our routing game based on SUE, in which every traveler chooses the route believed to be the best (i.e., a pure strategy). Here, we note that there is only one such route for a traveler because the probability of having two or more routes that are deemed the best by the traveler is zero when errors are continuous variables \cite[see][]{daganzo1977stochastic}.
As the random errors approach zero, the pure-strategy equilibrium of the disturbed game approaches the mixed-strategy equilibrium of the original game, and around the limit, the players would use their pure strategies approximately with the probabilities prescribed by the mixed-strategy equilibrium. However, this remedy, often known as Harsanyi's purification theorem, implies the mixed strategy equilibrium may only exist as an approximation to pure strategy equilibrium tied to exogenous random errors. Moreover, when errors are near zero, the difference in the payoffs between pure strategies diminishes, but the difference in the relative preferences for them, as manifested in the choice probabilities, may not. It is reasonable to expect a rational player to become increasingly indifferent to payoffs that become increasingly similar \citep{rosenthal1989bounded}. However, this behavior is not captured at or near the limit of the perturbed game. 

To recapitulate, the critiques on the perfect rationality assumption have generally led to an intellect exodus from WE. With bounded rationality, the definition of equilibrium is relaxed so that it can shift with the parameters chosen by the modeler to ``bound" rationality. If the goal is to match observations --- in terms of both individual route choices and aggregate traffic conditions --- such flexibility is no doubt a blessing. On the flip side, it can also be a curse to allow the equilibrium to depend on latent behavioral variables that could vary in space, time, and population. In the case of BRUE, the equilibrium is a set rather than a singleton, even at the aggregate level and with fixed behavioral variables. \citet{mahmassani1987boundedly} described this lack of uniqueness as ``the most disturbing question" since it ``poses a dilemma for flow prediction in networks." Thus, it is hardly surprising that WE remains widely used in practice as a reliable fall-back option for practitioners \citep{boyce2015forecasting}. {WE also provides a benchmark against which alternative equilibrium concepts based on bounded rationality can be evaluated, as it can often be viewed as their limit. This observation, however, lends no legitimacy to WE if one insists rationality must always be bounded  --- recalling that the above limit is precisely where the bound on rationality is gone. Nor does it guarantee the global stability of WE through a behaviorally sound dynamical process compatible with bounded rationality.}

\subsection{Our contribution}

Motivated by the above theoretical gaps, here we set out to show that a suitable behavioral theory of route choice can resolve the seemingly innate conflict between WE and bounded rationality. Under mild conditions, the proposed theory guarantees global stability. That is, boundely rational travelers can reach WE through a DTD dynamical process, \emph{regardless of} initial conditions. Moreover, travelers' route choices at WE, paradoxically, are compatible with both bounded and perfect rationality. In the parlance of game theory, this means the mixed-strategy equilibrium resulting from perfect rationality coincides with the probabilities of choosing pure strategies under bounded rationality. Therefore, the theory also solves  Harsanyi's instability problem. 

Our theory is built on a simple intuition: if two routes used at WE are assigned different choice probabilities, travelers must value these routes differently, even though their costs at WE are identical. To reconcile the ostensible contradiction in this statement, we conjecture that the route costs realized at WE are not the basis for deciding choice probabilities. Instead, travelers gradually build their valuation of each route through a DTD dynamical process. Consequently, the choice probabilities at WE reflect the \emph{preferences accumulated through the entire history} of that process, rather than just the experience at WE, which is achieved on the ``last day" (i.e., the limit) of the process.

In our theory, the cost experienced on a route each day \emph{accrues} to its valuation, whereas the DTD models in the literature typically view the valuation on a particular day as some average of the costs experienced up to that day \citep[e.g.,][]{horowitz1984stability, watling1999stability}. As we shall see, this is a subtle but vital difference in a setting with an infinite horizon. On each day, travelers act with bounded rationality, i.e., they assign a choice probability to each route based on the valuations accumulated hitherto; the better the valuation, the larger the probability. 
We shall prove this dynamical process converges to WE  for a rather broad class of cost accruement rules, including a naive addition rule (i.e., the valuation on day $k$ equals the sum of the costs experienced on day $t = 1, \cdots, k-1$). Our proof requires the choice probabilities to be determined by the logit model \citep{mcfadden1973conditional}, although other valuation-to-probability mappings may be considered as well. 

When our dynamical process reaches WE, travelers would still choose the used routes with probabilities mapped by their ``hidden" valuation of those routes. The benchmark route, which has the best valuation, receives the highest choice probability. Other routes are appraised against the benchmark. Travelers may be less inclined to use a route if it has a worse valuation than other routes, in accordance with the notion of bounded rationality.  They may also leave many routes unused. In our theory, these routes are interpreted as ``unacceptable," which, mathematically, means their valuation is unboundedly worse than the benchmark's.

While our contribution is largely theoretical, the proposed dynamical process does provide a prototype algorithm for solving a broad class of routing games. This practical value of global stability analysis has been recognized early in \citet{beckmann1956studies}.   An algorithm based on a DTD dynamical process is simple because it requires no more information than route travel costs to operate. Therefore, it can handle routing games with more general features, such as user heterogeneity and spatial interactions in travel costs.

\textbf{Organization.} The rest of the paper is organized as follows. We discuss related works in Section \ref{sec:related}. Section \ref{sec:setting} sets up the routing game and describes the WE and the basic DTD dynamical model. In Section \ref{sec:culo}, we present and interpret the cumulative logit ({CumLog}) model. In Section \ref{sec:stability}, we prove the global stability of the {CumLog} model. Section  \ref{sec:experiment} reports numerical experiments, and Section \ref{sec:conclusion} concludes the study.

\textbf{Notation.} We use $\sR$ and $\sR_+$ to denote the set real numbers and non-negative real numbers and use $\bar \sR = \sR \cup \{ \infty, -\infty\}$ to denote the set of extended real numbers.
For a vector $\va \in \sR^n$, we denote $\|a\|_p$ as its $\ell_p$ norm and denote $\supp{(\va)} = \{i: \eva_i > 0\}$ as its support.
For a matrix $\mA \in \sR^{n \times m}$, we denote $\|\mA\|_p$ as its matrix norm induced by the vector $\ell_p$ norm.
For two vectors $\va, \vb \in \sR^n$, their inner product is denoted as $\langle \va, \vb \rangle$. For a finite set $\sA$, we write $|\sA|$ as the number of elements in $\sA$ and $2^{\sA}$ as the set of all subsets of $\sA$.

\section{Related Studies}
\label{sec:related}

Our work focuses on the stability analysis of Wardrop equilibrium (WE) under the assumption of bounded rationality. 
In this section, we review the works that consider bounded rationality (Section \ref{sec:bounded-rationality}) and equilibrium stability (Section \ref{sec:dtd}) in game theory and transportation. Given the immensity of the literature that touches upon these topics, we limit our attention to those that are directly related to our work. 

\subsection{Bounded rationality}
\label{sec:bounded-rationality}

Bounded rationality is the idea that human decisions are affected by ``the knowledge that decision-makers do and don't have of the world, their ability or inability to evoke that knowledge when it is relevant, to work out the consequences of their actions, to conjure up possible courses of action, to cope with uncertainty, and to adjudicate among their many competing wants"  \citep{simon2000bounded}. This concept has been researched extensively by game theoreticians (see Section \ref{sec:brec}) and transportation researchers (see Section \ref{sec:brts}). 

\subsubsection{Application in games}
\label{sec:brec}
Vickrey's auction game \citep{vickrey1961counterspeculation}, Harsanyi's Bayesian game \citep{harsanyi1968games} and disturbed game \citep{harsanyi1973games} are earlier examples of games in which players are only boundedly rational, in the sense that they must deal with incomplete or imperfect information. \citet{selten1975reexamination}'s $\varepsilon$-perfect equilibrium and \citet{myerson1978refinements}'s $\varepsilon$-proper equilibrium also assume bounded rationality because they allow for the possibility that players choose sub-optimal strategies. \citet{van1987stability} examined why a player may mistakenly choose sub-optimal strategies.   He hypothesized that making a mental effort could help avoid such mistakes at the expense of a so-called ``control cost." The trade-off between finding the optimal strategy and minimizing this effort leads to a new game with bounded rationality. %

In an attempt to resolve the instability problem identified by \citet{harsanyi1973games} (see Section \ref{sec:instability}), \citet{rosenthal1989bounded} suggested another boundedly rational alternative to the standard game script. Rather than assuming players choose the best strategy with a probability of one (which implies perfect rationality), he argued that it is sufficient if equally good strategies are played with equal probabilities and better strategies are ``played with probabilities not lower than worse strategies."  This idea was further developed by \citet{mckelvey1995quantal} into the quantal response equilibrium (QRE, also known as boundedly rational Nash equilibrium) model, which essentially assigns choice probabilities to strategies based on the random utility theory \citep{mcfadden1973conditional}. Since the 1990s, the QRE game has been extended to deal with --- among other things --- extensive-form games \citep{mckelvey1998quantal}, auction games \citep{goeree2002quantal}, capacity allocation games \citep{chen2012modeling},  as well as Markov games \citep{chen2022adaptive}. %

The idea that players adopt inferior strategies with positive probability may also be viewed as a trade-off between exploration (gathering new information from uncharted territory) and exploitation (making the best use of information available). \citet{bjornerstedt1994nash} argued that players may need to try inferior strategies  in order to ensure they are indeed sub-optimal. They proposed an imitative dynamical process (more on this in Section \ref{sec:instability}) that allows players to use demonstrably sup-optimal strategies throughout the process, not because of mistakes or imperfection but because of the need for exploration. We note in passing that the exploration-exploitation trade-off is central to many machine learning (ML) algorithms, particularly bandit algorithms \citep{lattimore2020bandit} and reinforcement learning algorithms \citep{sutton2018reinforcement}. {Classical exploration strategies in ML include the random selection strategy --- selecting every strategy with at least a small probability, commonly known as ``$\varepsilon$-greedy" --- and the Boltzmann exploration strategy \citep{kocsis2006bandit}, which assumes the probability of each pure strategy to be selected is proportional to its exponential cost (mathematically, it is equivalent to the logit choice model). In non-cooperative games, if the players gradually weigh less toward exploration than exploitation, the learning process may be guided toward Nash equilibrium \citep{heliou2017learning}.}
 
\subsubsection{Application in transportation}
\label{sec:brts}

Dial's STOCH algorithm \citep{dial1971probabilistic} was probably the first attempt to replace traffic assignment based on WE with something that recognizes ``the non-optimal behavior of trip-makers" (i.e., bounded rationality). For what it was designed to do, i.e., performing a logit-based loading in an acyclic network, the algorithm was a remarkable success. However, \citet{dial1971probabilistic} did not conceive an alternative equilibrium concept. More importantly, when applied in traffic assignment, the logit model's reliance on the independence of irrelevance alternative (IIA) assumption can produce nonsensical results \citep{florian1976probabilistic}.  \citet{daganzo1977stochastic} proposed eliminating the IIA dependence by subjecting travelers to a normally distributed perception error on each link. This leads to the so-called probit model that is free of most problematic predictions of its logit counterpart but is much more computationally demanding
\citep[it usually requires Monte Carlo simulation, see, e.g.,][]{sheffi1981comparison}. The high computational cost of the probit model has motivated many to seek remedies within the logit framework. Most efforts aim to account for cross-route correlations, such as C-logit \citep{cascetta1996modified},  path-size logit \citep{ben1999discrete}, and generalized nested logit \citep{wen2001generalized}.  

\citet{daganzo1977stochastic} also introduced a boundedly rational version of WE, called stochastic user equilibrium (SUE), at which ``no traveler \emph{believes} he can improve his travel time by unilaterally changing routes." Clearly, bounded rationality here refers to travelers' inability to receive (or perceive) accurate information. \citet{fisk1980some} and \citet{sheffi1981comparison}  established equivalent mathematical formulations for SUE problems, respectively, based on the logit model and the probit model.
Conceptually, the logit-based SUE model is quite similar to the QRE model discussed in the previous section (though the QRE model was originally developed for $n$-person games), see \citet[][Section 4.2.1]{di2016boundedly} for a comparison. 

Bounded rationality may also be incorporated through Simon's satisficing theory \citep{simon1955behavioral}.  In the context of morning commute, \citet{mahmassani1987boundedly} introduced boundedly rational user equilibrium, or BRUE, which is attained when all travelers are satisfied with their choices, i.e., the gap between their current and optimal cost is within an \emph{indifference band} that reflects their aspiration level. They established the conditions for the existence of a BRUE and highlighted the non-uniqueness of such equilibrium.
\citet{hu1997day} incorporated indifference bands of tolerable ``schedule delay" into a simulation-assignment model to study the day-to-day evolution of network flows under real-time information and reactive signal control. Using data collected from a virtual laboratory experiment, \citet{mahmassani1999dynamics} confirmed the existence of the indifference band --- that is, travelers would not switch routes unless the improvement in trip time exceeds a certain threshold. \citet{mahmassani2000transferring} took the above virtual experiment approach one step further. They compared the findings from such experiments to those obtained from field surveys to determine the transferability of the insights. The fact that BRUE is not unique has inspired studies that attempt to characterize the BRUE set \citep{di2013boundedly} or to build an equilibrium selection model \citep{lou2010robust}. \citet{han2015formulation} formulated a BRUE problem that considers within-day dynamics (i.e., including both departure time and route choices) as a variational inequality problem and proposed several solution algorithms.

\subsection{Stability}
\label{sec:dtd}

The concept of stability is front and center in equilibrium analysis since equilibrium may be short-lived and difficult to reach without stability, thereby rendering it a useless construct. The stability of an equilibrium can be tested by the following questions: (i) can the equilibrium be restored after small perturbations (local stability), (ii) can the equilibrium be reached from any initial position (global stability), and (iii) can agents deviate from the equilibrium without penalty (Harsanyi's instability)? We shall focus on questions (ii) and (iii) above in this section (note that global stability implies local stability). In Sections \ref{sec:sta-evo} and \ref{sec:sta-dtd}, we review, respectively, classical dynamical models for games and in transportation, which were developed largely to answer the question of stability. Section \ref{sec:instability} deals with \citet{harsanyi1973games}'s instability.

\subsubsection{Learning and evolution in games}
\label{sec:sta-evo}

\citet{brown1951iterative} proposed an iterative process, called fictitious play, for solving certain finite games. His method assumes a player in each round simply responds to what they have ``learned" about the other player's strategy, represented as the empirical frequency of plays in the previous rounds. This is often viewed as the origin of the learning-based dynamical methods in games \citep{fudenberg1991game}.  The convergence of a fictitious play to mixed equilibrium was established by \citet{robinson1951iterative} for 2-person zero-sum finite games and by \citet{miyasawa1961convergence} for 2-person general-sum finite games with two pure strategies. However, \citet{shapley1964some} showed fictitious play could not ensure convergence in general 2-person games, thereby casting doubts on the global stability of mixed equilibrium of finite games. For finite games with bounded rationality, the stability of equilibrium is easier to establish. For example, \citet{fudenberg1993learning} proved that the equilibrium of the disturbed game studied in \citet{harsanyi1973games} can be reached through a learning-based dynamical process. Similarly,  \citet{chen1997boundedly} established conditions under which players can reach the quantal response equilibrium (QRE) of \citet{mckelvey1995quantal} through fictitious play. 

\citet{bush1955stochastic} suggested the strategies that have worked well in the past tend to be used more, as the positive experience is ``reinforced" through learning  \citep{cross1973stochastic}. In reinforcement learning (RL),  players need not form beliefs about others' strategies; instead, they simply update their strategies in response to realized rewards. RL algorithms may be linked to human behaviors in two ways \citep[see][for more details]{camerer2011behavioral}.  The first assumes the reward received by each player directly affects their future probability of choosing the same action: the higher the current reward, the greater the increases in the choice probability in the future \citep[e.g.,][]{cross1973stochastic, arthur1991designing}. The other interpretation posits a player's probability of selecting each action is determined by a ``score" associated with the action: the higher the current reward, the larger the increment in the score \citep{camerer1999experience, erev1998predicting}. %

Another line of thinking in the stability analysis for games originated from emulating biological evolution. Using game theory, \citet{smith1982evolution} argued that seemingly counter-intuitive behaviors (e.g., cooperation and altruism) can evolve and persist in a population because they are evolutionarily stable strategies. Since the theory applies to non-human species \citep{smith1973logic}, its validity does not rely on any form of human rationality (it is difficult to imagine ants as utility-maximizing creatures). The imitative dynamical process proposed by \citet{bjornerstedt1994nash} was an early application of evolutionary mechanisms --- selection, mutation, and replication --- in human competitions. It builds on a simple assumption: players tend to imitate the successful behavior of others. Specifically, players in each round switch from their current strategy $a$ to a pure strategy $b$  with a certain transmission probability, which increases with the utility of $b$ as well as the number of players selecting $b$ in the last round. Hence, a pure strategy is more attractive if it is not only more profitable but also more popular. \citet{bjornerstedt1994nash} proved their imitative dynamical process is locally stable. More recently, \citet{li2022differentiable} proved the global convergence. On the one hand, the imitative process differs from fictitious play in that it implies bounded rationality. On the other hand, unlike RL, it allows players to actively learn about and act on others' strategies. The reader is referred to \citet{weibull1997evolutionary} and \citet{sandholm2010population} for details on the evolutionary game theory. 

\subsubsection{Dynamical models in routing games}
\label{sec:sta-dtd}

The study of the route choice adjustment process, referred to as dynamical models in transportation, can be traced back to the stability analysis of WE by \citet{beckmann1956studies}. Most dynamical models operate on one of the following two mechanisms: (i) a discrete-time mechanism that maps travelers' valuation of available routes to route choice in discrete decision epochs, each representing one round of the routing game \citep{horowitz1984stability}. As the epoch is naturally a day in transportation, these models are often referred to as day-to-day, or DTD, models; (ii) a continuous-time mechanism in which the decision epoch is reduced to zero so that the relation between the change rate of route flows and the current route costs may be represented as an ordinary differential equation \citep{smith1984stability}. Our model falls into the first category. The reader may consult \citet{watling2003dynamics, cantarella2019dynamics} for a comprehensive review of dynamical models.

At the core of the discrete-time mechanism is modeling how travelers reevaluate and switch routes according to past experiences. As such learning processes do not involve anticipating other players' strategies, it is largely driven by reinforcement. \citet{horowitz1984stability} proposed that, on a given day, travelers may value a route based on a weighted average of either all experienced costs on that route before that day or of the cost and the valuation on the previous day. He showed that the global stability of the DTD process, even when equilibrium exists and is unique,  depends on how travelers incorporate past experience into the present route valuation. Instability ensues when the rate of adjustment is not properly selected (e.g., too much weight is given to either the recent past or the distant past). Horowtiz's schemes have since been extensively studied in transportation, with most efforts centering on tuning the weights in route evaluation, enriching behavioral contents, or establishing the existence and stability of equilibrium \citep[e.g.,][]{cascetta1989stochastic, cascetta1993modelling, cantarella1995dynamic,watling1999stability, watling2003dynamics,cantarella2016modelling}.

Because traffic conditions are subject to perception errors in Horowitz's model, his DTD process presumably converges to SUE. He did suggest the model may be employed to study WE when perception errors vanish but concluded his stability results could not be readily extended to the deterministic case. Indeed, while it is well known the fixed point of Horowitz's DTD dynamical process is SUE \citep[see, e.g.,][Section 3]{watling2003dynamics}, establishing its convergence to WE remains elusive even at the limit (i.e., when errors become zero). The primary difficulty, as noted in \citet{watling2003dynamics}, is that a discrete choice model without additional modeling devices cannot properly distribute travelers to a set of minimal and equal-cost routes according to the proportions prescribed by WE. 

By shrinking the decision epoch to zero, the continuous-time models center on moving flows between routes based on current costs. Behaviorally, this aggregate change is explained as travelers switching to routes that promise to lower their current costs. For example, the scheme proposed by \citet{smith1984stability} moves flow between every pair of routes at a rate proportional to the product of the flow on the higher-cost route and the cost difference. Using the Lyapunov theory, he proved this simple scheme leads to a globally stable dynamical system (i.e., it produces a solution trajectory converging to WE regardless of the initial solution) if the route cost function is monotone. Similarly, the Brown--von Neumann--Nash (BNN) scheme employed by \citet{yang2005evolutionary} shifts flow away from routes with travel costs above the weighted average of all routes. He also offered a behavioral explanation by interpreting the parameters in the BNN scheme as what \citet{cantarella1995dynamic} would call switching choice probability and route choice probability. Another widely used scheme is the so-called projected dynamical system, which may be viewed as a continuous-time version of the projection algorithm for variational inequality problems \citep{dupuis1993dynamical}. The idea is to change the route flows at a rate equal to the projection of the negative route cost vector onto the feasible set.   \citet{nagurney1997projected} noted this movement can be driven by ``travelers' incentive to avoid more costly routes $\cdots$ so that the sum of the flows equal the travel demand," though the direct linkage between the projection operation and actual route switching behaviors is somewhat abstract. Projected dynamical processes were also employed to establish the stability of routing games with elastic demands  \citep{friesz1994day,zhang1996local}, based on similar conditions used in \citet{smith1984stability}. The common requirement for global stability is the monotonicity of the route cost function. It is worth noting the terminology used in these papers is asymptotic global stability, which equals stability defined in \citet{smith1984stability} and global stability concerned herein. \citet{yang2009day} showed that each of the above continuous-time processes is a \emph{rational behavior adjustment process} (RBAP), which means their flow-shifting scheme always leads to a strict reduction in the total cost with a sufficiently small step size. This observation gave rise to a class of continuous-time DTD models operating at the link level  \citep{he2010link,guo2011bounded,guo2015link,di2015submission}.  These models have fewer behavioral contents than their route-based counterparts, as their flow-shifting schemes usually rely on a target link flow pattern obtained from solving an optimization problem to meet the RBAP requirement. More recently, \citet{smith2016route} and \citet{xiao2019day} incorporated logit dynamics into continuous-time models and established their convergence to SUE. 

\subsubsection{Harsanyi's instability problem}
\label{sec:instability}

\citet{harsanyi1973games} noted that a mixed strategy Nash equilibrium of a finite game is inherently unstable because, at the equilibrium, players can switch among equally good strategies (any of the pure strategies contained in the mixed strategy, or their combinations) without penalty. If players cannot be compelled by their self-interest to \emph{always} follow the prescription of the mixed strategy, it is difficult to sustain the equilibrium. In their celebrated book on equilibrium selection (Section 1.6), \citet{harsanyi1988general} named this  \emph{instability problem} one of the main difficulties with the concept of equilibrium in game theory.  We note that WE is affected by Harsanyi's instability problem as it is also a mixed strategy equilibrium of a finite game.

A common remedy to the instability problem is bounded rationality, which typically means introducing random errors into payoffs (travel costs). Examples include \citet{harsanyi1973games}'s disturbed game, \citet{daganzo1977stochastic}'s SUE model, and \citet{mckelvey1995quantal}'s QRE model.  Random errors suppress Harsanyi's instability problem because they reduce the probability of having two pure strategies with identical costs to zero. Strictly speaking, this approach does not fix the instability problem in mixed strategy equilibrium. It only posits that the existence of such equilibrium may be justified as an approximation to pure strategy equilibrium of the perturbed models. 

Another remedy is to assume players would never switch to an equally good strategy. \citet{bjornerstedt1994nash} argued that this assumption is implicit in Nash's prescription. In the transportation literature, the assumption has been widely used to develop dynamical models; see, for instance, \citet{smith1984stability}'s rule that allows flow shifting to occur between two routes only when their costs are strictly different.   Behaviorally, this may be explained as inertia, or the tendency to settle with one's current choices, especially when further search promises no additional benefits. However, inertia implies travelers would never explore inferior routes. If travelers do, as assumed in most models, base their choices on what they have learned from past experience, ruling out exploration altogether seems a strong assumption that is necessary only because otherwise, Nash equilibrium (or WE) would be cursed with Harsanyi's instability. Is the inertia assumption necessary? According to \citet{bjornerstedt1994nash}, the answer is no. In their imitative dynamical process, an equally good strategy adopted by more players provides more ``successful samples" for other players to imitate, thus attracting more players in the next round. Under this mechanism, players would continue to switch between equally good strategies even at equilibrium, but the imitative dynamical process ensures these movements do not push the system away from equilibrium. Thus, it resolves Harsanyi's instability problem without resorting to the inertia assumption or perturbation-based approximation.
 
\subsection{Summary}

To summarize what was reviewed earlier, bounded rationality may come from a decision maker's (i) inability to access accurate information (e.g., perception error), (ii) content with a sub-optimal choice compatible with their level of aspiration, (iii) erroneous choices and effort to avoid them, or (iv) desire to explore seemingly sub-optimal choices. In this paper, we interpret bounded rationality as imperfect choices --- in the sense that decision-makers allow themselves to use sub-optimal strategies based on their valuation --- which may be explained by any of the above four behavioral sources.

Our reading of the literature did not uncover a boundedly rational, behavior-driven, and globally stable dynamical process that can converge to WE, though such processes do exist for SUE. The convergence of continuous-time dynamical models to WE is well known. However, these models depend on perfect rationality and highly simplified learning and choice behaviors. They also need the assumption of inertia to overcome Harsanyi's instability problem. 

The dynamical process proposed herein precisely fills this gap. On the one hand, it explicitly incorporates bounded rationality, learning behaviors, and individual choices. On the other hand, it always converges, for any given initial point, to a  WE. Like the imitative dynamical process \citep{bjornerstedt1994nash}, our process also achieves immunity to Harsanyi's instability problem without assuming inertia. Unlike imitative dynamics, however, we do not assume travelers know the flows on each route (i.e., the basis for imitation), which is not public information in the context of routing games.

\section{Problem Setting}
\label{sec:setting}

A routing game takes place on a transportation network modeled as a directed graph $\gG(\sN, \sE)$, where $\sN$ and $\sE$ are the set of nodes and links, respectively. Let $\sW \subseteq \sN \times \sN$ be the set of OD pairs, and $\sK \subseteq 2^{\sE}$ be the set of available routes connecting all OD pairs. We use $\sK_w \subseteq \sK$ to denote the set of routes connecting $w\in\sW$ and $\sE_k \subseteq \sE$ the set of all links on route $k \in \sK$. Also, let $\emSigma_{w,k}$ be the OD-route incidence with $\emSigma_{w,k} = 1$ if the route $k \in \sK_w$ and 0 otherwise; and $\emLambda_{e,k}$ be the link-route incidence, with $\emLambda_{e,k} = 1$ if $e \in \sE_k$ and 0 otherwise. We write $\mLambda = (\emLambda_{e,k})_{e \in \sE, k \in \sK}$ and $\mSigma = (\emSigma_{w,k})_{w \in \sW, k \in \sK}$. Let $\vd = (\evd_w)_{w \in \sW}$ be a vector with $\evd_w$ denoting the number of travelers between $w \in \sW$. All travelers are identical, and their route choice strategies are represented by a vector $\vp = (\evp_k)_{k \in \sK}$, where $\evp_k$ equals the \textit{probability} that they select $k\in \sK_w$. The feasible region for $\vp$ can then be written as $\sP = \{\vp \in \sR_+^{|\sK|}: \mSigma \vp = \vone\}$.  The equilibrium of the routing game is characterized as travelers adopting a mixed strategy $\vp$ that minimizes their own travel costs. To simplify the discussion, we assume travelers between the OD pair adopt the same mixed strategy. According to the law of large numbers, $\vp$ can hence be equivalently viewed as the proportion of travelers selecting each route.  Let $\vf = (\evf_k)_{k \in \sK}$ and $\vx = (\evx_e)_{e \in \sE}$, with $\evf_k$ and $\evx_e$ being the flow (i.e., number of travelers) on route $k$ and link $e$, respectively. It follows $\vf = \diag(\mSigma^{\T} \vd) \vp$ and $\mLambda \vf = \vx$. Further define $\vu = (\evu_e)_{e \in \sE}$ as a vector of link cost, determined by a continuously differentiable function $u(\vx) = (u_a(\vx))_{a \in \sE}$ {(our analysis in the following sections does not require $\nabla u(\vx)$ be a diagonal or symmetric matrix)}. 
Then, the vector of route cost  $\vc = \mLambda^{\T} \vu$. To summarize, the route cost function $c: \sP \to \sR^{|\sK|}$ can be defined as $c(\vp) = \mLambda^{\T} \vu = \mLambda^{\T} u(\mLambda \vf) = \mLambda^{\T} u(\bar \mLambda \vp)$, where $\bar \mLambda = \mLambda \diag(\mSigma^{\T} \vd)$. 

\subsection{Wardrop equilibrium}

A Wardrop equilibrium (WE, \citet{wardrop1952road}) of the routing game can be defined as follows.
\begin{definition}[Wardrop equilibrium]
A route choice strategy $\vp^* \in \sP$ is a WE strategy if
$c_k(\vp^*) > \min_{k' \in \sK_w} c_{k'}(\vp^*)$ implies $\evp_k^* = 0$ for all $w \in \sW$ and $k \in \sK_w$.
\end{definition}
In other words, a route included in a WE strategy must have the minimum cost. It is widely accepted that travelers must be perfectly rational to reach and keep a WE strategy, which in our context means they always know the precise values of all route costs and consistently make correct choices accordingly \citep{sheffi1985urban}.

The WE routing game has an equivalent variational inequality problem (VIP) \citep{dafermos1980traffic}.
\begin{proposition}[VIP formulation of WE]
\label{prop:ue-vi}
    A route choice strategy $\vp^* \in \sP$ is a WE strategy if and only if it solves the following VIP: find $\vp^*\in \sP$ such that
    \begin{equation}
        \langle c(\vp^*), \vp - \vp^* \rangle \geq \vzero, \quad \forall \vp \in \sP.
        \label{eq:ue-vi}
    \end{equation}
\end{proposition}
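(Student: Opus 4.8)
The plan is to recognize that \eqref{eq:ue-vi} is simply the first-order optimality condition for minimizing the \emph{linear} functional $\vp \mapsto \langle c(\vp^*), \vp\rangle$ over the polytope $\sP$. First I would note that, because this objective is linear (its ``gradient'' is the constant vector $c(\vp^*)$), the variational inequality in \eqref{eq:ue-vi} holds if and only if $\vp^* \in \argmin_{\vp \in \sP}\,\langle c(\vp^*), \vp\rangle$: one direction is the standard first-order necessary condition on a convex set, and for a linear objective it is also sufficient. Then I would exploit the product structure $\sP = \prod_{w \in \sW} \Delta_w$, where $\Delta_w = \{(\evp_k)_{k \in \sK_w}: \evp_k \ge 0,\ \sum_{k \in \sK_w} \evp_k = 1\}$ is a unit simplex, together with the separation $\langle c(\vp^*), \vp\rangle = \sum_{w \in \sW}\sum_{k \in \sK_w} c_k(\vp^*)\,\evp_k$. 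Hence $\vp^*$ minimizes the objective over $\sP$ if and only if, for every $w$, the block $(\evp^*_k)_{k \in \sK_w}$ minimizes $\sum_{k \in \sK_w} c_k(\vp^*)\,\evp_k$ over $\Delta_w$. Since a linear functional on a simplex attains its minimum exactly on the face spanned by the vertices corresponding to the smallest coefficient, this is equivalent to $\evp^*_k = 0$ whenever $c_k(\vp^*) > \min_{k' \in \sK_w} c_{k'}(\vp^*)$ --- which is precisely the definition of a WE strategy. All steps are reversible, so this yields both directions of the equivalence.

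If a self-contained argument is preferred over the optimality-condition lemma, the same reasoning unwinds into two short direct proofs. For ``WE $\Rightarrow$ VIP'', write $\pi_w = \min_{k' \in \sK_w} c_{k'}(\vp^*)$; the WE condition forces $c_k(\vp^*) = \pi_w$ on the support of $\vp^*$, so $\langle c(\vp^*), \vp^*\rangle = \sum_{w} \pi_w$, whereas $c_k(\vp^*) \ge \pi_w$ and $\evp_k \ge 0$ give $\langle c(\vp^*), \vp\rangle \ge \sum_w \pi_w$ for every $\vp \in \sP$; subtracting yields \eqref{eq:ue-vi}. For ``VIP $\Rightarrow$ WE'', I would argue by contraposition: if some $k \in \sK_w$ with $\evp^*_k > 0$ satisfies $c_k(\vp^*) > \pi_w$, pick a minimizer $k^\star \in \argmin_{k' \in \sK_w} c_{k'}(\vp^*)$ and let $\vp$ be obtained from $\vp^*$ by transferring the mass $\evp^*_k$ from route $k$ to route $k^\star$; then $\vp \in \sP$ and $\langle c(\vp^*), \vp - \vp^*\rangle = \big(c_{k^\star}(\vp^*) - c_k(\vp^*)\big)\,\evp^*_k < 0$, contradicting \eqref{eq:ue-vi}.

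I do not anticipate any genuine obstacle here: this is the classical equivalence between Wardrop's conditions and the traffic-assignment VIP attributed in the text to \citet{dafermos1980traffic}. The only points that warrant explicit care are (i) invoking the optimality-condition equivalence in exactly the needed generality --- a linear objective over an arbitrary convex feasible set, where the first-order variational inequality is both necessary and sufficient --- and (ii) making the reduction to a per-OD-pair simplex problem rigorous by using that $\sP$ is a Cartesian product of simplices and that the bilinear form decomposes accordingly. Both are routine, so the proof should only take a few lines.
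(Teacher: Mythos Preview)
Your argument is correct and is the standard one. Note, however, that the paper does not actually supply a proof of this proposition: it is stated as a known equivalence and attributed to \citet{dafermos1980traffic}, so there is no ``paper's own proof'' to compare against. Your write-up would serve perfectly well as the missing justification; both the optimality-condition route and the direct mass-transfer contraposition are clean, and your care about the product structure $\sP = \prod_{w} \Delta_w$ and the separability of $\langle c(\vp^*), \vp\rangle$ is exactly what is needed.
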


We shall denote the solution set to the VIP \eqref{eq:ue-vi} as $\sP^*$, referred to as the WE strategy set. It is well known that $\sP^*$ is a singleton if  $c(\vp)$ is strongly monotone and a polyhedron if only $u(\vx)$ is \citep{dafermos1980traffic}. In the latter case,  although many WE strategies may exist, they must correspond to the same link flow $\vx^*$. The questions that concern us here are \emph{whether and how a WE strategy can always be achieved under reasonable assumptions of route choice behaviors}.  

Since \citet{beckmann1956studies}, many have asked these questions and have largely settled the ``whether" part. Specifically, it has been established that a WE strategy can be reached through a continuous-time dynamical process starting at any initial point, provided that the route cost function $c(\vp)$ is monotone \citep[see, e.g.,][]{smith1984stability,dupuis1993dynamical,friesz1994day,nagurney1997projected}. Yet, the answer to the question of ``how" is complicated by two issues. The {first} is behavioral. Note that travelers in the continuous-time models are supposed to be highly rational: they have accurate knowledge of and act on the most recent route costs to perpetually switch from higher-cost routes to lower-cost ones until a WE is reached. Therefore, continuous-time models have a limited capacity to accommodate such behaviors as learning from past experiences, exploring sub-optimal routes, and indifference to equal-cost routes. {Secondly} and perhaps more importantly, global stability is secured by implicitly applying an arbitrarily small rate of adjustment to the prescribed ``direction" of route flow changes \citep{watling1999stability}. However, whether the analysis is employed for the purpose of explaining real-world route-switching behaviors or of developing an equilibrium-finding algorithm, that rate cannot always be arbitrarily small. The questions are: what is a suitable magnitude of the adjustment at a given time,  how fast should this magnitude decrease as time proceeds, and how this pattern of time-varying adjustments is related to route choice behaviors? The discrete-time dynamical models are better equipped to address these questions. 

\subsection{Discrete-time dynamical model}
\label{sec:discrete-time}

At its core,  a discrete-time dynamical, or day-to-day (DTD), model keeps track of travelers' route valuation vector on day $t$, denoted as $\vs^t \in \sR^{|\sK|}$ ($t = 0, 1, \ldots$), which is mapped to their route choice strategy $\vp^t \in \sP$ by a function.  A commonly used route choice function is based on the logit model derived from random utility theory \citep{mcfadden1973conditional, ben1985discrete}.  Given 
a scalar $r > 0$, a logit-based  route choice function $q_r: \bar \sR^{|\sK|} \to \sP$ gives $\vp^t = q_r(\vs^t)$, where
\begin{equation}
    \evp_k^t =  \frac{\exp(-r \cdot \evs_k^t)}{\sum_{k' \in \sK_w} \exp(-r \cdot \evs_{k'}^t)}, \quad \forall k \in \sK.
    \label{eq:logit-w}
\end{equation}

By manipulating how $\vs^t$ is constructed and updated, the DTD models can represent a wide range of learning and choice behaviors. Below, we briefly review two most popular models. 

\textbf{The weighted average  model} assumes $\vs^t$ be a weighted average of the costs received in the past:
\begin{equation}
    \vs^t = \sum_{i = 0}^{t - 1} \eta^{ti} \cdot c(\vp^{i}), \quad \text{with}~\sum_{i = 0}^{t - 1} \eta^{ti} = 1,
    \label{eq:horowitz}
\end{equation}
where $\eta^{ti} \geq 0$ weighs how the cost received on day $i$ ($0 \leq i \leq t - 1$) affects the valuation on day $t$. Thus, the entire history of past experiences is allowed to affect the present-day decision. 

\textbf{The successive average model}, as a simplification of the weighted-average model, sets
\begin{equation}
    \vs^t = (1 - \eta^t) \cdot \vs^{t - 1} + \eta^t \cdot c(\vp^{t - 1}),
    \label{eq:watling}
\end{equation}
where $\{\eta^t\in (0,1); t = 1, 2,\cdots\}$ is a sequence of constants. In this model, the past experience is condensed into yesterday's valuation. This decision mode imposes a much lower information burden on travelers as it claims no direct memory of the experience prior to yesterday.

In the literature, both models were initially discussed by \citet{horowitz1984stability}. The latter can be viewed as a special case of the former, noting that recursively applying Equation \eqref{eq:watling} yields
\begin{equation}
    \vs^t = \eta^t \cdot c(\vp^{t - 1}) + (1 - \eta^t) \cdot \eta^{t - 1} \cdot c(\vp^{t - 2}) + \cdots + \prod_{i = 2}^{t} (1 - \eta^i) \cdot  \eta^1 \cdot c(\vp^0).
\end{equation}

\citet{horowitz1984stability} assumed the travelers' perception of $\vs^t$ is subject to a random error $\vepsilon^t \in \sR^{|\sK|}$. He considered two possibilities for the distribution of $\vepsilon^t$: the first (Model 1) assumes the distribution of $\vepsilon^t$ is independent of $t$, whereas the second (Model 2) treats $\vepsilon^t$ as the sum of the perception errors in the past (hence a function of $t$). Model 1 is much easier to analyze because it allows us to treat the parameter $r$ in the logit model \eqref{eq:logit-w}  as a time-invariant constant. In this case, if $(\vp^t, \vs^t)$ converges to a fixed point $(\hat \vp, \hat \vs)$, then we have $ \hat \vs = c(\hat \vp)$ and 
$\hat \vp = q_r(\hat \vs)$, and hence
\begin{equation}
    \hat \vp = q_r(c(\hat \vp)).
    \label{eq:sue}
\end{equation}
Following \citet{daganzo1977stochastic}, a route choice $\vp^*$ satisfies Equation \eqref{eq:sue} is a stochastic user equilibrium (SUE). The global stability of SUE under the successive-average model has been extensively studied. For a two-link network, \citet{horowitz1984stability} analyzed the global stability of the dynamical model \eqref{eq:watling} under the assumptions that the perception error is nonzero and the cost function is both monotone and Lipschitz continuous. Here, we note that \citet{horowitz1984stability}'s original Lipschitz continuous assumption is imposed on a composite function that combines $c(\vp)$ and the distribution function of $\vepsilon^t$. This assumption can always be satisfied when $c(\vp)$ is Lipschitz continuous, and the perception errors are not reduced to zero (that is, the stochastic model is not degraded to a deterministic one). He proved the model is globally stable if (i) $\sum_{t = 1}^{\infty} \eta_t = \infty$, and (ii) $\eta_t$ becomes sufficiently small for a sufficiently large $t$. \citet{cantarella1995dynamic} extended Horowitz's analysis to general networks but limited the stability analysis to the case where $\eta^t$ is time-invariant. Like \cite{cascetta1993modelling}, they introduced the switching probability $\alpha$ to describe the likelihood a traveler would even consider route choice on a given day. Thus,  the model in \citet{horowitz1984stability} can be viewed as a special case of their model when $\alpha = 1$ on every day. The global stability of their model can be guaranteed when either $\alpha$ or $\eta$ is sufficiently small --- a small $\alpha$ may be interpreted as strong habitual inertia, while a sufficiently small $\eta$ is the same requirement as in \citet{horowitz1984stability}. \citet{watling1999stability} analyzed the local stability of the model with a constant $\eta$ and investigated the possibility of applying Lyapunov's theory to determine its domain of attractions.

It is well known that SUE --- the fixed point of Equation \eqref{eq:sue} --- can be made arbitrarily close to WE by letting $r \to \infty$ \citep{fisk1980some, erlander1998efficiency, mamun2011select}. However, it remains an open question whether the stability result of SUE is applicable to WE at that limit. As noted by \citet{watling2003dynamics} (see their Example 4), given a fixed $r > 0$,  the stability of the successive-average model requires $\eta$ to be within $(0, \bar \eta_r]$, where $\bar\eta_r$ decreases to 0 when $r \to \infty$. Thus, attempting to approximate WE with the fixed point of \eqref{eq:sue} by choosing an arbitrarily large $r$ is problematic, as the feasible range of $\eta$ needed for convergence vanishes at the limit.

Is it possible to design a scheme that coordinates the increase of $r$ and the decrease of $\eta$ so that the final stationary point is steered toward WE? To the best of our knowledge, this question has not received much attention in the literature. Even if such a coordinated scheme can be identified, a more important question is how to make sense of it \textit{behaviorally}. Specifically, why would the travelers couple the changes in $r$ and $\eta$ in such a way as prescribed by the stability analysis?

In the next section,  we shall propose an alternative to the classical DTD dynamical models that is deceptively simple at first glance but holds promise to answer the above questions.

\section{Cumulative Logit ({CumLog}) Model}
\label{sec:culo}

We are now ready to propose a new DTD dynamical system that is dubbed, for the reasons that will soon become clear, the cumulative logit ({CumLog}) model. In developing the {CumLog} model, we were inspired by the conjecture in \citet{beckmann1956studies} about the study of stability (the emphasis is ours):
\begin{quote}
    Through \emph{a simple and plausible model}, one can get a rough picture of the \textit{minimum of conditions} that must be met in order that the adjustment process should converge to equilibrium.
\end{quote} 
Indeed, the overarching goal of this study is to develop that simple and plausible model envisioned by Beckmann and his co-authors and to identify ``the minimum of conditions" that ensure the convergence of a dynamical adjustment process to WE.  

The {CumLog} model adopts the basic framework of the DTD model \eqref{eq:horowitz}. That is, on day $t$, travelers update their route valuation vector $\vs^t$  based on the route costs on day $t - 1$ and select a route choice strategy $\vp^t$ according to $\vs^t$. Unlike \eqref{eq:horowitz},  $\vs^t$ in the {CumLog} model is updated, starting from some $\vs^0 \in \sR^{|\sK|}$, as follows:
\begin{equation}
\label{eq:culo}
    \vs^t = \vs^{t - 1} + \eta^t \cdot c(\vp^{t - 1}),
\end{equation}
where $\{\eta^t, t =1,2,\cdots\}$ is a sequence of positive constants.  Moreover,  travelers determine their strategy on day $t$  according to the logit model by setting $\vp^t = q_r(\vs^t)$ according to Equation \eqref{eq:logit-w}. The simplest interpretation of {CumLog} is that travelers value routes based on the cost received and \emph{accumulated} over the entire history up to $t-1$. Indeed, in the special case of $\eta =1$,  travelers literally add up all received costs without ever discounting the experiences in the distant past. 

Upon noticing the suspicious similarity between schemes \eqref{eq:culo} and \eqref{eq:watling}, some readers may understandably question the plausibility of our central claim: that scheme \eqref{eq:culo} can somehow ensure convergence to WE under mild conditions, while, as widely asserted in the literature, scheme \eqref{eq:watling} cannot. Therefore, in what follows, we shall first explain why averaging and accumulating route costs are fundamentally different in the dynamical process (Section \ref{sec:difference}). Section \ref{sec:interpretation} provides a behavioral interpretation of the {CumLog} model. An illustrative example is given in Section \ref{sec:illustration}.

\subsection{Difference between average and accumulation}
\label{sec:difference}
Let us revisit the thought experiment used to demonstrate Harsanyi's instability problem in Section \ref{sec:intro}. Suppose a routing game in a two-route network converges to a WE strategy that assigns Routes 1 and 2 a choice probability of 1/3 and 2/3, respectively.   Figures \ref{fig:convergence}-(i) and (ii) depict, respectively, how the flows and costs on the two routes gradually reach the WE through a dynamic adjustment process. The details of the process need not concern us here. Suffice it to say that at the end of the process, the costs on both routes are identical, and the probability of choosing Route 1 becomes 1/3, which implies a third of the travelers end up using that route. 
\begin{figure}[ht]
    \centering
    \begin{subfigure}[b]{0.32\textwidth}
        \centering
        \includegraphics[height=0.6\columnwidth]{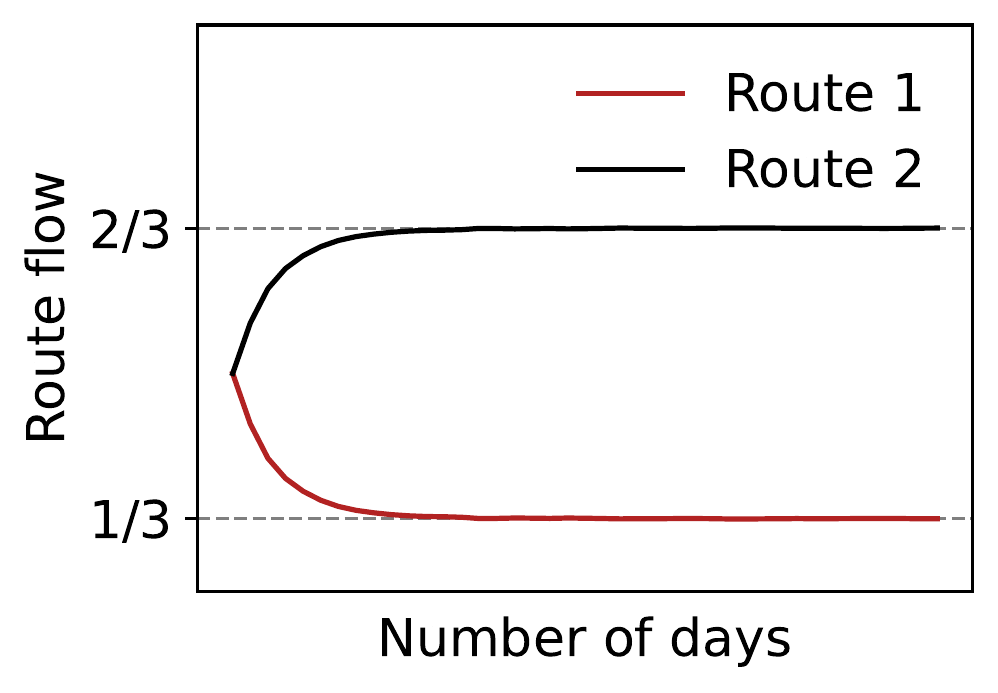}
        \caption{Evolution of flows.}
        \label{fig:flow}
    \end{subfigure}
    \begin{subfigure}[b]{0.32\textwidth}
        \centering
        \includegraphics[height=0.6\columnwidth]{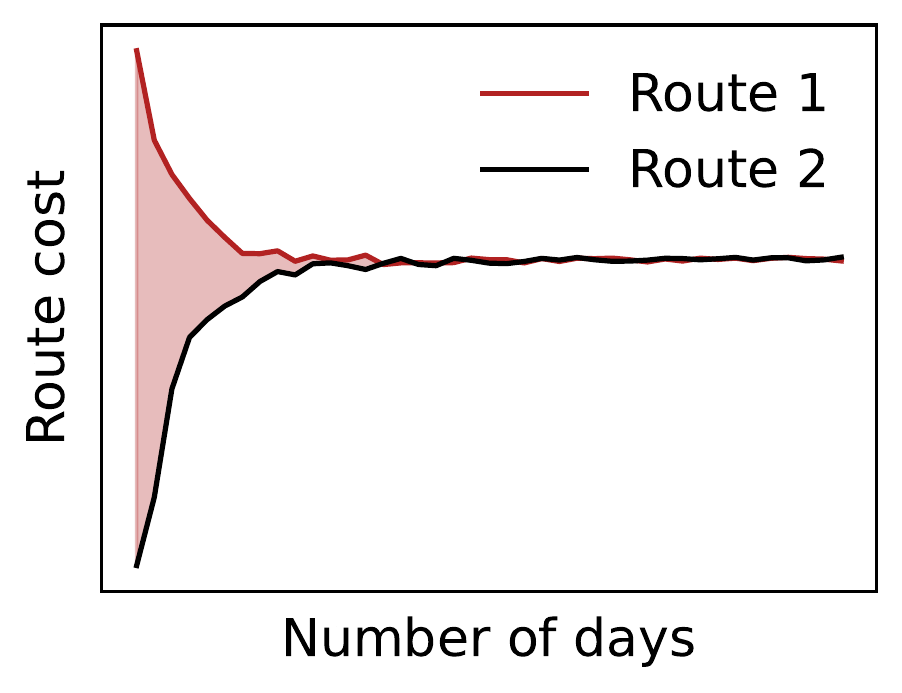}
        \caption{Evolution of costs.}
        \label{fig:cost}
    \end{subfigure}
    \begin{subfigure}[b]{0.32\textwidth}
        \centering
        \includegraphics[height=0.6\columnwidth]{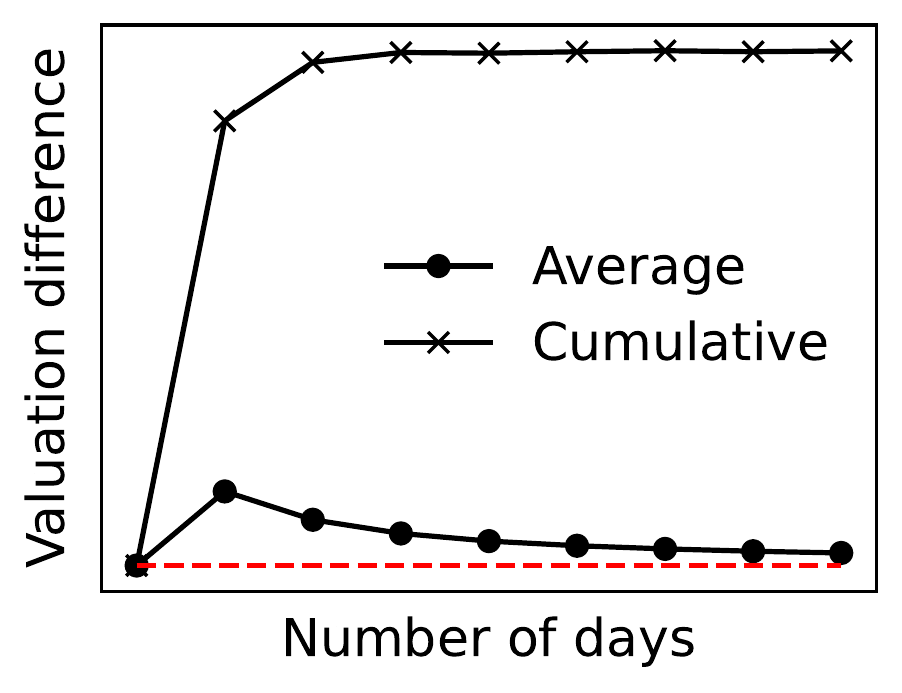}
        \caption{Valuation differences.}
        \label{fig:valuation}
    \end{subfigure}

    \caption{Illustration of a convergence process toward WE in a two-route routing game.}
    \label{fig:convergence}
\end{figure}

Figure \ref{fig:convergence}-(iii) compares the evolution of the difference in two route valuation schemes (average vs. accumulation) accompanying the dynamical process. In the average scheme, the difference in the route evaluations is bound to vanish regardless of the weights if the process ends at the WE. Given the same route valuations, it is difficult to see why at WE, travelers seem to prefer Route 2 over Route 1 with a 2:1 margin, as suggested by the mixed strategy.   Indeed, as \cite{harsanyi1973games} pointed out, there is no way to explain that preference other than insisting travelers prefer WE itself. This seemingly illogical preference is at the heart of Harsanyi's instability. Nor can this problem be explained away by bounded rationality. Note that the errors in the valuation would affect choice probability only when there is a non-zero difference in the ``deterministic" part of the valuations. At WE, the difference becomes zero. Hence, no errors could, on their own, swing travelers one way or the other. Unlike in the average scheme, the discrepancy in route valuations converges to a non-zero constant in the accumulation scheme (see the solid line with x markers in Figure \ref{fig:convergence}-(iii)). The constant equals the shaded area in Figure \ref{fig:convergence}-(ii), which visualizes the valuation difference accrued through the dynamical process. This cumulative difference then explains the mixed WE strategy: the travelers prefer Route 2 at equilibrium because they value it substantially (but not infinitely) more than Route 1, although the equilibrium route costs are the same.

\smallskip
\begin{remark}
    In behavioral economics, the reinforcement learning model proposed by \cite{erev1998predicting} assumes that players in a finite game have the highest propensity to choose the pure strategy that gives the greatest total reward in the past. This assumption is similar to our assumption that the route with a smaller cumulative route cost is selected with a greater probability.
\end{remark}

\subsection{Behavioral interpretation}
\label{sec:interpretation}

The route choice behaviors implied by the {CumLog} model can be summarized as follows.

\begin{itemize}
    \item {On each day, travelers choose \textit{afresh} a mixed route choice strategy based on current route valuations through a logit model. No additional assumptions are needed regarding the behavioral inertia --- the reluctance to make changes once a choice becomes habitual --- that is often explicitly modeled by a switching probability in the DTD literature \citep[e.g.,][]{cantarella1995dynamic}.
    \item Route choices are driven by relative valuation rather than absolute valuation, as dictated by the logit model. The benchmark is the ``best" route that receives the highest choice probability. Other routes are appraised against the benchmark: they shall also be selected with a strictly positive probability as long as the valuation difference between these routes and the benchmark is finite; they may be selected with a zero probability if their valuations are deemed unacceptable, i.e., \textit{infinitely} worse than the benchmark.}
\end{itemize}

{Here lays the rationale behind our claim that \emph{the CumLog model is boundedly rational}: rather than committing to never moving to a route with a worse valuation than they currently enjoy, the travelers consistently assign a non-zero probability to sub-optimal routes with an \emph{acceptable} valuation.

The two parameters of the CumLog model can be linked to route choice behaviors as follows.
\begin{itemize}
    \item {The parameter $r$ characterizes how travelers' route choice strategy $\vp^t$ is determined from the valuation vector $\vs^t$.} It measures the trade-off between exploration and exploitation: the larger the parameter $r$, the more exploitative the travelers (meaning they are less likely to explore sub-optimal routes). Thus, $r$ will be referred to as the \emph{{exploitation} parameter}. In this study, the parameter $r$ is fixed at a constant value.   One may interpret this setting in one of two ways:  (i) the perception errors are independent of $t$, which is the assumption used to justify Model 1 in \citet{horowitz1984stability}; or (ii)  travelers' propensity for accepting sub-optimal routes, or their desired balance between exploration and exploitation, is time-invariant \citep{fudenberg1993learning, kocsis2006bandit}. Our stability analysis is agnostic on the interpretation of the {exploitation} parameter.
    \item {The parameter $\eta^t$ regulates how the valuation vector $\vs^t$ is updated.}  Before the routing game is played, travelers have an initial route valuation vector $\vs^0$. If no prior preference exists, then they simply set $\evs_k^0 =0$ for all $k \in \sK$.
    On day $t\geq 1$, the travelers update the valuation $\vs^t$  by adding to it a cost vector $\eta^t \cdot c(\vp^{t-1})$ where $\eta^t\ge 0$ is the weight on day $t$.  The weight $\eta^t$ measures the impact of the cost received on day $t - 1$ on the travelers' valuation on day $t$. 
    Behaviorally, it captures how quickly travelers become disposed to ignore the latest information and ``settle down." Thus, $\eta^t$ will be referred to as the \emph{{proactivity} measure}. The larger the $\eta^t$, the less passive the travelers. As we shall see, the stability of the {CumLog} model depends on the asymptotic behavior of the {proactivity} measure  $\eta^t$.  Not all feasible sequences of $\eta^t$ guarantee convergence to WE. For example, if travelers stop incorporating new information into route valuation too soon (indicating a rapid descent to extreme passivity), {CumLog} may stabilize quickly but at a place far away from WE.   We shall consider two asymptotic rules for the {proactivity} measure in this study. In the first,  $\eta^t = \eta > 0$ for all $t \geq 0$, i.e.,  the level of {proactivity} remains at a constant level through the entire process. The second rule dictates that $\eta^t$ monotonically decreases to 0 as $t \to \infty$. Thus, the costs received by the travelers will have a progressively diminishing impact on their route choice. Another rule, in which $\eta^t$ converges to some constant $\eta >0$ as $t \to \infty$, may be inferred from the above two.  
\end{itemize}
}

\subsection{Illustrative example}
\label{sec:illustration}

We close by illustrating CumLog with a simple routing game played on a network with three parallel links connecting an OD pair. The cost functions on the links are $u_1(\evx_1) = \evx_1$, $u_2(\evx_2) = \evx_2 + 1$, and $u_3(\evx_3) = \evx_3 + 2.25$. The total demand is $d = 3$.   It can be easily verified the WE conditions dictate the three links be selected with probabilities $\evp_1^* = 2/3$, $\evp_2^* = 1/3$, and $\evp_3^* = 0$. The WE strategy is unique in this case because all route cost functions are strictly increasing. 

Setting $r = 0.25$ and fixing $\eta^t = 1$ and starting from $\vs^0 = 0$, we run the {CumLog} model from day 0 to day 12 and report the convergence process in Figure \ref{fig:example}. A WE is reached after day 12, with the proportion of travelers selecting each route converging rather precisely to $2/3$, $1/3$, and $0$, respectively, and the costs on the two routes included in the mixed strategy, Routes 1 and 2, become identical.  

\begin{figure}[ht]
    \centering
    \begin{subfigure}[b]{0.4\textwidth}
        \includegraphics[width=0.8\columnwidth]{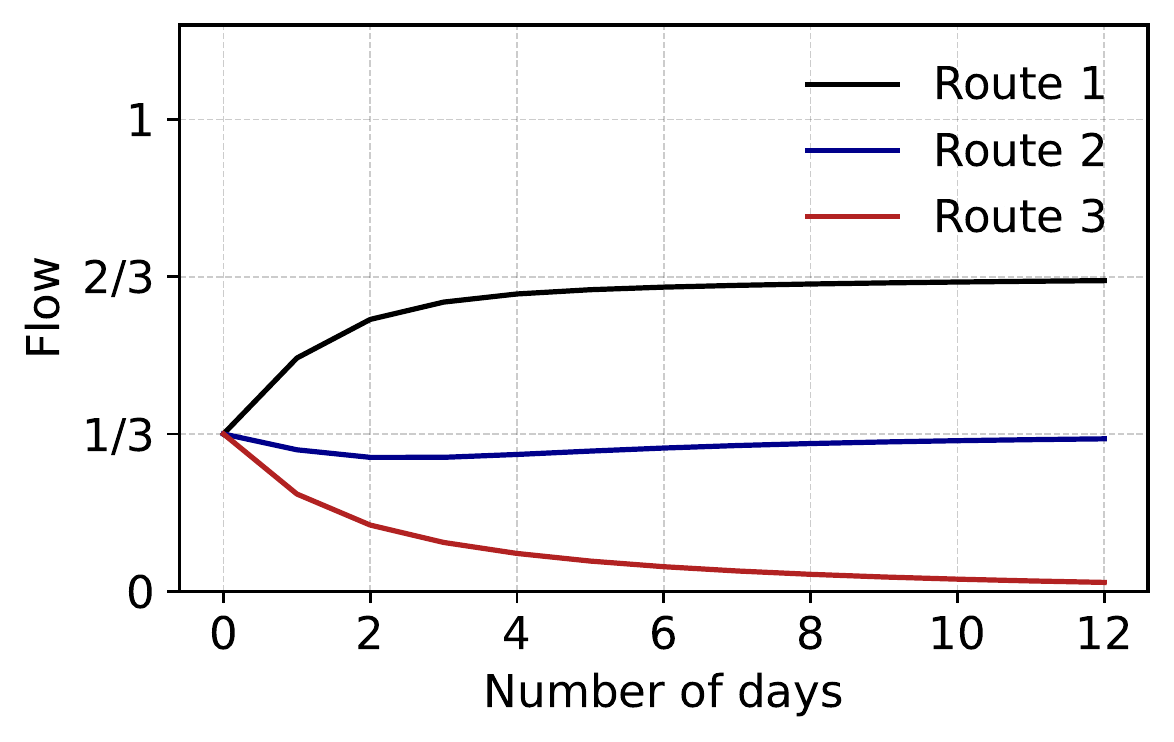}
        \caption{Route choices.}
        \label{fig:eg-flow}
    \end{subfigure}
    \begin{subfigure}[b]{0.4\textwidth}
        \includegraphics[width=0.8\columnwidth]{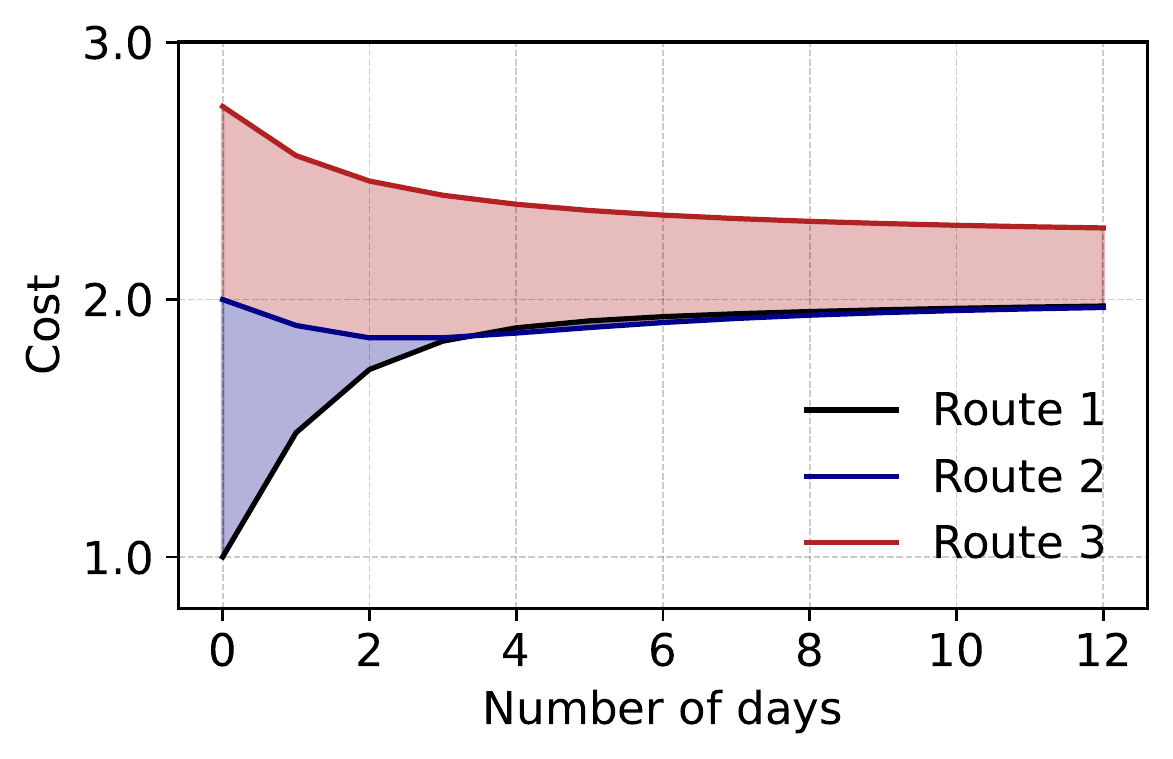}
        \caption{Travel costs.}
        \label{fig:eg-cost}
    \end{subfigure}
    \caption{Illustration of a convergence process toward WE in a three-route routing game.}
    \label{fig:example}
\end{figure}

In Figure \ref{fig:example}-(ii), the blue and red areas highlight the difference in valuation between Routes 1 and 2 and that between Routes 2 and 3, respectively. Route 1 is always the lowest-cost route throughout the process and thus is always selected by most travelers. The blue area approaches a constant value as $t$ increases. Consequently, the relative preference for Route 1 over Route 2 became stabilized, indicating Route 2 is an inferior but acceptable option. The red area, however, grew to infinity as $t\to \infty$, which means Route 3 became infinitely worse than Route 2 and eventually was abandoned. 

This example exhibits another distinction between the {CumLog} model and the classical  DTD models. Even with a finite {exploitation} parameter $r$, the {CumLog} model is capable of identifying and eliminating the routes that no WE strategy should use.   The classical models, however, are obliged by their averaging scheme to keep a positive flow on every route unless $r \to \infty$. This result can be expected from the fact that the limiting point of these models is SUE rather than WE.

\section{Global Stability}
\label{sec:stability}
In this section, we present and prove the main stability result concerning the {CumLog} model proposed in Section \ref{sec:culo}. Simply put, the objective is to show that, under mild conditions, the DTD dynamical model \eqref{eq:culo} always converges to a WE regardless of the initial point. The following assumptions describe some of the conditions {on the link cost function $u(\vx)$,  whose domain (the set of feasible link flows) is written as $\sX = \{\vx: \sR^{|\sA|}: \vx = \bar \mLambda \vp, \ \vp \in \sP\}$.}

\begin{assumption}
\label{ass:differentiable}
    The link cost function $u(\vx)$ is twice continuously differentiable {on $\sX$}.
\end{assumption}

\begin{assumption}
\label{ass:cocoercive}
    {For all $\vx \in \sX$, the symmetric part of both $\nabla u(\vx)$ and $(\nabla u(\vx))^2$ are positive semi-definite.}
\end{assumption}

Assumption \ref{ass:cocoercive} is satisfied as long as $u(\vx)$ is monotone and $\nabla u(\vx)$ is symmetric. If $\nabla u(\vx)$ is asymmetric, Assumption \ref{ass:cocoercive} still holds if $\nabla u(\vx)$ is not ``too asymmetric," i.e., the anti-symmetric part does not exceed the symmetric part \citep[see, e.g.,][for a more rigorous description]{hammond1987generalized}. As the assumptions require neither $u(\vx)$ to be strongly monotone nor $\nabla u(\vx)$ to be symmetric,  our analysis can be applied to a broad class of routing games, including those with non-separable and non-strictly increasing cost functions. 

The next proposition characterizes the route cost function $c(\vp)$ and the WE strategy set $\sP^*$. 
\begin{proposition}
\label{prop:property}
    Under Assumptions \ref{ass:differentiable}-\ref{ass:cocoercive}, (i) there exists $L \geq 0$ such that $c(\vp)$ is $1/4L$-cocoercive on $\sP$, i.e., $\langle c(\vp') - c(\vp), \vp' - \vp \rangle \geq 1/4L \cdot \|c(\vp') - c(\vp) \|_2^2$ for all $\vp, \vp' \in \sP$; (ii) $\langle c(\vp), \vp - \vp^*\rangle \geq 0$ for all $\vp \in \sP$ and $\vp^* \in \sP^*$, with equality for any given $\vp^* \in \sP^*$ if and only if $\vp \in \sP^*$; and (iii) $\sP^*$ is convex and compact. 
\end{proposition}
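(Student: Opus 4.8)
The three claims are best proved in order, with part (i) carrying essentially all the analytic content and (ii)--(iii) then following from the standard theory of continuous monotone variational inequalities.

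\textbf{Part (i).} The plan is to establish cocoercivity first for the \emph{link}-cost map $u$ on $\sX$ and then transport it to $c$ on $\sP$ through the affine relation $\vx = \bar\mLambda\vp$. The core is a pointwise reformulation of Assumption~\ref{ass:cocoercive}: \emph{if a square matrix $B$ satisfies $B + B^\T \succeq 0$ and $B^2 + (B^\T)^2 \succeq 0$, then $B^\T B \preceq 2\|B\|_2\,(B + B^\T)$.} Indeed, write $B = S + N$ with $S = \tfrac12(B+B^\T)\succeq 0$ symmetric and $N = \tfrac12(B-B^\T)$ skew; since $(B^\T)^2 = (S-N)^2$ and $B^2 = (S+N)^2$, the second hypothesis says exactly $S^2 + N^2 \succeq 0$, i.e.\ $S^2 \succeq N^\T N$, so $\|Nz\| \le \|Sz\|$ for every $z$, hence $\|Bz\| \le \|Sz\| + \|Nz\| \le 2\|Sz\|$ and therefore $\|Bz\|^2 \le 4\,z^\T S^2 z \le 4\|S\|_2\,z^\T S z$. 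Next I would write $u(\vx') - u(\vx) = \int_0^1 \nabla u(\vx + t(\vx'-\vx))(\vx'-\vx)\,dt$ along the segment $[\vx,\vx']\subseteq\sX$ — legitimate because $\sX = \bar\mLambda\,\sP$ is convex (a linear image of the polytope $\sP$) and $u\in C^1$ on $\sX$ by Assumption~\ref{ass:differentiable} — apply the lemma pointwise to the Jacobians $\nabla u(\vx+t(\vx'-\vx))$, combine with Jensen's inequality $\bigl\|\int_0^1 \nabla u(\cdot)\,z\,dt\bigr\|^2 \le \int_0^1 \|\nabla u(\cdot)\,z\|^2\,dt$, and bound $\|S\|_2 \le \|\nabla u(\cdot)\|_2 \le L_0 := \sup_{\vx\in\sX}\|\nabla u(\vx)\|_2 < \infty$ (finite since $\nabla u$ is continuous on the compact set $\sX$). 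This yields $\langle u(\vx') - u(\vx),\vx'-\vx\rangle \ge \tfrac{1}{4L_0}\|u(\vx')-u(\vx)\|_2^2$ (the case $L_0 = 0$, i.e.\ $u$ constant, being trivial). Finally, using $c(\vp) = \bar\mLambda^\T u(\bar\mLambda\vp)$ together with $\langle\bar\mLambda^\T\va,\vp'-\vp\rangle = \langle\va,\bar\mLambda(\vp'-\vp)\rangle$ and $\|\bar\mLambda^\T\va\|_2 \le \|\bar\mLambda\|_2\|\va\|_2$ transports the estimate, so $c$ is cocoercive on $\sP$ with $L = L_0\,\|\bar\mLambda\|_2^2$; in particular $c$ is monotone on $\sP$.

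\textbf{Parts (ii) and (iii).} These are now standard. By Proposition~\ref{prop:ue-vi}, each $\vp^*\in\sP^*$ solves $\langle c(\vp^*),\vp-\vp^*\rangle\ge 0$ for all $\vp\in\sP$; monotonicity then gives $\langle c(\vp),\vp-\vp^*\rangle = \langle c(\vp)-c(\vp^*),\vp-\vp^*\rangle + \langle c(\vp^*),\vp-\vp^*\rangle \ge 0$. For the equality clause, fix $\vp^*\in\sP^*$: if $\vp\in\sP^*$ as well, then testing the VIP at $\vp$ against $\vp^*$ yields $\langle c(\vp),\vp-\vp^*\rangle\le 0$, hence $=0$; conversely, if $\langle c(\vp),\vp-\vp^*\rangle = 0$ then both non-negative summands above vanish, and from $\langle c(\vp)-c(\vp^*),\vp-\vp^*\rangle = 0$ cocoercivity forces $c(\vp) = c(\vp^*)$, so for every $\vq\in\sP$, $\langle c(\vp),\vq-\vp\rangle = \langle c(\vp^*),\vq-\vp^*\rangle + \langle c(\vp^*),\vp^*-\vp\rangle \ge 0$ (the second term is $0$ since $\langle c(\vp^*),\vp-\vp^*\rangle = 0$), i.e.\ $\vp\in\sP^*$. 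For (iii): $\sP^*\neq\emptyset$ by the Hartman--Stampacchia existence theorem ($\sP$ non-empty, convex, compact; $c$ continuous); $\sP^*$ is closed, being an intersection of the half-spaces $\{\vp^*:\langle c(\vp^*),\vp-\vp^*\rangle\ge 0\}$ with $\sP$, and bounded since $\sP^*\subseteq\sP$, hence compact; and it is convex because, by Minty's lemma (valid as $c$ is continuous, monotone and $\sP$ convex), $\sP^* = \{\vp^*\in\sP : \langle c(\vp),\vp-\vp^*\rangle\ge 0 \ \forall\vp\in\sP\} = \sP\cap\bigcap_{\vp\in\sP}\{\vq : \langle c(\vp),\vq\rangle\le\langle c(\vp),\vp\rangle\}$, an intersection of a convex set with closed half-spaces.

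\textbf{Main obstacle.} The whole weight rests on part (i), and within it on the matrix lemma, which is also what produces the constant $1/4L$. The point is that $c$ need not be the gradient of a convex potential (since $\nabla u$ may be asymmetric), so the Baillon--Haddad theorem does not apply directly; the second clause of Assumption~\ref{ass:cocoercive} --- positive semidefiniteness of the symmetric part of $(\nabla u)^2$ --- is precisely the additional ingredient that makes the direct mean-value estimate go through. The remainder is bookkeeping: keeping the segment inside $\sX$ (convexity of $\sX$), finiteness of $L_0$ (compactness of $\sX$ plus Assumption~\ref{ass:differentiable}), and the degenerate constant-cost case.
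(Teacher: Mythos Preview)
Your overall strategy matches the paper's: establish cocoercivity of $u$ on $\sX$ via the ``not too asymmetric'' hypothesis, transport it to $c$ on $\sP$, then derive (ii) and (iii) from standard monotone VI theory. Where the paper simply cites the Marcotte--Zhu result (their Lemma~\ref{lm:cocoercive}) for cocoercivity of $u$, you essentially \emph{prove} that lemma from scratch via your matrix inequality $B^\T B \preceq 2\|B\|_2(B+B^\T)$; that is a nice self-contained bonus. For (iii) you invoke Minty's lemma, whereas the paper argues convexity directly from the equality characterisation in (ii); both are fine.

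There is one concrete slip in your transport step. You write $c(\vp)=\bar\mLambda^\T u(\bar\mLambda\vp)$, but the paper's definition is $c(\vp)=\mLambda^\T u(\bar\mLambda\vp)$ with $\bar\mLambda=\mLambda\,\diag(\mSigma^\T\vd)$ --- the outer map is $\mLambda^\T$, not $\bar\mLambda^\T$. With your (incorrect) symmetric form the transport is indeed one line, but with the correct definition the inner and outer linear maps no longer match: $\langle c(\vp')-c(\vp),\vp'-\vp\rangle=\langle u(\vx')-u(\vx),\mLambda(\vp'-\vp)\rangle$ while $\vx'-\vx=\bar\mLambda(\vp'-\vp)$. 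The paper handles this mismatch by pulling out a factor $d_{\max}=\max_w d_w$, ending with $L=d_{\max}\cdot H\cdot\|\mLambda\|_2^2$; your argument needs the same adjustment. Your necessity argument in (ii) --- using $c(\vp)=c(\vp^*)$ and $\langle c(\vp^*),\vp-\vp^*\rangle=0$ to verify the VI for $\vp$ directly --- is a bit cleaner than the paper's coordinate argument, and is correct.
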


\begin{proof}

   Following the standard convex analysis techniques we can show Assumptions \ref{ass:differentiable}-\ref{ass:cocoercive} $\Rightarrow$ (1) $\Rightarrow$ (2)  $\Rightarrow$ (3). 
   The reader is referred to Appendix \ref{app:property} for details.
\end{proof}

We note that  when $\nabla c(\vp)$ is symmetric, $L$ is the Lipshcitz constant of $c(\vp)$ on $\sP$, i.e.,
any $L \geq \max_{\vp \in \sP} \|\nabla c(\vp)\|_2$ can be used to fulfill the requirement.  The case with asymmetric $\nabla c(\vp)$ requires $L = \max_{w \in \sW} \{\evd_w\} \cdot H \cdot \|\mLambda\|_2^2$,  where $H$ is the Lipshitz constant of $u(\vx)$ on $\sX = \{\vx \in \sR_+^{|\sA|}: \vx = \bar \mLambda \vp, \, \vp \in \sP\}$. 

\textbf{Main result.} We are now ready to give the main stability result.  
\begin{theorem}
\label{thm:convergence-ue}
    Under Assumptions \ref{ass:differentiable}-\ref{ass:cocoercive}, suppose that $\vs^0 < \infty$, then $\vp^t$ in the {CumLog} model \eqref{eq:culo} converges to a fixed point $\vp^* \in \sP^*$, the solution set to the VI problem \eqref{eq:ue-vi} if either of the following two conditions is satisfied: (i) $\lim_{t \to \infty} \eta^t = 0$ and $\lim_{t \to \infty} \sum_{i = 0}^t \eta^i = \infty$, or (ii) $\eta^t = \eta < 1/2rL$ for all $t \geq 0$.
\end{theorem}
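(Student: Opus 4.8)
The plan is to recognize \eqref{eq:culo}--\eqref{eq:logit-w} as an entropic mirror‑descent iteration (equivalently, multiplicative weights / dual averaging) for the variational inequality \eqref{eq:ue-vi}, and then to run a Bregman‑divergence Lyapunov argument powered by the cocoercivity of $c$ in Proposition~\ref{prop:property}(i). Let $R(\vp)=\tfrac1r\sum_{k\in\sK}\evp_k\ln\evp_k+\langle\vs^0,\vp\rangle$ on $\sP$; by Pinsker's inequality $R$ is $\tfrac1r$‑strongly convex with respect to $\|\cdot\|_1$ on each OD pair, and its Bregman divergence is $D_R(\vp',\vp)=\tfrac1r\sum_{w\in\sW}\mathrm{KL}(\vp'_w\,\|\,\vp_w)$. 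A KKT computation shows $q_r(\vs)=\argmin_{\vp\in\sP}\{\langle\vs,\vp\rangle+\tfrac1r\sum_k\evp_k\ln\evp_k\}$, so $\vp^t=\argmin_{\vp\in\sP}\{\sum_{i=1}^t\eta^i\langle c(\vp^{i-1}),\vp\rangle+R(\vp)\}$, and a further short computation using the optimality of $\vp^t$ for the $t$‑th problem gives $\vp^{t+1}=\argmin_{\vp\in\sP}\{\eta^{t+1}\langle c(\vp^t),\vp\rangle+D_R(\vp,\vp^t)\}$ -- the usual mirror‑descent step. The Lyapunov function will be $V_t=D_R(\vp^*,\vp^t)$ for a fixed $\vp^*\in\sP^*$; it is finite for every $t$ because $\vs^0$, hence each $\vs^t$, is finite, so $q_r$ puts strictly positive probability on every route.

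The heart of the proof is a one‑step inequality with no leftover error term. I would test the optimality condition for $\vp^{t+1}$ at the point $\vp^*$ and -- this is the key device -- also test the optimality condition for $\vp^*$ itself, which is exactly \eqref{eq:ue-vi} restated as $\vp^*=\argmin_{\vp\in\sP}\{\eta^{t+1}\langle c(\vp^*),\vp\rangle+D_R(\vp,\vp^*)\}$, at the point $\vp^{t+1}$. Adding the two and using the three‑point identity for $D_R$ yields
\begin{equation*}
D_R(\vp^*,\vp^{t+1})\;\le\;D_R(\vp^*,\vp^t)-D_R(\vp^{t+1},\vp^t)-\eta^{t+1}\bigl\langle c(\vp^t)-c(\vp^*),\,\vp^{t+1}-\vp^*\bigr\rangle .
\end{equation*}
Splitting $\vp^{t+1}-\vp^*=(\vp^t-\vp^*)+(\vp^{t+1}-\vp^t)$, I bound the first inner product below by $\tfrac1{4L}\|c(\vp^t)-c(\vp^*)\|_2^2$ via Proposition~\ref{prop:property}(i), and the second by a per‑OD Young inequality $\langle u,v\rangle\le\tfrac a2\|u\|_\infty^2+\tfrac1{2a}\|v\|_1^2$ with the tuned choice $a=r\eta^{t+1}$, so that the resulting term $\tfrac1{2r}\sum_w\|\vp^{t+1}_w-\vp^t_w\|_1^2$ exactly cancels $D_R(\vp^{t+1},\vp^t)$ (Pinsker once more). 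What survives is
\begin{equation*}
V_{t+1}\;\le\;V_t-\eta^{t+1}\Bigl(\tfrac1{4L}-\tfrac{r\eta^{t+1}}{2}\Bigr)\,\|c(\vp^t)-\vc^*\|_2^2,\qquad \vc^*:=c(\vp^*),
\end{equation*}
whose coefficient is positive exactly when $\eta^{t+1}<\tfrac1{2rL}$ (condition (ii)), and is positive for all large $t$ under condition (i) because $\eta^t\to0$.

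Given this recursion, the endgame is routine. Summing it shows $V_t$ is eventually monotone, hence convergent, and $\sum_t\eta^{t+1}\|c(\vp^t)-\vc^*\|_2^2<\infty$. Under (ii) this forces $\sum_t\|c(\vp^t)-\vc^*\|_2^2<\infty$, so $c(\vp^t)\to\vc^*$; under (i), since $\|c(\vp^{t+1})-c(\vp^t)\|_2=O(\eta^{t+1})$ ($c$ is Lipschitz by cocoercivity and $q_r$ is Lipschitz) while $\sum\eta^t=\infty$, an elementary ``slowly varying sequence'' argument again gives $c(\vp^t)\to\vc^*$. For any route $\hat k\in\sK_w$ with $c_{\hat k}(\vp^*)>\min_{k'\in\sK_w}c_{k'}(\vp^*)$, the valuation gap $\evs_{\hat k}^t-\evs_{k^*_w}^t$ against a minimal route $k^*_w$ equals a finite constant plus $\sum_{i<t}\eta^{i+1}\bigl(c_{\hat k}(\vp^i)-c_{k^*_w}(\vp^i)\bigr)$, whose summands converge to a strictly positive limit; since $\sum\eta^i=\infty$ this diverges to $+\infty$, hence $\evp^t_{\hat k}\to0$ by \eqref{eq:logit-w}. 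Therefore every subsequential limit $\bar\vp$ of $\{\vp^t\}$ (one exists, $\sP$ being compact) is supported only on minimal‑cost routes and satisfies $c(\bar\vp)=\vc^*$, i.e.\ $\bar\vp\in\sP^*$. Finally, re‑running the recursion with $\vp^*=\bar\vp$ makes $D_R(\bar\vp,\vp^t)$ convergent, and along the chosen subsequence it tends to $D_R(\bar\vp,\bar\vp)=0$; hence $D_R(\bar\vp,\vp^t)\to0$, and Pinsker's inequality upgrades this to $\vp^t\to\bar\vp\in\sP^*$.

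The step I expect to be the main obstacle is the second one: getting a one‑step inequality with \emph{no} residual $O((\eta^t)^2)$ term. A naive estimate (Cauchy--Schwarz on $\langle c(\vp^t),\vp^t-\vp^{t+1}\rangle$) leaves a $\Theta((\eta^t)^2)$ error, which under a constant step size would only prove convergence to an $O(\eta)$‑neighborhood of $\sP^*$, not the exact convergence claimed. The fix -- which must be carried out as a single package -- is to keep \emph{both} endpoints inside the proximal operator, so that $c(\vp^t)$ is everywhere replaced by $c(\vp^t)-c(\vp^*)$ and the non‑vanishing equilibrium cost $\vc^*$ never appears on its own, and simultaneously to tune the Young parameter so that the Bregman cross‑term cancels exactly; this is precisely what both closes the argument and pins down the sharp threshold $\tfrac1{2rL}$. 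A secondary difficulty is promoting $\liminf_t\|c(\vp^t)-\vc^*\|=0$ to a genuine limit in case (i), which is where the slowly‑varying step‑size bookkeeping is needed, together with mild care over the $\mathrm{KL}$ terms at WE strategies that leave routes unused -- those terms vanish by the $0\ln 0=0$ convention and by continuity off the boundary.
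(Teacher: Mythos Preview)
Your argument is correct and rests on the same skeleton as the paper's --- recognizing \eqref{eq:culo} as entropic mirror descent, using the KL divergence to a fixed $\vp^*\in\sP^*$ as a Lyapunov function, and invoking the $1/4L$-cocoercivity of $c$ from Proposition~\ref{prop:property}(i) --- but the two proofs diverge in the one-step estimate and in the endgame.

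For the one-step inequality, both you and the paper start from the same three-point identity combined with the VI \eqref{eq:ue-vi}, arriving at a bound involving $-D_R(\vp^{t+1},\vp^t)-\eta^{t+1}\langle c(\vp^t)-c(\vp^*),\vp^{t+1}-\vp^*\rangle$, and both split $\vp^{t+1}-\vp^*=(\vp^t-\vp^*)+(\vp^{t+1}-\vp^t)$ and apply cocoercivity to the first piece. The paper (Proposition~\ref{prop:decreasing}) then uses the cocoercivity term to absorb the cross term, obtaining $V_{t+1}\le V_t-\tfrac{1-2r\eta^t L}{2}\|\vp^t-\vp^{t+1}\|_2^2$; you instead tune the Young parameter so that the Bregman term cancels, obtaining the decrement $-\eta^{t+1}\bigl(\tfrac{1}{4L}-\tfrac{r\eta^{t+1}}{2}\bigr)\|c(\vp^t)-\vc^*\|_2^2$. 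Both yield the identical threshold $\eta<1/2rL$, so neither is sharper; they simply track different quantities.

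This choice propagates to the endgame. The paper handles the two cases separately: under~(ii) it shows $\|\vp^{t+1}-\vp^t\|\to0$, extracts a cluster point via Bolzano--Weierstrass, and verifies it is a fixed point of the CumLog map, hence in $\sP^*$; under~(i) it runs a telescoping contradiction argument with an explicit $G^2(\tilde\eta^t)^2/2$ residual and the auxiliary fact $\sum(\tilde\eta^i)^2/\sum\tilde\eta^i\to0$. Your route is more unified: from your recursion you read off $c(\vp^t)\to\vc^*$ directly (summability under~(ii); the slowly-varying lemma under~(i)), then use $\sum\eta^t=\infty$ to force the valuation gap on any non-minimal route to diverge, hence $\evp_{\hat k}^t\to0$, so every cluster point lies in $\sP^*$. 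What your approach buys is a single endgame for both conditions and an explicit identification of which routes die out; what it costs is the slowly-varying lemma, which the paper avoids. The paper's fixed-point argument for~(ii), conversely, never needs to identify $\vc^*$ or argue about individual route probabilities. Two minor points of care in your write-up: the ``testing the optimality of $\vp^*$ at $\vp^{t+1}$'' is really just the VI inequality $\langle c(\vp^*),\vp^{t+1}-\vp^*\rangle\ge0$ (the prox formulation is delicate when $\vp^*$ has zero entries), and the Young inequality you need is the plain $\ell_2$ one, which pairs cleanly with $D_R(\vp^{t+1},\vp^t)\ge\tfrac{1}{2r}\|\vp^{t+1}-\vp^t\|_2^2$ via Pinsker and $\|\cdot\|_1\ge\|\cdot\|_2$.
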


Below, Section \ref{sec:stability-interpret} interprets {the convergence result}, while Section \ref{sec:stability-sketch} sketches the proof of Theorem \ref{thm:convergence-ue} (the complete proof is given in Appendix \ref{app:iteration} to \ref{app:convergence-ue}).  

\subsection{Interpretation of the {convergence result}}
\label{sec:stability-interpret}

\subsubsection{Convergence conditions} The two convergence conditions given by Theorem \ref{thm:convergence-ue} can be interpreted as follows.
\begin{itemize}
    \item The first part of Condition (i) in Theorem \ref{thm:convergence-ue} implies the attention paid to the latest route cost decreases with $t$ and eventually disappears altogether. In other words, travelers tend to settle down in the long run. The second part states that convergence may be at risk if travelers settle down too soon. Specifically, the decreasing rate of $\eta^t$ cannot be faster than $\mathcal{O}(t^{-1})$. The following are a few examples that meet this ``not-too-soon" requirement:  $\eta^t = \mathcal{O}(1/t)$, $\eta^t = \mathcal{O}(1/\sqrt{t})$ or $\eta^t = \mathcal{O}(1/\log(t))$.  The second part is introduced to ensure the routes not included in any WE strategy become infinitely worse than the best route when $t \to \infty$. If the condition is violated, say for example $\eta^t = \mathcal{O}(t^{-(1 + \delta)})$ for some $\delta > 0$, the valuation $\vs^t$ would remain bounded when $t \to \infty$. {The monotone convergence theorem then guarantees the convergence of $\vs^t$. Denoting its limit as $\bar \vs$,  we can show the difference between any elements of $\bar \vs$ is bounded since $\bar\vs$ itself is bounded.} As a result, per the logit model, all routes are bound to receive positive flow {as $t \to \infty$}, which, in general, violates the WE conditions.
    \item Condition (ii) means travelers would never stop incorporating new information into route valuations. Rather, their propensity for {proactivity} is maintained at a constant level below a certain threshold. Compared to (i), this is a weaker requirement, and thus, the convergence under it is more difficult to establish. Condition (ii) may also be interpreted as given a fixed level of {proactivity} $\eta$, the {exploitation} parameter $r$ must not exceed $1 / 2\eta L$. At first glance, this upper bound on $r$ is puzzling because one would expect a larger $r$ makes it easier to reach WE since, in theory, WE is the limiting case of SUE when $r \to \infty$. However, a moment of reflection reveals that, in a dynamical process, a large and constant $r$ means that travelers tend to under-explore the route space, which might prevent them from reaching WE. In learning theory, it is well known that a sufficient level of exploration in the early stage is critical to effective exploitation in the later stage \citep{lattimore2020bandit}. 
\end{itemize}

{
\subsubsection{Significance}

The above interpretation of the convergence conditions indicates that CumLog can reach Wardrop Equilibrium (WE) under relatively mild assumptions about boundedly rational route choice behaviors. This finding challenges the longstanding belief in transportation research that WE has a shaky behavioral foundation \citep[see e.g.,][]{watling2003dynamics}. 
}

{Our theory also resolves Harsanyi's instability problem, which manifests as follows. When WE is reached, all used routes are equally good. Hence, breaking the tie arbitrarily is {not} against anyone's interest. However,  if travelers do break the tie arbitrarily, the system cannot stay at WE. The CumLog model solves this dilemma by allowing travelers to assign a different valuation (hence a different choice probability) to routes that have identical costs at WE.  Specifically, when CumLog reaches a WE strategy $\vp^*$, given any OD pair $w \in \sW$ and two WE routes $k, k' \in \sK_w$ between that OD pair, travelers' valuation difference between Routes $k$ and $k'$ would be
$$- \frac{\log(\evp_k^*) - \log(\evp_{k'}^*)}{r}.$$
This valuation difference explains why two identical-cost routes are selected with different probabilities. }

\subsection{Sketch of the proof}
\label{sec:stability-sketch}

To sketch the proof of Theorem \ref{thm:convergence-ue}, let us define $\vp_w = \{\evp_k\}_{k \in \sK_w}$ and the negative entropy function as $\phi_w(\vp_w) = \langle \vp_w, \log(\vp_w) \rangle$. Also, define $\tilde \eta^t \equiv r \cdot \eta^t$ and $\sQ_w^t = \{\vp_w \in \sP_w: \supp(\vp_w) \subseteq \supp(\vp_w^t)\}$. We shall prove Theorem \ref{thm:convergence-ue} by showing that there exists $\vp^* \in \sP^*$ such that the distance between $\vp^*$ and $\vp^t$ converges to 0 when $t \to \infty$. Since the decision variable is defined on $\sP = \prod_{w \in \sW} \sP_w$, the Cartesian product of a set of probability simplex $\sP_w = \{\vp_w \in \sR_+^{|\sK_w|}: \vone^T \vp_w = 1\}$, the KL divergence (also known as the statistical distance) is a natural measure, given by
\begin{equation}
    D(\vp^*, \vp^t) = \sum_{w \in \sW} D_w(\vp_w^*, \vp_w^t) = \sum_{w \in \sW} \phi_w(\vp_w) - \phi_w(\vp_w^t) - \langle \nabla \phi_w(\vp_w^t), \vp_w - \vp_w^t \rangle.
\end{equation}

The proof is divided into three steps. 

\textbf{Step 1.} We first establish the relation between $\vp^{t + 1}$ and $\vp^t$ in the {CumLog} dynamical process.  
\begin{lemma}
\label{lm:iteration}
    Given any $r > 0$, $\eta^t > 0$, and $\vs^t \in \bar\sR^{|\sK|}$, let $\vp^t = q_r(\vs^t)$. Then for a vector $\tilde \vp\in \sP$, we have $\tilde \vp = q_r(\vs^t + \eta^t \cdot c( \vp^t))$ if and only if  
    \begin{equation}
    \begin{split}
        \langle \nabla \phi_w(\vp_w^t) -  \nabla \phi_w(\tilde \vp_w),  \vp_w - \tilde \vp_w \rangle \leq \tilde \eta^t \cdot  \langle c_w(\vp^t),  \vp_w - \tilde \vp_w\rangle,  \quad \forall \vp_w \in \sQ_w^t, \quad \forall w \in \sW.
        \label{eq:p-2}
    \end{split}
    \end{equation}
\end{lemma}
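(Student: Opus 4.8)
The plan is to reduce the claimed equivalence to a short computation with logarithms, using an elementary reformulation of the logit map. Taking logarithms in \eqref{eq:logit-w}, one has: for any valuation vector $\vs$, $\vp = q_r(\vs)$ holds if and only if for each $w \in \sW$ there is a constant $\beta_w \in \sR$ with $r\,\evs_k + \log \evp_k = \beta_w$ for every $k \in \sK_w$ with $\evs_k$ finite, and $\evp_k = 0$ for every $k \in \sK_w$ with $\evs_k = \infty$. (Equivalently, $q_r(\vs)_w$ is the minimizer over $\sP_w$ of $\langle \vs_w, \vp_w\rangle + \tfrac1r\phi_w(\vp_w)$.) Throughout I will use the identity $\nabla\phi_w(\vp_w) = \log\vp_w + \vone$, valid componentwise on the relative interior of $\sP_w$.

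First I would substitute the hypothesis $\vp^t = q_r(\vs^t)$ into this reformulation. Writing $S_w := \supp(\vp_w^t)$, this gives $r\,\evs_k^t = \beta_w^t - \log \evp_k^t$ for all $k \in S_w$. Since the route-cost map $c(\cdot)$ is finite-valued on $\sP$ and $\eta^t$ is a finite positive scalar, the perturbed valuation $\vs^t + \eta^t c(\vp^t)$ has exactly the same finite coordinates as $\vs^t$ within each $\sK_w$, so $q_r(\vs^t + \eta^t c(\vp^t))$ again has support $S_w$. Plugging the expression for $r\,\evs_k^t$ into the logit reformulation written for $\vs^t + \eta^t c(\vp^t)$, the two normalizing constants collapse into a single additive constant, and recalling $\tilde\eta^t = r\eta^t$ I obtain that $\tilde\vp = q_r(\vs^t + \eta^t c(\vp^t))$ holds if and only if, for every $w \in \sW$, $\supp(\tilde\vp_w) = S_w$ and the vector $\nabla\phi_w(\tilde\vp_w) - \nabla\phi_w(\vp_w^t) + \tilde\eta^t c_w(\vp^t)$ is constant on $S_w$ (the constant may depend on $w$).

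It then remains to show this ``constant on $S_w$'' condition is equivalent to \eqref{eq:p-2}. A vector supported on $S_w$ is constant on $S_w$ exactly when it is orthogonal to $\vp_w - \tilde\vp_w$ for every $\vp_w \in \sQ_w^t$, since these differences are precisely the zero-sum vectors supported on $S_w$; spelling out that orthogonality and cancelling the $\vone$-components of the two gradients yields \eqref{eq:p-2}, in fact with equality, which settles the ``only if'' direction. For the ``if'' direction I would first argue that \eqref{eq:p-2} forces $\supp(\tilde\vp_w) = S_w$: a positive coordinate of $\tilde\vp_w$ outside $S_w$, or a vanishing coordinate of $\tilde\vp_w$ inside $S_w$, can be paired (choosing $\vp_w \in \sQ_w^t$ suitably) with a $+\infty$ coordinate of $\nabla\phi_w(\vp_w^t) - \nabla\phi_w(\tilde\vp_w)$ against a strictly positive coordinate of $\vp_w - \tilde\vp_w$, making the left side of \eqref{eq:p-2} equal $+\infty$ and contradicting its finite right side; hence $\nabla\phi_w(\tilde\vp_w)$ is finite on $S_w$. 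Next I set $\hat\vp := q_r(\vs^t + \eta^t c(\vp^t))$, which by the ``only if'' direction satisfies \eqref{eq:p-2} with equality. Evaluating the inequality for $\tilde\vp$ at $\vp_w = \hat\vp_w$ and the equality for $\hat\vp$ at $\vp_w = \tilde\vp_w$ and adding, the $c_w(\vp^t)$-terms cancel on the right and the $\nabla\phi_w(\vp_w^t)$-terms cancel on the left, leaving $\langle \nabla\phi_w(\hat\vp_w) - \nabla\phi_w(\tilde\vp_w),\, \hat\vp_w - \tilde\vp_w\rangle \le 0$ for every $w$. Strict convexity of the negative entropy $\phi_w$ on $\sQ_w^t$ then forces $\tilde\vp_w = \hat\vp_w$, i.e.\ $\tilde\vp = q_r(\vs^t + \eta^t c(\vp^t))$.

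The part I expect to require the most care is the bookkeeping around ``abandoned'' routes (valuation $+\infty$, probability $0$): one must check that $q_r$ preserves supports under a finite perturbation, that restricting the test vectors to the face $\sQ_w^t$ rather than all of $\sP_w$ is exactly what keeps both sides of \eqref{eq:p-2} finite and the inequality informative, and that \eqref{eq:p-2} by itself pins the support of $\tilde\vp_w$ down to $S_w$ --- only once the supports match is $\nabla\phi_w(\tilde\vp_w)$ finite and the strict-convexity comparison legitimate. Everything else is a one-line manipulation of logarithms together with strict convexity of $\phi_w$.
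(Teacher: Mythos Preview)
Your argument is correct and complete; in fact you end up proving slightly more than the stated lemma (equality in \eqref{eq:p-2} rather than just the inequality), and you handle the ``if'' direction and the support bookkeeping more explicitly than the paper does.

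The route you take is genuinely different from the paper's, though. The paper packages the same computation inside convex-analytic machinery: it invokes the identification of the logit map with a KL projection of a Fenchel dual point (Proposition~\ref{lm:projection-logit}), then applies the first-order optimality/variational-inequality characterization of that projection (Proposition~\ref{eg:optimization}), and finally unwinds the Fenchel identities (Lemmas~\ref{eg:entropy-dual}, \ref{lm:fenchel}, \ref{lm:difference-logit}) to land on \eqref{eq:p-2}. Your proof bypasses all of this by taking logarithms directly in \eqref{eq:logit-w}, reducing the logit update to the ``constant on the support'' condition, and then reading off \eqref{eq:p-2} as orthogonality to zero-sum vectors on $S_w$; uniqueness in the ``if'' direction you recover from strict convexity of $\phi_w$ rather than from the uniqueness of the KL projection. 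What the paper's approach buys is conceptual: it makes transparent that the CumLog step is a mirror-descent/Bregman-projection step, which is exactly the structure exploited later in Lemma~\ref{lm:mid} and Proposition~\ref{prop:decreasing}. What your approach buys is self-containment and a cleaner treatment of the boundary cases ($\evs_k^t = \infty$, $\evp_k^t = 0$), which the paper's proof handles only implicitly via the restriction to $\sQ_w^t$.
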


\begin{proof}
    The proof follows from two well-known results: (1) the equivalence between the logit model and a convex program (Proposition \ref{lm:projection-logit}); (2) the equivalence between the convex program and a VIP (Lemma \ref{eg:optimization}). 
    The reader is referred to  Appendix \ref{app:iteration} for the complete proof. 
\end{proof}

\textbf{Step 2.} We next link $D(\vp^*, \vp^{t + 1})$ to $D(\vp^*, \vp^{t})$.

\begin{lemma}
\label{lm:mid}
    If $\vs^0 < \infty$, then given any $\vp^* \in \sP^*$, $\vp^t$ and $\vp^{t + 1}$ in the {CumLog} dynamical process satisfy 
    \begin{equation}
        D(\vp^*, \vp^{t + 1}) \leq D(\vp^*, \vp^t) - \frac{1}{2} \cdot \|\vp^t - \vp^{t + 1} \|_2^2 + \tilde \eta^t \cdot  \langle c(\vp^t),  \vp^* - \vp^{t + 1}\rangle.
    \end{equation} 
\end{lemma}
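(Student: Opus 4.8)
The object $D(\vp^*,\vp^t)=\sum_{w}D_w(\vp_w^*,\vp_w^t)$ is the Bregman divergence generated by the separable negative-entropy potential $\phi(\vp)=\sum_w\phi_w(\vp_w)$, and by the variational characterization in Lemma~\ref{lm:iteration} the CumLog step $\vp^{t+1}=q_r(\vs^{t+1})$ is precisely a mirror-descent step for the map $c$ with step size $\tilde\eta^t$ in the geometry induced by $\phi$. I would therefore prove Lemma~\ref{lm:mid} by combining three ingredients: (a) Lemma~\ref{lm:iteration} evaluated at the test point $\vp_w=\vp_w^*$; (b) the Bregman ``law of cosines'' $D_w(x,z)=D_w(x,y)+D_w(y,z)+\langle\nabla\phi_w(y)-\nabla\phi_w(z),\,x-y\rangle$, together with the identity $D_w(x,y)+D_w(y,x)=\langle\nabla\phi_w(x)-\nabla\phi_w(y),\,x-y\rangle$; and (c) Pinsker's inequality, i.e. the $1$-strong convexity of $\phi_w$ with respect to $\|\cdot\|_1$ on the simplex.

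\textbf{Steps.} First I would invoke $\vs^0<\infty$: since $c$ is continuous on the compact set $\sP$ it is bounded, so $\vs^t=\vs^0+\sum_i \eta^i c(\vp^{i-1})$ is finite in every coordinate for every finite $t$; hence by \eqref{eq:logit-w} every route carries strictly positive probability, $\supp(\vp_w^t)=\sK_w$, and therefore $\sQ_w^t=\sP_w\ni\vp_w^*$. This is the \emph{only} place the hypothesis $\vs^0<\infty$ enters. Second, applying Lemma~\ref{lm:iteration} with $\tilde\vp=\vp^{t+1}$ and the now-admissible test point $\vp_w=\vp_w^*$ gives, for each $w\in\sW$,
\[
\langle \nabla\phi_w(\vp_w^t)-\nabla\phi_w(\vp_w^{t+1}),\ \vp_w^*-\vp_w^{t+1}\rangle \ \le\ \tilde\eta^t\,\langle c_w(\vp^t),\ \vp_w^*-\vp_w^{t+1}\rangle .
\]
Third, I would expand $D_w(\vp_w^*,\vp_w^{t+1})$ with the three-point identity applied to the ordered triple $(\vp_w^*,\vp_w^t,\vp_w^{t+1})$, then split $\vp_w^*-\vp_w^t=(\vp_w^*-\vp_w^{t+1})+(\vp_w^{t+1}-\vp_w^t)$ and use the second identity above to collapse the residual terms, obtaining
\[
D_w(\vp_w^*,\vp_w^{t+1}) = D_w(\vp_w^*,\vp_w^t) - D_w(\vp_w^{t+1},\vp_w^t) + \langle \nabla\phi_w(\vp_w^t)-\nabla\phi_w(\vp_w^{t+1}),\ \vp_w^*-\vp_w^{t+1}\rangle .
\]
Substituting the displayed inequality bounds the last term by $\tilde\eta^t\langle c_w(\vp^t),\vp_w^*-\vp_w^{t+1}\rangle$. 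Finally, Pinsker gives $D_w(\vp_w^{t+1},\vp_w^t)\ge \tfrac12\|\vp_w^{t+1}-\vp_w^t\|_1^2\ge \tfrac12\|\vp_w^{t+1}-\vp_w^t\|_2^2$; summing the per-$w$ bounds over $w\in\sW$ and using $\sum_w\|\vp_w^{t+1}-\vp_w^t\|_2^2=\|\vp^{t+1}-\vp^t\|_2^2$ yields exactly the claimed inequality.

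\textbf{Main obstacle.} The routine risk is the Bregman bookkeeping: one must apply the law of cosines with the correct ordering of arguments so that the surviving cross term is the one in $\vp_w^*-\vp_w^{t+1}$ (not $\vp_w^*-\vp_w^t$), since that is the only direction Lemma~\ref{lm:iteration} controls; an ill-chosen expansion leaves a term that cannot be signed. The genuinely substantive point, however, is the support argument. Lemma~\ref{lm:iteration} only supplies the variational inequality over $\vp_w\in\sQ_w^t$, so one cannot simply substitute $\vp_w^*$ without first knowing it is feasible there; establishing $\vs^t<\infty$ for all $t$, hence full support of $\vp^t$, is what legitimizes the substitution, and it is precisely why the lemma assumes $\vs^0<\infty$ rather than merely $\vs^0\in\bar\sR^{|\sK|}$ (an initialization with $+\infty$ entries would permanently kill some routes, and then $\vp^*$ need not lie in $\sQ_w^t$).
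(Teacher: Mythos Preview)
Your proposal is correct and follows essentially the same route as the paper: invoke $\vs^0<\infty$ to obtain $\sQ_w^t=\sP_w$ and plug $\vp_w^*$ into Lemma~\ref{lm:iteration}, combine with the three-point identity for the KL divergence (the paper cites this directly as Lemma~\ref{lm:triangle} rather than re-deriving it via the law of cosines and the splitting you describe), and finish with Pinsker plus $\|\cdot\|_1\ge\|\cdot\|_2$. Your extra commentary on why the support argument is the substantive hypothesis is accurate and matches the paper's one-line justification that $\vs^0<\infty\Rightarrow\sQ_w^t=\sP_w$.
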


\begin{proof}
    As $\vs^0 < \infty$, we have $\vs^t < \infty$ and thus $\sQ_w^t = \sP_w$ for all  $t \geq 0$. Invoking Lemma \ref{lm:iteration} then completes the proof.
    See Appendix \ref{app:mid} for details.
\end{proof}

\begin{proposition}
\label{prop:decreasing}
    Under Assumptions \ref{ass:differentiable}-\ref{ass:cocoercive}, if $\vs^0 < \infty$, then given any $\vp^* \in \sP^*$, $\vp^t$ and $\vp^{t + 1}$ in the {CumLog} dynamical process satisfy 
    \begin{equation}
        D(\vp^*,  \vp^{t + 1}) \leq D(\vp^*, \vp^t) - \frac{1 - 2\tilde \eta^t L}{2} \cdot \|\vp^t - \vp^{t + 1} \|_2^2.
    \end{equation}
\end{proposition}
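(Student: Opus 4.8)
The plan is to obtain Proposition~\ref{prop:decreasing} as an immediate consequence of Lemma~\ref{lm:mid} by controlling the residual inner‑product term that appears there. Since $\vs^0<\infty$, Lemma~\ref{lm:mid} applies and yields
\[
D(\vp^*,\vp^{t+1}) \le D(\vp^*,\vp^t) - \tfrac12\,\|\vp^t-\vp^{t+1}\|_2^2 + \tilde\eta^t\,\langle c(\vp^t),\,\vp^*-\vp^{t+1}\rangle .
\]
Because $-\tfrac12 + \tilde\eta^t L = -\tfrac{1-2\tilde\eta^t L}{2}$, it suffices to prove the pointwise estimate
\[
\langle c(\vp^t),\,\vp^*-\vp^{t+1}\rangle \;\le\; L\,\|\vp^t-\vp^{t+1}\|_2^2
\]
and substitute it into the inequality above. (If $L=0$ then $c$ is constant on $\sP$ by Proposition~\ref{prop:property}(i), so $c(\vp^t)=c(\vp^*)$ and the estimate reduces to the variational inequality \eqref{eq:ue-vi}; hence we may assume $L>0$.)

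To prove the estimate I would use the WE cost vector $c(\vp^*)$ as a pivot. Write
\[
\langle c(\vp^t),\,\vp^*-\vp^{t+1}\rangle
= \langle c(\vp^t)-c(\vp^*),\,\vp^*-\vp^{t+1}\rangle + \langle c(\vp^*),\,\vp^*-\vp^{t+1}\rangle .
\]
The last term is nonpositive: $\vp^*\in\sP^*$ solves the VI \eqref{eq:ue-vi} and $\vp^{t+1}\in\sP$, so $\langle c(\vp^*),\vp^{t+1}-\vp^*\rangle\ge 0$. For the first term, decompose $\vp^*-\vp^{t+1}=(\vp^*-\vp^t)+(\vp^t-\vp^{t+1})$, so that it equals
\[
-\,\langle c(\vp^t)-c(\vp^*),\,\vp^t-\vp^*\rangle \;+\; \langle c(\vp^t)-c(\vp^*),\,\vp^t-\vp^{t+1}\rangle .
\]
By the $\tfrac{1}{4L}$‑cocoercivity of $c$ on $\sP$ (Proposition~\ref{prop:property}(i)), the first summand is at most $-\tfrac{1}{4L}\|c(\vp^t)-c(\vp^*)\|_2^2$; by Cauchy--Schwarz together with the elementary inequality $ab\le \tfrac{1}{4L}a^2 + L b^2$ (valid for $a,b\ge 0$, $L>0$), the second summand is at most $\tfrac{1}{4L}\|c(\vp^t)-c(\vp^*)\|_2^2 + L\,\|\vp^t-\vp^{t+1}\|_2^2$. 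Adding the three bounds, the two $\tfrac{1}{4L}\|c(\vp^t)-c(\vp^*)\|_2^2$ contributions cancel exactly and only $L\,\|\vp^t-\vp^{t+1}\|_2^2$ remains, which is the claimed estimate.

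The step I expect to require the most care is the first splitting above: one must resist weakening $-\langle c(\vp^t)-c(\vp^*),\,\vp^t-\vp^*\rangle$ to the crude bound ``$\le 0$'' that Proposition~\ref{prop:property}(ii) already supplies. Retaining the full cocoercive lower bound is essential, since the Cauchy--Schwarz/Young step unavoidably produces a $+\tfrac{1}{4L}\|c(\vp^t)-c(\vp^*)\|_2^2$ term, and only the matching negative quantity from cocoercivity can absorb it; heuristically, $\langle c(\vp^t),\vp^t-\vp^{t+1}\rangle$ is not of order $\|\vp^t-\vp^{t+1}\|_2^2$ for small $\tilde\eta^t$, so the crude route fails outright. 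Carrying the sharp constant $\tfrac{1}{4L}$ through is also what delivers the threshold $\tilde\eta^t L<\tfrac12$, i.e.\ $\eta < 1/(2rL)$, exactly matching Condition~(ii) of Theorem~\ref{thm:convergence-ue}; this parallels the classical analysis of the projected‑gradient iteration for a $\tfrac{1}{4L}$‑cocoercive variational inequality, whose admissible step sizes form the interval $(0,1/2L)$. Everything else is routine inner‑product algebra.
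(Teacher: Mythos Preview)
Your proposal is correct and follows essentially the same route as the paper's own proof: both start from Lemma~\ref{lm:mid}, subtract the VI term $\langle c(\vp^*),\vp^*-\vp^{t+1}\rangle\le 0$, split $\vp^*-\vp^{t+1}$ through $\vp^t$, and then balance the cocoercivity bound against a Young-type inequality so that the $\tfrac{1}{4L}\|c(\vp^t)-c(\vp^*)\|_2^2$ terms cancel. The only cosmetic difference is that the paper packages the Cauchy--Schwarz/Young step as its Lemma~\ref{lm:bound} (applied after scaling by $L$), whereas you invoke $ab\le \tfrac{1}{4L}a^2+Lb^2$ directly; your explicit handling of the degenerate case $L=0$ is an extra nicety the paper omits.
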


\begin{proof}
   As per Proposition \ref{prop:property}, $c(\vp)$ is $1/4L$-cocoercive. The result then follows from Lemma \ref{lm:mid} and the cocoercivity of $c(\vp)$.
   See Appendix \ref{app:decreasing} for the complete proof. 
\end{proof}

\textbf{Step 3.}  We finally prove the convergence of $\vp^t$ toward a point in $\sP^*$ under Conditions (i) and (ii).

\begin{proof}
Proposition \ref{prop:decreasing} indicates that $\lim_{t \to \infty} D(\vp^*, \vp^t)$ exists for all $\vp^* \in \sP$ under both Conditions (i) and (ii). All left to prove is to find a convergent subsequence $\{\vp^{t_j}\} \subseteq \{\vp^t\}$ with $\vp^{t_j} \to \sP^*$ when $j \to \infty$. Denote $\hat \vp$ as the limit of $\vp^{t_j}$.  Then $\vp^{t_j} \to \hat \vp$ implies $D(\hat \vp, \vp^{t_j}) \to 0$. As $\lim_{t \to \infty} D(\hat \vp, \vp^t)$ exists, it follows $D(\hat \vp, \vp^t) \to 0$, which implies $\vp^t \to \hat \vp$. Under Condition (i) given in Theorem \ref{thm:convergence-ue}, we prove by contradiction: if there does \textit{not} exist a subsequence of $\{\vp^t\}$ that converges to $\sP^*$,  Properties (ii) and (iii) given in Proposition \ref{prop:property} cannot both hold. Under Condition (ii), we first invoke the Bolzano-Weierstrass theorem to exact a convergent subsequence and then prove its limit must belong to $\sP^*$.
The reader is referred to Appendix \ref{app:convergence-ue} for details.
\end{proof}

\smallskip
{
\begin{remark}[On infinite valuations]
    In our model, when either Condition (i) or (ii) is met, $\vs^t$ converges to infinity when $t \to \infty$.  While an infinite value might seem unrealistic, a key feature in the logit model is that the choice probability depends on the relative, rather than absolute, valuations of alternatives.  In our context, this may be interpreted as travelers monitoring the differences between the elements in the valuation vector $\vs^t$. Thus, as long as these differences are finite, the model will give correct results.   One can also modify the original model to avoid such a nuanced interpretation. To simplify the discussion, assume $|\sW| = 1$ (i.e., there is only one OD pair).  Consider the following two behavioral variants.
    \begin{itemize}
        \item \textbf{Variant 1}. On each day $t$, after updating route valuations, 
        travelers further adjust them such that the best-valued route is normalized to zero, i.e., replacing $\vs^t$ by $\vs^t - \min(\vs^t)$.  
        \item \textbf{Variant 2}. One each day $t$, the travelers update route valuations by $\vs^t = \vs^{t - 1} + \eta^t \cdot (c(\vp^{t - 1}) - \min(c(\vp^{t - 1}))$ so that the valuation of the route with the lowest cost on day $t$ remains unchanged.
    \end{itemize}
    \smallskip
    Mathematically, both variants are identical to the original CumLog model. However, they would ensure travelers' valuation of routes used at WE be bounded. However, the valuation on the non-WE routes would still grow to infinity, signifying they are unacceptable. 
\end{remark}
}

\smallskip
{
\begin{remark}[Related convergence results]
The two conditions given in Theorem \ref{thm:convergence-ue} are similar to the conditions detailed in Theorem 1 of \citet{horowitz1984stability}. The difference is twofold. First, his proof is established only for the two-link network. Second and most important, his dynamical system, based on an average rather than a cumulative scheme, converges to SUE rather than a WE. \citet{horowitz1984stability} pondered on the possibility of extending his stability result to the WE case (what he called ``the deterministic model"). His conclusion was negative because letting $r \to \infty$ (the equivalent of zero perception errors) not only violates Lipshcitz continuity but also voids the argument that the convergence of the average cost difference to the equilibrium value must imply the convergence to the correct equilibrium (see his Example 4). 

The {CumLog} model also shares a similar mathematical structure with a classic online convex programming method known as the dual averaging (DA) algorithm \citep{nesterov2009primal, xiao2009dual}. Recently, \citet{mertikopoulos2019learning} studied the convergence of the DA algorithm in continuous games. The convergence conditions given by them are similar to Condition (i) in Theorem \ref{thm:convergence-ue}, though they were proved using a  different technique.  Condition (ii) in Theorem \ref{thm:convergence-ue}, to the best of our knowledge, has never been rigorously shown to ensure the convergence of the DA algorithm in games or VIPs.

\end{remark}
}

\smallskip
{
\begin{remark}[On the convergence of the successive average model to WE]
\label{rm:eta-r}
    At the end of Section \ref{sec:discrete-time}, we ask whether properly coupling the increase of $r$ and the decrease of $\eta$ would steer the successive average model \eqref{eq:watling} toward WE. With Theorem \ref{thm:convergence-ue}, this question can now be partially answered. 
    Consider the following two DTD models.
    \begin{itemize}
        \item  Model I: $\vs^t = \vs^{t - 1} + c(\vp^{t - 1})$ and $\vp^t = q_r(\vs^t)$.
        \item Model II: $\vs^t = (1 - \eta^t) \cdot \vs^{t - 1} + \eta^t \cdot c(\vp^{t - 1})$ and $\vp^t = q_{r^t}(\vs^t)$ with $\eta^t = 1 / (t +1)$ and $r^t = r \cdot (t + 1)$.
    \end{itemize}
    Model I is a {CumLog} model, and Model II is a successive average model.
    {Mathematically, the two models are identical in the sequence of $\{\vp^t\}_{t = 0}$ they generate (the proof is omitted for brevity).}
    Since Model I is a {CumLog} model with $\eta = 1$, its convergence is guaranteed for sufficiently small $r$ as per Condition (ii) in Theorem \ref{thm:convergence-ue}.  This means the successive average model converges to WE, too, if $\eta^t$ decreases at a rate of $\mathcal{O}(1/t)$ and $r^t$ increases at a rate of $\mathcal{O}(t)$. However, this peculiar coupling between $\eta^t$ and $r^t$ in Model II does not seem to have a reasonable behavioral explanation. In Section \ref{sec:exp-iii}, we shall show very slight modifications of the coupling mechanism could lead to vastly different convergence patterns.
\end{remark}
}

\section{Numerical Examples}
\label{sec:experiment}

{The proposed  CumLog dynamical process is tested on two small networks: a 3-node-4-link (3N4L) network \citep{friesz1990sensitivity} and the Sioux-Falls network \citep{leblanc1975algorithm}.}

{
\textbf{3N4L.} The 3N4L network, as shown in Figure \ref{fig:3n4l}, has 3 nodes, 4 links, and 1 OD pair. It has four routes connecting the origin (node 1) and the destination (node 3). For ease of reference, let us say Route 1 uses links 2 and 4, Route 2 uses links 1 and 4, Route 3 uses links 2 and 3, and route 4 uses links 1 and 3. The number of travelers from node 1 to node 3 is 10. Given the flow $\evx_a$ on link $a$, we model its costs as $u_a = \evh_a + \evw_a \cdot \evx_a^4$, where $[h_1, h_2, h_3, h_4]^{\T} = [4, 20, 1, 30]^{\T}$ and $[\evw_1, \evw_2, \evw_3, \evw_4]^{\T} = [1, 5, 30, 1]^{\T}$.  
\begin{figure}[ht]
    \centering
    \includegraphics[width=0.265\textwidth]{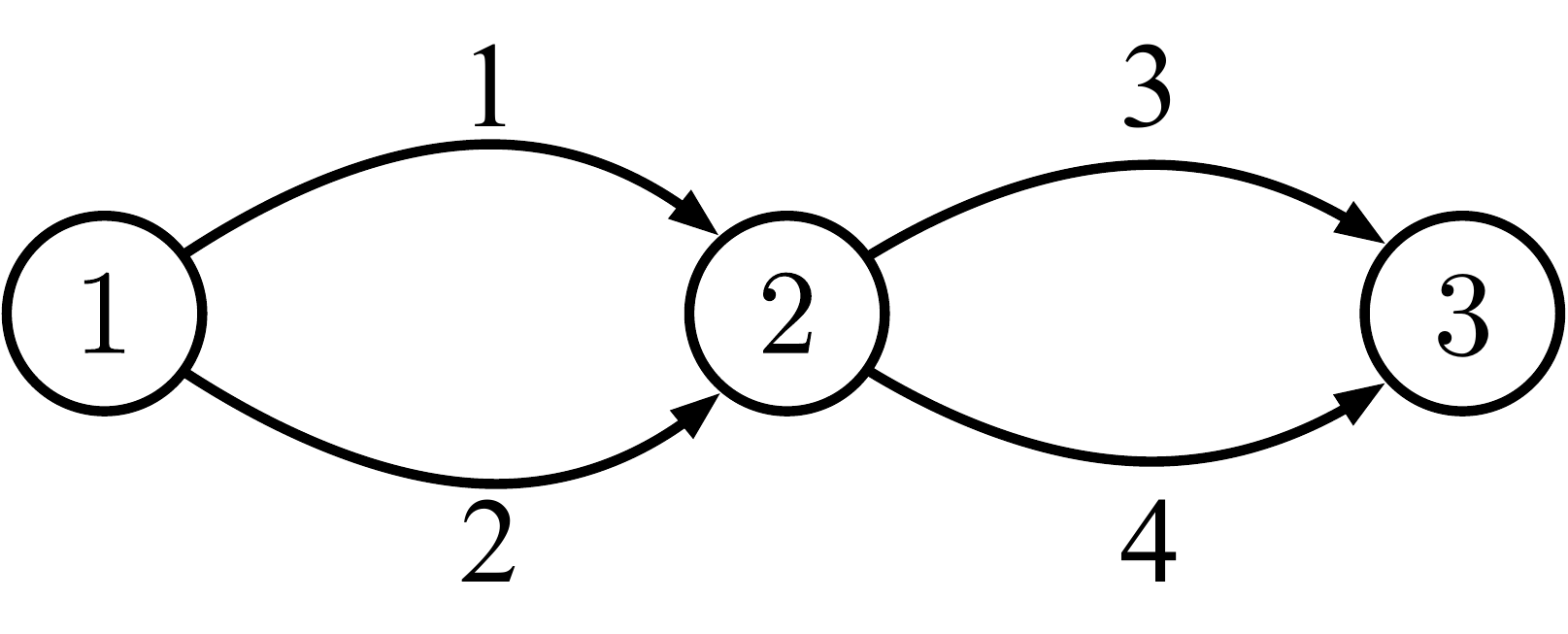}
    \captionof{figure}{The 3N4L network.}
    \label{fig:3n4l}
\end{figure}

\textbf{Sioux-Falls.} The Sioux-Falls network has 24 nodes, 76 links,
and 528 OD pairs. We refer the readers to \citet{leblanc1975algorithm} for the topology, travel demand, and cost function of the Sioux-Falls network.
}

On the two networks, we shall perform four sets of experiments: the first three run on the 3N4L network, {while the fourth on both the 3N4L and the Sioux-Falls network}. 
Section \ref{sec:exp-ii} tests the {CumLog} model under various behavioral parameters. In Section \ref{sec:exp-iii}, we explore how the classical successive average model like those studied by \cite{horowitz1984stability} can be redirected toward a WE, under the guidance of our theory. %
In Section \ref{sec:exp-vii}, we test the {CumLog} model with heterogeneous travelers who exhibit different sensitivity to route valuations. Finally, Section \ref{sec:exp-v} investigates the difference between the WE strategies reached by the {CumLog} model starting from different initial points. 

For a route choice strategy $\vp$, we use the relative gap of its corresponding link flow $\vx \in \sX = \{\vx: \vx = \bar \mLambda \vp, \vp \in \sP\}$, denoted as $\delta(\vx)$, to assess its distance from WE. The relative gap is computed as
\begin{equation}
    \delta(\vx) =-\frac{\langle u(\vx), \vx' - \vx \rangle}{\langle u(\vx), \vx \rangle}, \quad \vx' \in \argmin_{\vx'' \in \sX}~\langle u(\vx), \vx''  \rangle.
\end{equation}

\subsection{Test of convergence conditions}
\label{sec:exp-ii}

We first examine Conditions (i) and (ii) given by Theorem \ref{thm:convergence-ue}. To test Condition (i), we set $\eta^t = 1 / (t + 1)$ for $r = 10, 20, 40$; to test Condition (ii), we fix $\eta^t = 1$ for $r = 0.25, 0.5, 1, 2.5$. For each setting, we run the {CumLog} model starting from $\vs^0 = [0, 0, 0, 0]^{\T}$ until one of the following criteria is met (i) the number of iterations reaches 120; (ii) the relative gap drops below $10^{-9}$, or (iii) the algorithm begins to diverge. 

The convergence patterns reported in Figure \ref{fig:exp-ii} generally agree with the prediction of Theorem \ref{thm:convergence-ue}. Under Condition (i), convergence is ensured regardless of the value of $r$. Interestingly, the larger the value of $r$, the slower the convergence at the beginning and the faster the convergence at the end (see Figure \ref{fig:exp-ii}-(i)). For Condition (ii), the convergence can only be guaranteed when $r$ is sufficiently small. In this case, we can observe from Figure \ref{fig:exp-ii}-(ii) that the convergence rate increases with $r$ when $r\le 1$. Indeed, setting $r=1$ and $\eta=1$ delivers the best performance among all scenarios: it reaches the target with less than 30 iterations. However, when $r = 2.5$, the {CumLog} model failed to converge. 

\begin{figure}[ht]
    \centering
    \begin{subfigure}[b]{0.39\textwidth}
        \includegraphics[width=0.93\columnwidth]{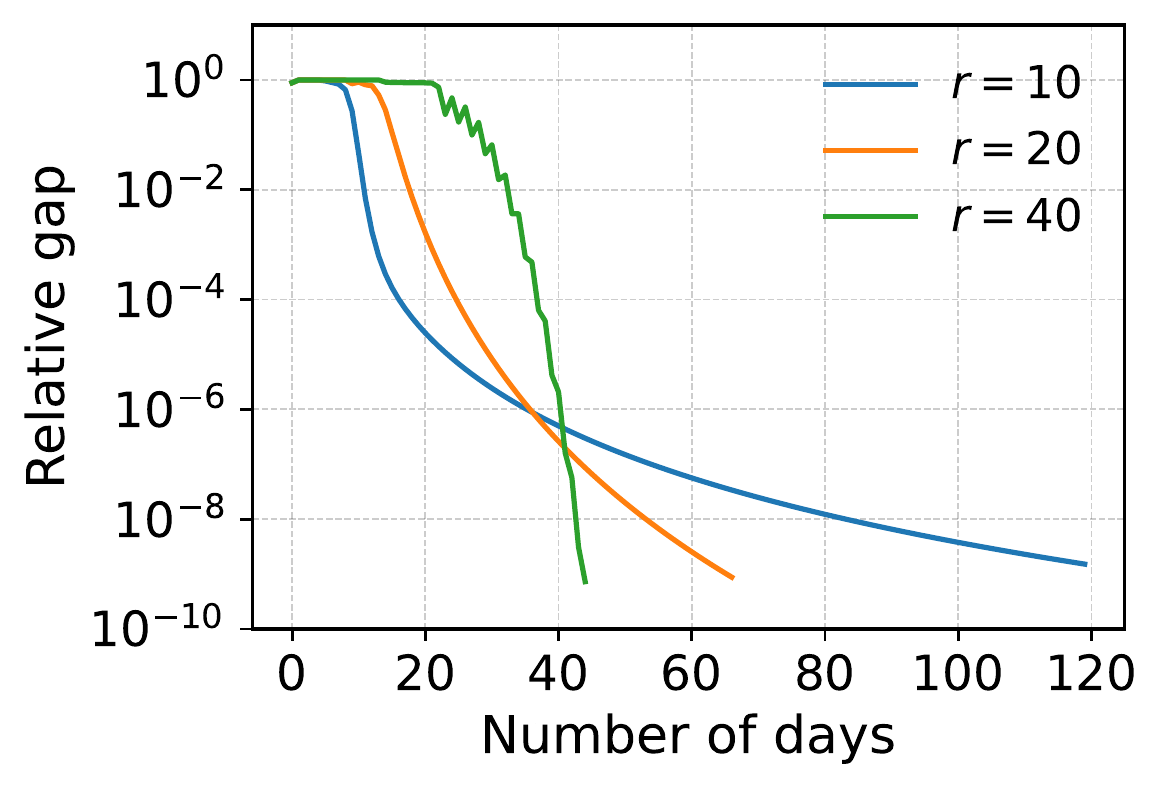}
        \caption{$\eta^t = 1 / (t + 1)$.}
        \label{fig:ii-1}
    \end{subfigure}
    \begin{subfigure}[b]{0.39\textwidth}
        \includegraphics[width=0.93\columnwidth]{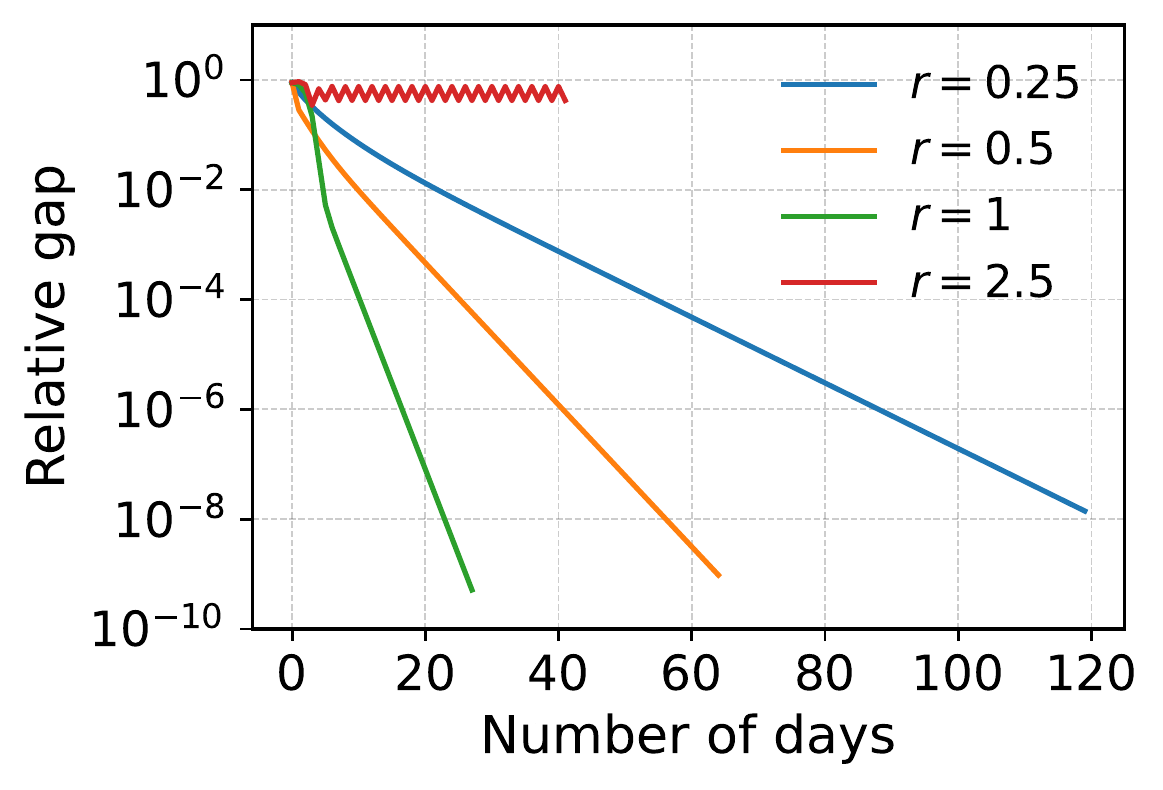}
        \caption{$\eta^t = 1$.}
        \label{fig:ii-2}
    \end{subfigure}
    \caption{Convergence pattern of the {CumLog} model under Conditions (i) and (ii).}
    \label{fig:exp-ii}
\end{figure}

We next fix $r = 1$ and examine how the decreasing rate of $\eta^t$ affects the convergence performance. We set $\eta^t = (t + 1)^{\alpha}$ and report the convergence patterns corresponding to $\alpha = -0.5, -0.25, 0, 0.25$ in Figure \ref{fig:exp-ii-3}. The convergence of the model under $\alpha = -0.5, -0.25, 0$ is guaranteed by Theorem \ref{thm:convergence-ue}.  The faster the $\eta^t$ decreases with $t$, the more quickly travelers tend to settle down, and the slower the convergence. When the decreasing rate is zero (i.e., $\eta^t$ becomes a constant), the convergence is the fastest. However, when the trend is reversed, and $\eta^t$ begins to increase with $t$ ($\alpha = 0.25$), the process quickly diverges (see the red line). This is expected as neither Condition (i) nor Condition (ii) would be satisfied with $r =1$ and $\eta^t = (t + 1)^{0.25}$.

\begin{figure}[ht]
\centering
    \begin{minipage}[t]{0.48\textwidth}
    \centering
    \includegraphics[height=0.54\textwidth]{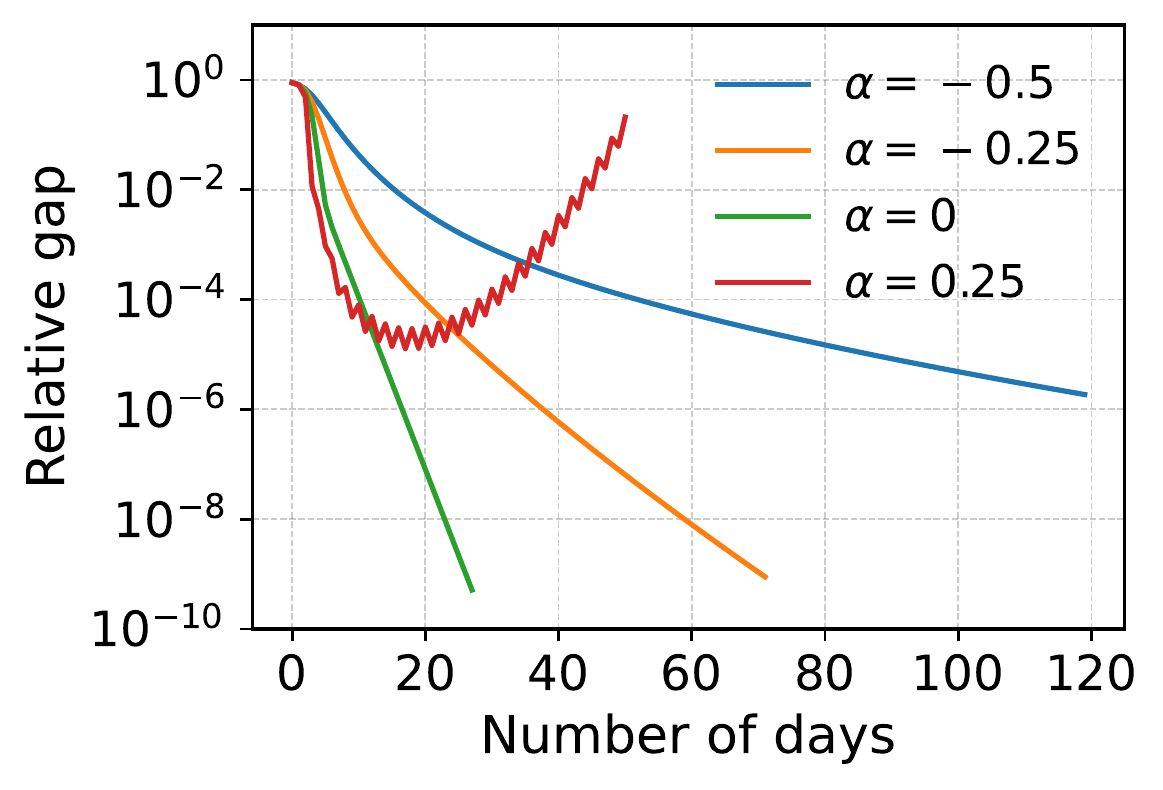}
    \captionof{figure}{Convergence pattern of the {CumLog} model with  $r=1$ and $\eta^t = (1 + t)^{\alpha}$.}
    \label{fig:exp-ii-3}
    \end{minipage}
    \hspace{5pt}
    \begin{minipage}[t]{0.48\textwidth}
    \centering
    \includegraphics[height=0.54\textwidth]{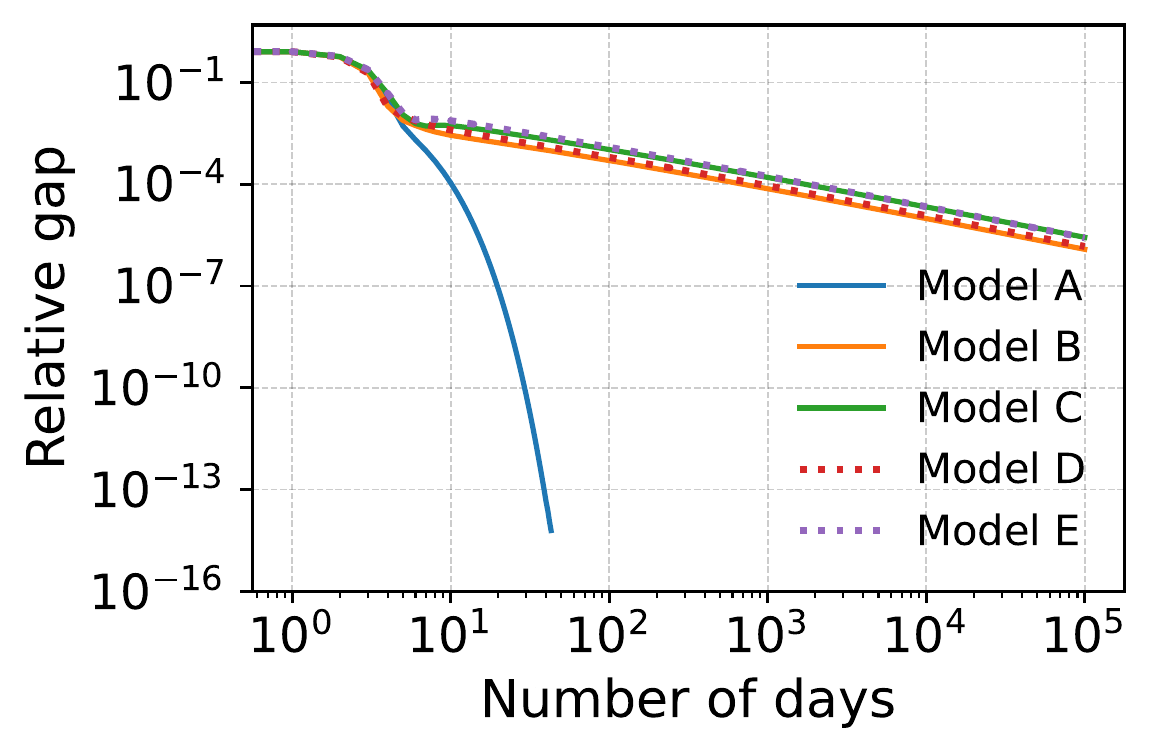}
    \captionof{figure}{Convergence pattern of the SA model with $r^t$  and $\eta^t$ of different changing rates.}
    \label{fig:exp-iii}
    \end{minipage}
\end{figure}

\subsection{Convergence of revised successive average (SA) models to WE}
\label{sec:exp-iii}
  
In the second experiment, we attempt to manipulate the classical successive average (SA)  model using our theory so that it converges to WE rather than SUE. The basic idea is to gradually raise the value of $r$ (the {exploitation} parameter) while reducing  $\eta$ (the {proactivity} measure), as discussed earlier in Remark \ref{rm:eta-r}. We test five models, all based on the SA process $\vs^t = (1 - \eta^t) \cdot \vs^{t - 1} + \eta^t \cdot c(\vp^{t - 1})$ and $\vp^t = q_{r^t}(\vs^t)$. However, $\eta^t$ and $r^t$ are set differently, as detailed below. Model A: $\eta^t = 1 / (t + 1)$ and $r^t = t + 1$. Model B:  $\eta^t = 1 / (t + 1)^{0.99}$ and $r^t = t + 1$. Model C:  $\eta^t = 1 / (t + 1)^{1.01}$ and $r^t = t + 1$. Model D: $\eta^t = 1 / (t + 1)^{0.99}$ and $r^t = (t + 1)^{0.99}$.  Model E:  $\eta^t = 1 / (t + 1)^{1.01}$ and $r^t = (t + 1)^{1.01}$. Model A, as pointed out in Remark \ref{rm:eta-r}, is equivalent to the {CumLog} model with $r = 1$ and $\eta = 1$. Hence, its convergence to WE is guaranteed by Theorem \ref{thm:convergence-ue} and already confirmed in Section \ref{sec:exp-ii}. What we try to examine is the robustness of the ``perturbed" successive average model. Specifically, what happens if we slightly perturb the changing rates of the two parameters? If the manipulated model is robust, then such perturbations should not have a significant impact on the convergence pattern.   Compared to Model A, Models B and C keep the same increasing rate for $r^t$ but slightly modify the decreasing rate of $\eta^t$; Models D and E change both $r^t$ and $\eta^t$ but keep $r^t \cdot \eta^t = 1$ as in Model A. We start all models from $\vs^0 = [0, 0, 0, 0]^{\T}$ and report the convergence pattern in Figure \ref{fig:exp-iii}.

Surprisingly, while Model A converges quickly as expected, none of its four slightly perturbed versions was able to converge at a similar speed --- not even close. In fact, based on the trend, it is unclear whether they would ever converge to a point sufficiently close to WE. Figure \ref{fig:exp-iii} indicates after 100,000 days (or 274 years), they are still far away from reaching the target precision (relative gap of $10^{-9}$). We do not know what caused this dramatic slow-down when the perturbation moves the parameters so slightly away from the trajectory charted by Theorem \ref{thm:convergence-ue}. Indeed, if we compare Model A to Models B and C, the only difference is that $\eta^t$ is changed from $(1+t)^{-1}$ in Model A to $(1+t)^{-0.99}$ in Model B --- which slightly slows down the decreasing rate of $\eta^t$ --- and $(1+t)^{-1.01}$ in Model C --- which slightly speeds up how $\eta^t$ is decreased. Yet, Model A converges within less than a month, at least four to five orders of magnitude faster than both Models B and C.  Regardless of the cause, the phenomenon draws a sharp contrast with the robustness of the {CumLog} model against the changing rate in $\eta^t$. In Figures \ref{fig:exp-ii} and \ref{fig:exp-ii-3}, the {CumLog} model's convergence speed varies within a much narrower range despite much greater variations applied to the parameters.

\subsection{User heterogeneity at WE}
\label{sec:exp-vii}

We next construct an experiment in which travelers differ from each other in terms of route choice behaviors. Specifically, travelers are divided into four classes (Classes 1-4) with the {exploitation} parameter $r$ set to 0.01, 0.1, 1, \text{and~} 10, respectively. As noted before, a larger $r$ suggests a smaller perception error, a greater sensitivity to route evaluations, or a stronger propensity for exploitation, depending on the preferred interpretation of the modeler. The total demand remains the same as in the first two experiments but is equally allocated to the four classes. The travelers from different classes are identical in every aspect except for the value of $r$. Their initial valuation of the routes are $[0, 0, 0, 0]^{\T}$ and their {proactivity} measure  $\eta^t$ is set to a constant of $1$. 
\begin{figure}[ht]
    \centering    \includegraphics[width=0.7\textwidth]{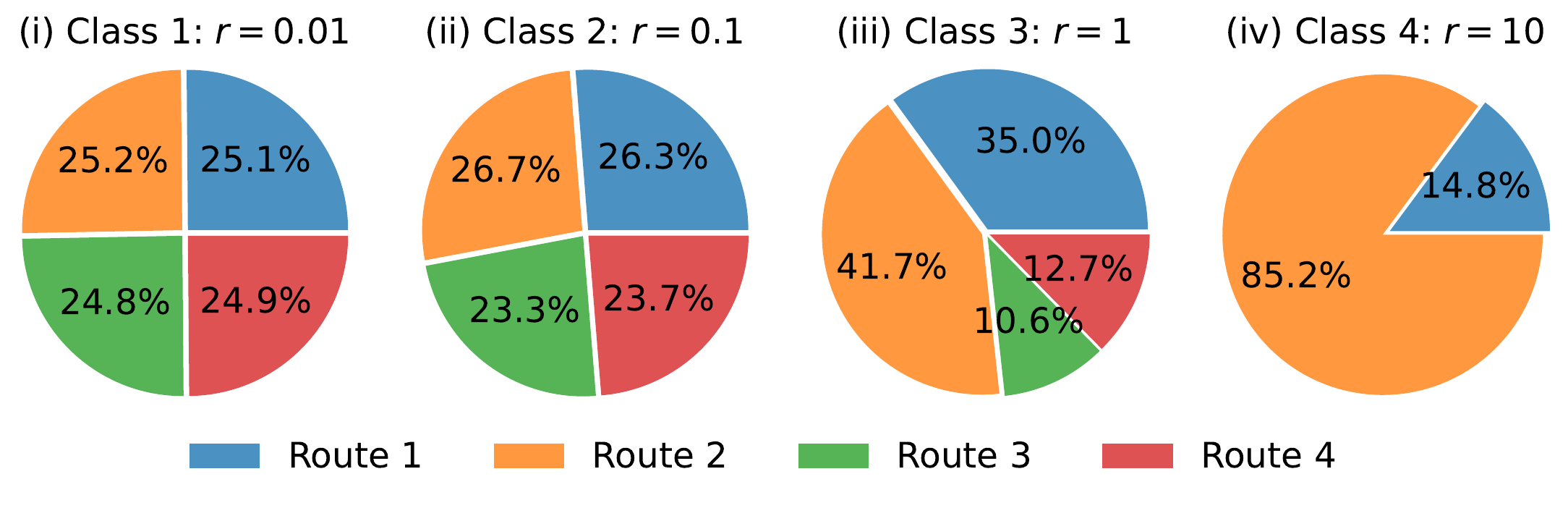}
    \captionof{figure}{WE route choice strategies of travelers with different {exploitation} parameters ($r = 0.01, 0.1, 1, 10$).  }
    \label{fig:exp-vii}
\end{figure}
 We ran the {CumLog} model for 1000 iterations and reached a relative gap below $10^{-14}$. The WE strategy obtained corresponds to exactly the same WE link flow as in the homogeneous case. Figure \ref{fig:exp-vii} reports the WE route choice strategies of each class. We can see all but Class 4 rank the four routes in the same order: Route 2 $>$ Route 1 $>$ Route 4 $>$ Route 3. Class 4 is different only because it does not use the two lower-ranked routes (4 and 3). This indicates that, in this setting, different travelers value the routes the same way but react to the valuations distinctively. 
Moreover, the class with a larger $r$ is more concentrated on the higher-ranked routes. Class 1 was almost indifferent among the four routes, whereas Class 4 completely abandoned Routes 4 and 3.
 
What is remarkable about the above result is that it illustrates WE is compatible with not only bounded rationality but also user heterogeneity. At a WE, some travelers may stick to one or very few routes because they are too rational (or cost-sensitive) to tolerate inferior routes. On the other end of the spectrum are those who are open to exploring all acceptable options with similar probabilities, even the routes with much worse valuations. Still more travelers would fall between the two extremes. The choices of the travelers, as diverse as they are, still result in the same network traffic conditions as if everyone behaves identically and rationally. Therefore, the {CumLog} model allows us to simultaneously accept that WE approximately exist in the real world at the aggregate level and reject the implausible implication that every traveler must be same and perfectly rational. 

\subsection{Non-uniqueness of WE strategies}
\label{sec:exp-v}

{
Our last experiment, performed on both the 3N4L and the Sioux-Falls network, is devised to demonstrate that CumLog may reach WE strategies of different properties when initialized from different points. Note that in both networks, the link cost function $u(\vx)$ is strictly monotone. According to \citet{sheffi1985urban}, under such conditions,  the link flow at WE is unique. Denoting the unique link flow at WE as $\vs^*$, the set of route choice strategies at WE can be represented as a polyhedron
\begin{equation}
    \sP^* = \{\vp^* \in \sP: \bar \mLambda \vp^* = \vx^*\}.
    \label{eq:polyhedron}
\end{equation}
A useful property for differentiating $\vp^* \in \sP^*$ is their entropy, which may be interpreted as its likelihood of realization given the information known to the modeler (e.g., satisfying the WE conditions) \citep{wilson2011entropy}.  Mathematically, entropy may be defined as:
$
    \Phi(\vp^*) = -\langle \diag(\vq) \vp^*, \log(\vp^*)\rangle.
$
Different $\vp^* \in \sP^*$ may use different set of routes. For example, an interior point of the polyhedron  \eqref{eq:polyhedron} uses every route that \emph{may} be used at WE, observing a ``no-route-left-behind" policy  \citep{bar1999route}. A standard WE algorithm, however, usually finds only a subset of all possible WE routes (i.e., it admits a solution on the boundary of the polyhedron  \eqref{eq:polyhedron}).
In what follows, we explore how the choice of the initial strategy $\vp^0$ affects the location and entropy of the equilibrium route choice strategy $\vp^*$ as well as the set of used routes.

\textbf{3N4L.} In the 3N4L network, the set of route choice strategies at WE can be written as
    \begin{equation*}
        \sP^* = \{\vp^*: \vp^* = [0.3 - \lambda, 0.4 - \lambda, 0.3 + \lambda, \lambda]^{\T}, \ \lambda \in [0, 0.3] \}.
    \end{equation*}
    To visualize the difference between these solutions, we plot the relation between the entropy of all $\vp^* \in \sP^*$ and their corresponding value of $\lambda$ in Figure \ref{fig:C}. It can be seen that the entropy of $\vp^*$ first increases and then decreases with $\lambda$; the entropy peaks at  $\lambda = 0.12$, which corresponds to $\bar \vp^* = [0.18, 0.28, 0.42, 0.12]^{\T}$, or the ``maximum-entropy" solution.

    \begin{figure}[ht]
    \centering
    \begin{minipage}[t]{0.4\textwidth}
    \centering
    \includegraphics[height=0.5\textwidth]{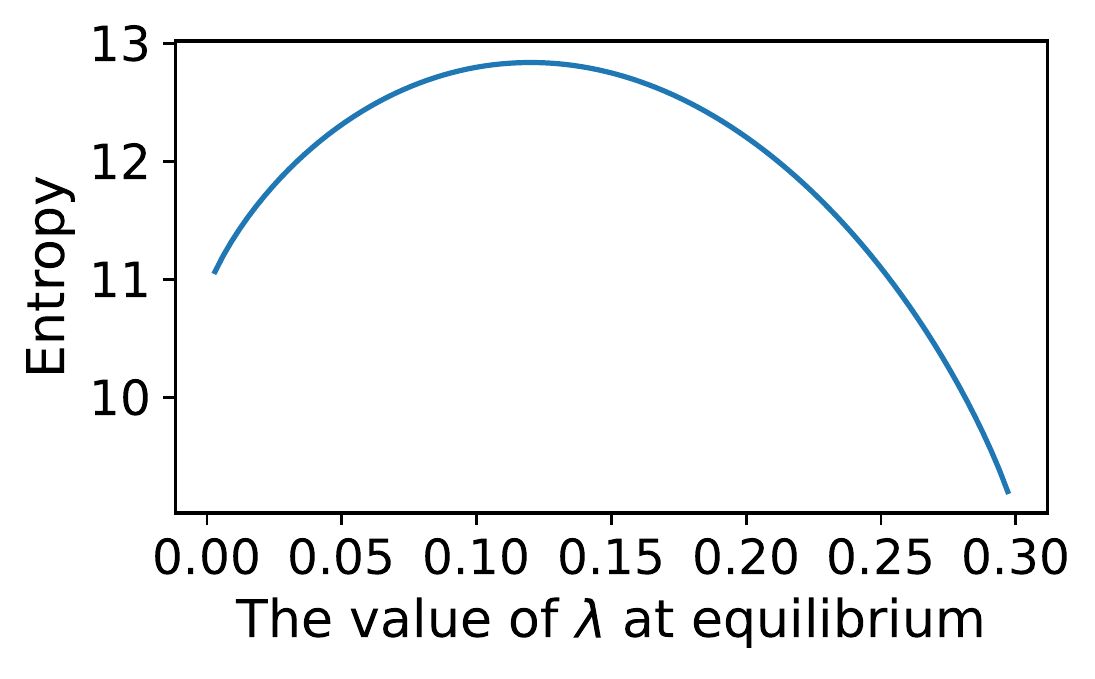}
    \captionof{figure}{Relation between $\Phi(\vp^*)$ for all $\vp^* \in \sP$ and their corresponding value of $\lambda$.}
    \label{fig:C}
    \end{minipage}
    \hspace{8pt}
    \begin{minipage}[t]{0.5\textwidth}
    \centering
    \includegraphics[height=0.4\textwidth]{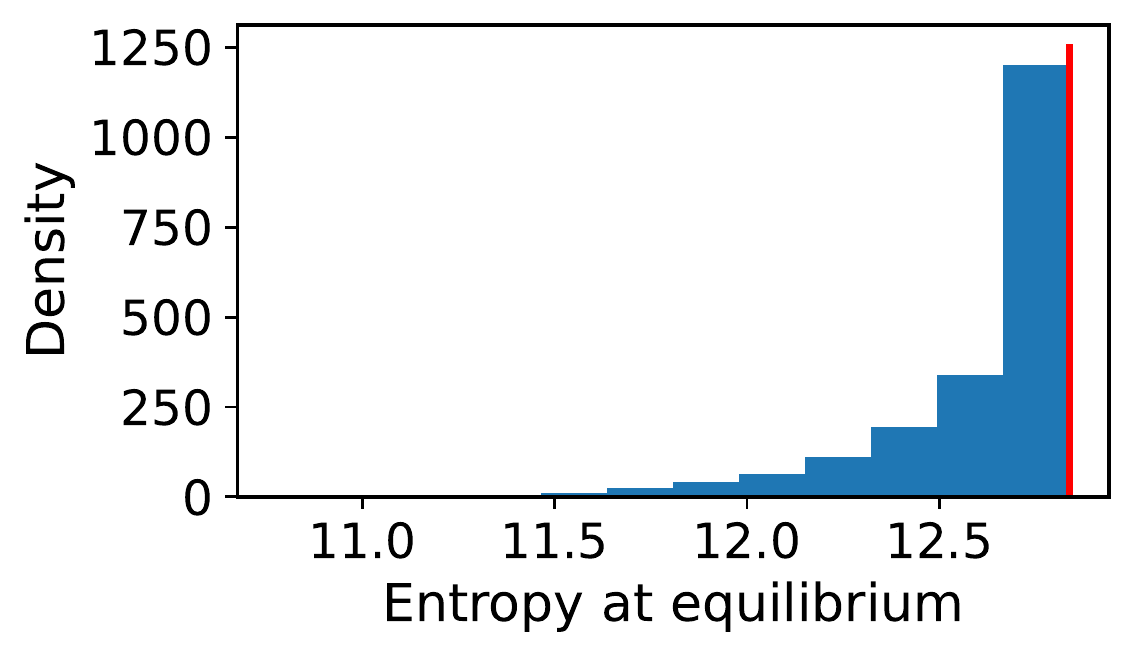}
    \captionof{figure}{Distribution of $\Phi(\vp^*)$ for $\vp^*$ reached by 2000 $\vs^0$. The red line highlights $\Phi(\vp^*)$ for $\vp^*$ reached by $\vs^0 = \vzero$.}
    \label{fig:B}
    \end{minipage}
    \end{figure}

    We then generate a random sample of 2000 $\vs^0$ from a normal distribution and run CumLog starting from an initial point corresponding to each $\vs^0$ in the sample.
    In each run, we set $r = 1$ and $\eta^t = 1$, and terminate it if the relative equilibrium gap is smaller than $10^{-6}$. Figure \ref{fig:B} plots the histogram of the entropy of $\vp^*$ reached by our model. If all the 2000 initial solutions end up at the same WE strategy, then the entropy values would be concentrated at a single point in the histogram. Instead, we find they spread out between a minimum of 10.77 and a maximum of 12.84. Moreover, the equilibrium solution with the highest entropy among the 2000 points corresponds to the initial solution $\vs^0 = \vzero$ (i.e., all valuations are initially set to zero, representing zero information on all routes). A closer look reveals that this solution is indeed the maximum entropy solution, i.e., $\bar \vp^* = [0.18, 0.28, 0.42, 0.12]^{\T}$.
}

{\textbf{Sioux-Falls.} We then investigate how $\vs^0$ affects the set of routes used at the WE reached by CumLog on Sioux-Falls. Using the methods by \citet{xie2019new} and \citet{tobin1988sensitivity}, we find the maximum and minimum numbers of routes to be used at WE are 770 and 557, respectively. This suggests a  WE algorithm may locate a solution whose number of used routes is anywhere between 557 and 770. }
To perform the test, we use a set of 1238 routes that contains all 770 routes that may be used by a WE strategy. 
The initial valuation $\vs^0$ is randomly sampled from a normal distribution, and the sample size is set to 2000. The all-zero initialization $\vs^0 = \vzero$ is employed as a benchmark. In all runs, we set $r = 2.5$, $\eta^t = 1$, and the maximum number of days to 1000. 

\begin{figure}[ht]
    \centering
    \begin{subfigure}[b]{0.42\textwidth}
        \includegraphics[width=0.8\columnwidth]{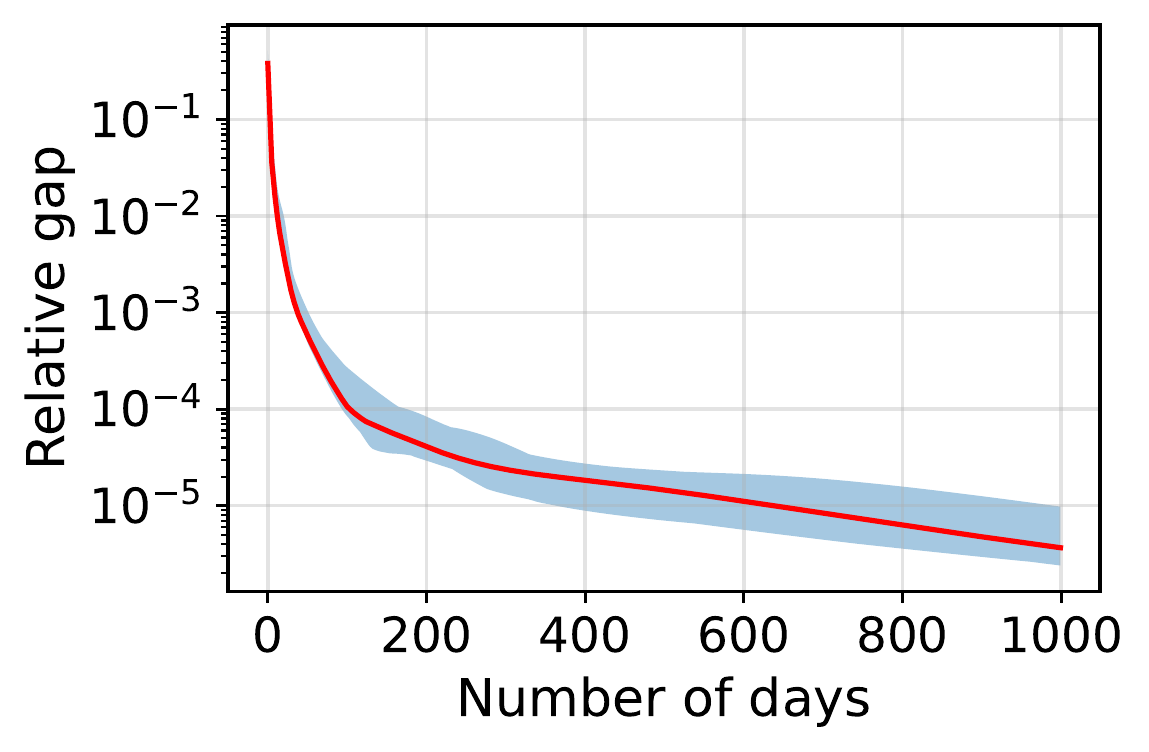}
        \caption{}
        \label{fig:vi-1}
    \end{subfigure}
    \begin{subfigure}[b]{0.42\textwidth}            
        \includegraphics[width=0.8\columnwidth]{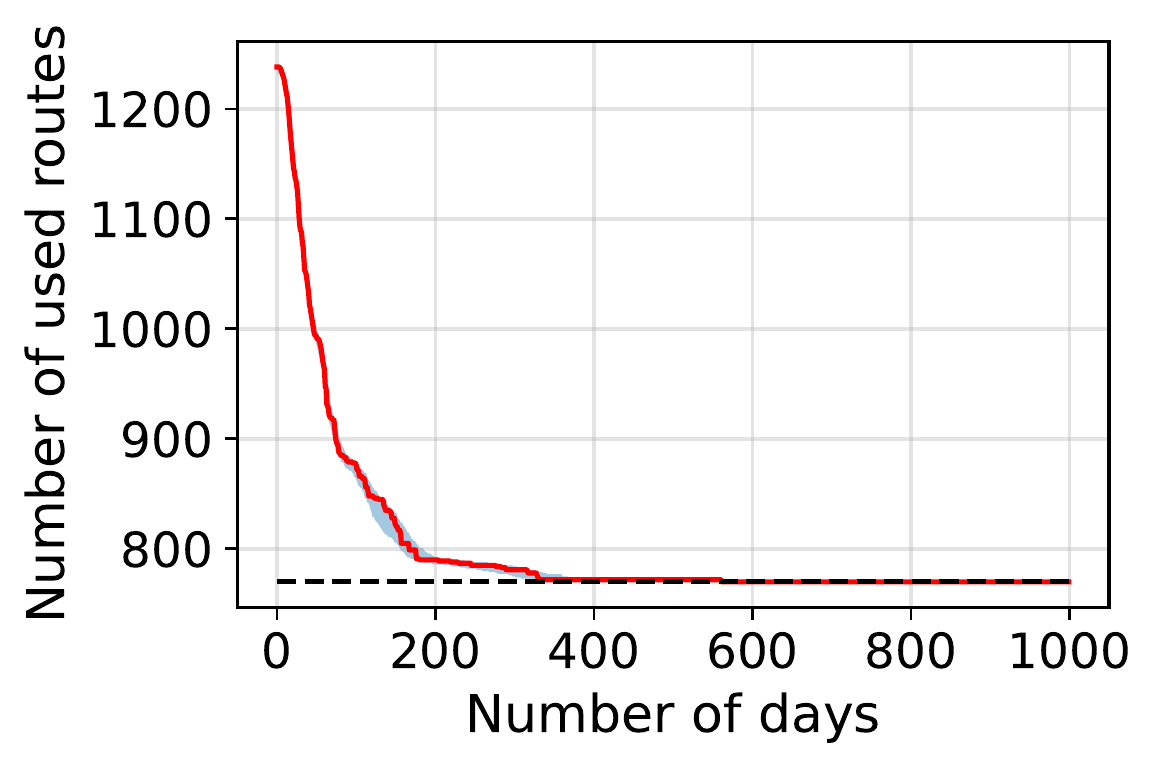}
        \caption{}
    \label{fig:vi-2}
    \end{subfigure}
    \caption{Convergence pattern of {CumLog} with different $\vs^0$. The blue shadow represents the collection of the convergence curves corresponding to 2000 random $\vs^0$; the red solid line represents the convergence curve corresponding to $\vs^0 = \vzero$. The black dashed line in (ii) represents the maximum number of routes that may be used by a WE strategy.}
    \label{fig:exp-vi}
\end{figure}

Figure \ref{fig:exp-vi} shows how the relative gap and the total number of routes actively used by travelers (a route is actively used if its probability of being selected is no less than $10^{-6}$) change with the number of days. We can see the convergence pattern is affected by the initial point, but the impact on the convergence rate is insignificant. The number of actively used paths also descends quickly to the lower bound. Interestingly, the CumLog model never ``accidentally" eliminates a potential WE route, nor does it ever fail to exclude routes that are not supposed to be there --- the number of used routes at the solution reached by CumLog is always 770. If this property can be established analytically, it will make the {CumLog} model a suitable algorithm for finding all WE routes.

\section{Conclusions}
\label{sec:conclusion}

As one of the most fundamental concepts in transportation science,  Wardrop equilibrium (WE)  was the cornerstone of countless large mathematical models that were built in the past six decades to plan, design, and operate transportation systems around the world. However, like Nash Equilibrium, its more famous cousin, WE has always had a somewhat flimsy behavioral foundation. The efforts to strengthen this foundation have largely centered on reckoning with the imperfections in human decision-making processes, such as the lack of accurate information, limited computing power, and sub-optimal choices.  This retreat from behavioral perfectionism was typically accompanied by a conceptual expansion of equilibrium. In place of WE, for example, transportation researchers had defined such generalized equilibrium concepts as stochastic user equilibrium (SUE) and boundedly rational user equilibrium  (BRUE). Invaluable as these alternatives are to enrich our understanding of equilibrium and to advance modeling and computational tools, they advocate for the abandonment of WE, predicated on its incompatibility with real behaviors.  Our study aims to demonstrate that giving up perfect rationality need not force a departure from WE.  To this end, we construct a day-to-day (DTD) dynamical model that mimics how travelers gradually adjust their valuations of routes, hence the choice probabilities, based on past experiences.  

Our model, called cumulative logit ({CumLog}), resembles the classical DTD models but makes a crucial change: whereas the classical models assume routes are valued based on the cost averaged over historical data, ours values the routes based on the cost accumulated. To describe route choice behaviors, the {CumLog} model only uses two parameters, one accounting for the rate at which the future route cost is discounted in the valuation relative to the past ones (the {proactivity} measure) and the other describing the sensitivity of route choice probabilities to valuation differences (the {exploitation} parameter).  
We prove that {CumLog} always converges to WE, regardless of the initial point, as long as the {proactivity} measure either shrinks to zero at a sufficiently slow pace as time proceeds or is held at a sufficiently small constant value.

By equipping WE with a route choice theory compatible with bounded rationality,  we uphold its role as a benchmark in transportation systems analysis.  Compared to the incumbents, our theory requires no modifications of WE as a result of behavioral accommodation.  This simplicity helps avoid the complications that come with a ``moving benchmark," be it caused by a multitude of equilibria or the dependence of equilibrium on certain behavioral traits.  Moreover, by offering a plausible explanation for travelers' preferences among equal-cost routes at WE,  the theory resolves the theoretical challenge posed by Harsanyi's instability problem.  Note that we lay no claim on the behavioral truth about route choices.  Real-world routing games take place in such complicated and ever-evolving environments that they may never reach a true stationary state, much less the prediction of a mathematical model riddled with a myriad of assumptions.  Indeed, a relatively stable traffic pattern in a transportation network may be explained as a point in a BRUE set, an SUE tied to properly calibrated behavioral parameters, or simply a crude WE reached by {CumLog}.  {While more empirical research is needed to vet our theory and compare it with existing ones, we should no longer write off WE simply because it adheres to behavioral perfectionism. }

Other than satisfying theoretical interests, the {CumLog} model may also be used as a prototype algorithm for solving routing games. On large networks, the convergence of the {CumLog} model may be relatively slow (see, for example, Figure \ref{fig:exp-vi}-(i)). This is hardly surprising given no higher-order information (e.g., the derivative of route cost) is employed.  However, if the goal is to find a good approximate solution quickly, then a {CumLog}-based algorithm can be quite competitive thanks to its simplicity (only route costs are needed), flexibility (easy extension to more general settings), and stability (relatively weak requirements for convergence). To be sure, the current {CumLog} model is still far away from a practical algorithm for WE routing games. Of the missing components, the most important is an efficient route-generation scheme. We leave the development of such an algorithm to future investigations.

Numerical experiments in Section \ref{sec:exp-v} revealed a few noteworthy phenomena. First, the {CumLog} model is capable of identifying all routes that may be used by any WE strategy.  Second, the WE strategy resulting from the dynamical process is closely related to the initial route valuation. In particular, it seems that an all-zero initial valuation leads to the entropy-maximizing (or most likely) WE strategy.  Does this mean the {CumLog} model can be used to guide the selection of a unique WE strategy, especially in locating the most likely one? We leave this question also to a future study. 

\begin{appendices}
\numberwithin{algorithm}{section}
\numberwithin{equation}{section}
\numberwithin{figure}{section}
\numberwithin{table}{section}

\section{Proofs of Main Results}

We start with necessary definitions and results related to vector norm, variational inequality problem (VIP), Fenchel conjugate, negative entropy function,  KL divergence, and logit model.

\smallskip
\textbf{Vector norm.} 

\begin{lemma}[\citet{boyd2004convex}]
\label{lm:l1-l2}
    For all $\va \in \sR^n$, we have $\|\va\|_2 \leq \|\va\|_1$.
\end{lemma}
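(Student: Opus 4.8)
The plan is to compare the squares of the two norms and then take square roots. First I would write $\|\va\|_1^2 = \big(\sum_{i=1}^{n} |\eva_i|\big)^2$ and expand the square, obtaining $\sum_{i=1}^{n} |\eva_i|^2 + 2\sum_{1 \le i < j \le n} |\eva_i|\,|\eva_j|$. The first sum is exactly $\|\va\|_2^2$, and every term in the second sum is a product of absolute values, hence nonnegative. Dropping those cross terms yields $\|\va\|_1^2 \ge \|\va\|_2^2$. Since both sides are nonnegative and $t \mapsto \sqrt{t}$ is nondecreasing on $[0,\infty)$, this gives $\|\va\|_1 \ge \|\va\|_2$, which is the assertion. (One also reads off the equality case for free: equality holds precisely when all the cross terms vanish, i.e.\ when $\va$ has at most one nonzero coordinate.)

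There is no genuine obstacle here: the only thing that must be checked is the nonnegativity of the discarded cross terms, and that is immediate because they are products of moduli. If one wished to avoid the explicit expansion, an equally short alternative would be to invoke the nesting of $\ell_p$ norms on $\sR^n$ — namely $\|\va\|_q \le \|\va\|_p$ whenever $0 < p \le q$ — specialized to $p = 1$, $q = 2$. But since the lemma is used later merely as a self-contained elementary building block, I would present the direct expansion argument, as it needs nothing beyond basic algebra and the monotonicity of the square root.
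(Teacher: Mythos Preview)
Your argument is correct and complete. The paper does not actually supply its own proof of this lemma; it merely states the inequality and attributes it to Boyd and Vandenberghe, so there is nothing to compare against beyond noting that your direct expansion of $\|\va\|_1^2$ is the standard elementary verification.
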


\begin{lemma}[\citet{marcotte1995convergence}]
\label{lm:bound}
    For any two $\va, \va' \in \sR^n$, we have $\langle\va, \va'\rangle \leq \|\va'\|_2^2 + \frac{1}{4} \cdot \|\va\|_2^2$.
\end{lemma}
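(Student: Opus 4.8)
The plan is to obtain the bound from a single non-negative perfect square, i.e., to recognize it as a vector-valued form of Young's inequality. The asymmetry of the coefficients --- a $1$ in front of $\|\va'\|_2^2$ but only $\tfrac14$ in front of $\|\va\|_2^2$ --- is exactly what comes out of completing the square in $\va'$ around $\tfrac12 \va$, so the whole argument reduces to expanding one such square.

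Concretely, since $\va' - \tfrac12 \va \in \sR^n$ and $\langle\cdot,\cdot\rangle$ is the inner product inducing $\|\cdot\|_2$, bilinearity gives
\[
0 \;\le\; \bigl\| \va' - \tfrac12 \va \bigr\|_2^2 \;=\; \|\va'\|_2^2 - \langle \va, \va' \rangle + \tfrac14 \|\va\|_2^2 ,
\]
and rearranging is precisely the asserted inequality $\langle \va, \va' \rangle \le \|\va'\|_2^2 + \tfrac14 \|\va\|_2^2$. No hypotheses beyond $\va, \va' \in \sR^n$ are used, and there is no case analysis.

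If one prefers not to expand a square, an equivalent route chains the Cauchy--Schwarz inequality $\langle \va, \va' \rangle \le \|\va\|_2 \|\va'\|_2$ with the scalar inequality $ab \le \tfrac14 a^2 + b^2$ --- itself just $(\tfrac12 a - b)^2 \ge 0$ --- applied with $a = \|\va\|_2$ and $b = \|\va'\|_2$. I do not expect any obstacle; the only thing to keep straight is pairing the $\tfrac14$-weighted term with $\va$ rather than with $\va'$ (equivalently, using Young's inequality in its asymmetric form, with the ``$2$'' absorbed into the square). This lemma is invoked downstream only to produce a clean constant when bounding inner-product terms such as $\langle c(\vp^t),\, \vp^* - \vp^{t+1}\rangle$, so this elementary estimate is all that is required.
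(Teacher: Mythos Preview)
Your proof is correct. The paper does not actually give its own proof of this lemma --- it simply cites \citet{marcotte1995convergence} and states the inequality as a known preliminary fact --- so there is nothing to compare against; your completing-the-square argument (equivalently, Cauchy--Schwarz followed by the scalar Young inequality) is the standard and expected justification.
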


\smallskip
\textbf{Variational inequality problem (VIP).} 

The following result characterizes the relationship between a convex program and a VIP.  
\begin{proposition}[\citet{kinderlehrer2000introduction}] 
\label{eg:optimization}
    Suppose that $\sA \subseteq \sR^n$ is a convex set and $f: \sA \to \sR$ is a continuously differentiable and convex function, then $\va^* \in \sA$ minimizes $f(\va)$ if and only if
    $
        \langle\nabla f(\va^*), \va - \va^*\rangle \geq 0
    $
    for all $\va \in \sA$.
\end{proposition}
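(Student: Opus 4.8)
The plan is to prove the two implications separately, using only elementary facts about convex sets and differentiable convex functions. For the ``only if'' direction, I would fix an arbitrary $\va \in \sA$ and exploit the convexity of $\sA$: the segment $\va^* + t(\va - \va^*)$ lies in $\sA$ for every $t \in [0,1]$. Introducing the scalar restriction $g(t) = f\big(\va^* + t(\va - \va^*)\big)$ on $[0,1]$, the fact that $\va^*$ minimizes $f$ over $\sA$ forces $g(t) \ge g(0)$ for all $t \in [0,1]$, hence the one-sided derivative obeys $g'(0^+) \ge 0$. The chain rule then identifies $g'(0^+) = \langle \nabla f(\va^*), \va - \va^* \rangle$, which is exactly the asserted inequality, and since $\va$ was arbitrary we are done.

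For the ``if'' direction, I would invoke the first-order characterization of convexity of a differentiable function: for all $\va, \va' \in \sA$ one has $f(\va) \ge f(\va') + \langle \nabla f(\va'), \va - \va' \rangle$. Specializing to $\va' = \va^*$ and combining with the hypothesis $\langle \nabla f(\va^*), \va - \va^* \rangle \ge 0$ yields $f(\va) \ge f(\va^*)$ for every $\va \in \sA$, i.e.\ $\va^*$ is a global minimizer over $\sA$. If desired, I would also include a one-line justification of that first-order inequality, which again reduces to the one-dimensional picture: $t \mapsto f(\va' + t(\va - \va'))$ is convex on $[0,1]$, so its difference quotient is nondecreasing in $t$, and letting $t \downarrow 0$ bounds the quotient below by the derivative at $0$, namely $\langle \nabla f(\va'), \va - \va' \rangle$.

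The argument is essentially routine, so I do not expect a genuine obstacle; the only mild subtlety worth flagging is that $f$ is assumed differentiable on the possibly non-open set $\sA$, so ``$\nabla f(\va^*)$'' and ``$g'(0^+)$'' should be understood as the relevant one-sided (directional) derivatives along feasible directions, or, as is standard, one simply assumes $f$ is the restriction to $\sA$ of a continuously differentiable function on an open neighborhood of $\sA$. Under either reading, the proof above goes through verbatim, giving the classical variational characterization of constrained minimizers of convex functions that underpins Lemma~\ref{lm:iteration}.
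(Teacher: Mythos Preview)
Your argument is correct and is the standard textbook proof of this classical result. Note, however, that the paper does not supply its own proof of this proposition: it is stated with a citation to \citet{kinderlehrer2000introduction} and used as a black-box tool in the proof of Lemma~\ref{lm:iteration}. So there is nothing to compare against beyond observing that what you wrote is essentially the argument one finds in that reference.
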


\smallskip
\textbf{Fenchel conjugate and Fenchel dual.}

The Fenchel conjugate (also known as convex conjugate or Fenchel transformation) of a scalar function is defined as follows.
\begin{definition}[Fenchel conjugate]
    Given a function $f: \sA \subseteq \sR^{n}  \to \bar \sR$, its Fenchel conjugate $f^*: \bar \sR^{n} \to \sR$ is defined as
    $f^*(\vb) = \sup_{\va \in \sA} \{\langle \va, \vb\rangle - f(\va)\}$.
\end{definition}

If $f^*(\vb)$ is the Fenchel conjugate of $f(\va)$, then the gradient of $f^*(\vb)$ and $f(\va)$ --- provided that they are both differentiable --- has a dual relationship as described in the following lemma.
\begin{lemma}[\citet{hiriart2004fundamentals}]
\label{lm:fenchel}
    If $\sA \subseteq \sR^{n}$ is a closed set and $f: \sA \to \bar \sR$ is a differentiable and convex function, then for any $\va \in \sR^n$ and $\vb \in \bar \sR^n$, the following two conditions are equivalent: (1) $\vb = \nabla f(\va)$ and (2) $\va = \nabla f^*(\vb)$. We call $a$ and $b$ each other's Fenchel dual. 
\end{lemma}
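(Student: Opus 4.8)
The plan is to derive the equivalence from the Fenchel--Young inequality together with the Fenchel--Moreau biconjugation theorem, using the closedness of $\sA$ to guarantee $f^{**} = f$. Throughout I would regard $f$ as extended to all of $\sR^n$ by setting $f(\va) = +\infty$ for $\va \notin \sA$, so that $f^*(\vb) = \sup_{\va \in \sR^n}\{\langle \va, \vb\rangle - f(\va)\}$ is an ordinary convex conjugate and the Fenchel--Young inequality $f(\va) + f^*(\vb) \geq \langle \va, \vb\rangle$ holds for all $\va, \vb$, directly from the definition of $f^*$; moreover equality holds if and only if $\va$ attains the supremum defining $f^*(\vb)$.

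The core of the argument is that each of (1) and (2) is equivalent to this Fenchel--Young equality. First I would show $(1) \Rightarrow$ equality: if $\vb = \nabla f(\va)$, then the concave map $\va' \mapsto \langle \va', \vb\rangle - f(\va')$ is stationary at $\va$, so by Proposition \ref{eg:optimization} (applied to its negative) $\va$ is a maximizer and the equality holds. Next, from equality I would deduce (2): for any $\vb'$, the bound $f^*(\vb') \geq \langle \va, \vb'\rangle - f(\va) = f^*(\vb) + \langle \va, \vb' - \vb\rangle$ shows $\va$ is a subgradient of $f^*$ at $\vb$, and since $f^*$ is differentiable there, $\partial f^*(\vb) = \{\nabla f^*(\vb)\}$, hence $\va = \nabla f^*(\vb)$. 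The reverse implication $(2) \Rightarrow (1)$ then follows by symmetry: applying the same two steps to $f^*$ (whose conjugate is $f^{**}$) turns $\va = \nabla f^*(\vb)$ into $\vb = \nabla f^{**}(\va)$, and the Fenchel--Moreau theorem --- valid because $\sA$ is closed and $f$ is convex and (being differentiable on $\sA$) continuous, hence proper and lower semicontinuous --- gives $f^{**} = f$, so $\vb = \nabla f(\va)$.

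The main obstacle I anticipate is not conceptual but the careful handling of the boundary of $\sA$: the stationarity step ``$\nabla g(\va) = \vzero \Rightarrow \va$ maximizes $g$'' and the identification of the subdifferential with the gradient are clean only at points in the relative interior of the relevant effective domain, so I would either restrict attention to such points (where convex functions are automatically subdifferentiable) or rephrase everything through subdifferentials, using that $\partial f(\va) = \{\nabla f(\va)\}$ wherever $f$ is differentiable. Checking lower semicontinuity of the extended $f$ --- the one genuine hypothesis needed for $f^{**} = f$ --- is the only other nontrivial point, and it is immediate from continuity of $f$ on the closed set $\sA$. Beyond this bookkeeping I do not expect any new ideas; the substance is entirely carried by Fenchel--Young and Fenchel--Moreau.
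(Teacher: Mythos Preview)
The paper does not prove this lemma; it is stated as a citation to \citet{hiriart2004fundamentals} and used as a black box. Your proposal via the Fenchel--Young equality and the Fenchel--Moreau biconjugation theorem is the standard textbook route to this result and is correct in outline, so there is nothing to compare against.
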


\smallskip
\textbf{Negative entropy and Kullback–Leibler divergence.}

The following lemma describes the Fenchel conjugate of the negative entropy function $\phi_w(\vp_w) = \langle \vp_w, \log(\vp_w) \rangle$.
\begin{lemma}[\citet{boyd2004convex}]
\label{eg:entropy-dual}
    The Fenchel conjugate of $\phi_w(\vp_w)$ is $\phi_w^*(\vq_w) = \langle \vone, \exp(\vq_w - 1)\rangle$, which implies $\nabla \phi_w(\vp_w) = \log(\vp_w) + 1$ and $\nabla \phi_w^*(\vq_w) = \exp(\vq_w - 1)$.
\end{lemma}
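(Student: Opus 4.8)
The statement is the standard pair of convex-analytic facts about the negative entropy, and the plan is to prove it by direct computation, exploiting the fact that $\phi_w$ is \emph{separable} across the coordinates of $\vp_w = (\evp_k)_{k \in \sK_w}$. Writing the conjugate from its definition on the domain $\sA = \sR_+^{|\sK_w|}$,
\[
\phi_w^*(\vq_w) = \sup_{\vp_w \in \sR_+^{|\sK_w|}} \Big\{ \langle \vp_w, \vq_w \rangle - \langle \vp_w, \log(\vp_w) \rangle \Big\} = \sum_{k \in \sK_w} \sup_{\evp_k \geq 0} \big\{ \evp_k \evq_k - \evp_k \log \evp_k \big\},
\]
so the vector problem collapses into $|\sK_w|$ identical scalar problems of the form $\sup_{\evp \geq 0} g(\evp)$ with $g(\evp) = \evp \evq - \evp \log \evp$ (adopting the convention $0 \log 0 = 0$).

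First I would solve the scalar problem. The function $g$ is strictly concave on $(0,\infty)$ since $g''(\evp) = -1/\evp < 0$, so its unique stationary point is a global maximizer. Setting $g'(\evp) = \evq - \log \evp - 1 = 0$ gives $\evp^\star = \exp(\evq - 1)$, and substituting back yields $g(\evp^\star) = \exp(\evq-1)\,\evq - \exp(\evq-1)(\evq-1) = \exp(\evq - 1)$. To confirm the supremum is attained at $\evp^\star$ rather than on the boundary, I would note $g(\evp) \to 0$ as $\evp \to 0^+$ and $g(\evp) \to -\infty$ as $\evp \to \infty$, so the interior critical value dominates whenever $\evq$ is finite. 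Summing over $k$ then gives $\phi_w^*(\vq_w) = \sum_{k \in \sK_w} \exp(\evq_k - 1) = \langle \vone, \exp(\vq_w - 1) \rangle$, which is the claimed conjugate. The gradient $\nabla \phi_w(\vp_w)$ follows by differentiating $\phi_w$ coordinatewise: $\partial_{\evp_k} \sum_{k'} \evp_{k'} \log \evp_{k'} = \log \evp_k + 1$, hence $\nabla \phi_w(\vp_w) = \log(\vp_w) + \vone$, matching the ``$\log(\vp_w) + 1$'' in the statement.

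For the gradient of the conjugate I would invoke the Fenchel duality relation already recorded in Lemma \ref{lm:fenchel}: since $\phi_w$ is convex (its Hessian is diagonal with positive entries $1/\evp_k$) and differentiable on the positive orthant, the conditions $\vq_w = \nabla \phi_w(\vp_w)$ and $\vp_w = \nabla \phi_w^*(\vq_w)$ are equivalent. Inverting the first relation $\vq_w = \log(\vp_w) + \vone$ gives $\vp_w = \exp(\vq_w - \vone)$, and therefore $\nabla \phi_w^*(\vq_w) = \exp(\vq_w - 1)$; alternatively, one can differentiate the closed form $\langle \vone, \exp(\vq_w - 1)\rangle$ directly. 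There is no substantive obstacle here --- the only points requiring care are purely technical: adopting the $0 \log 0 = 0$ convention so that $\phi_w$ is continuous up to the boundary, and verifying the boundary limits above so that the supremum is genuinely attained at the interior critical point and is finite for finite $\vq_w$. These checks guarantee that the coordinatewise maximization is legitimate and that the conjugate is the smooth function claimed.
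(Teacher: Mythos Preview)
Your proof is correct and is the standard direct computation for the Fenchel conjugate of the negative entropy. Note, however, that the paper does not supply its own proof of this lemma: it is simply cited from \citet{boyd2004convex} as a known result, so there is no paper-side argument to compare against.
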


The Kullback–Leibler (KL) divergence is widely used to measure the ``statistical distance" between probability distributions, defined as follows. 

\begin{definition}[KL divergence]
\label{def:kl}
    For two $\vp_w, \vp_w'$ in $\sP_w$, the KL divergence between $\vp_w$ and $\vp_w'$ is defined as 
    $$D_w(\vp_w, \vp_w') = \phi_w(\vp_w) - \phi_w(\vp_w') - \langle \nabla \phi_w(\vp_w'), \vp_w - \vp_w' \rangle = \langle \vp_w, \log \vp_w - \log \vp_w' \rangle.$$
\end{definition}
\begin{lemma}
\label{lm:kl-finite}
    For any $\vp_w, \vp'_w \in \sP_w$,  $D_{\phi_w}(\vp_w, \vp_w') < \infty$ if and only if $\supp(\vp_w) \subseteq \supp(\vp_w')$.
\end{lemma}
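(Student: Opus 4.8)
The plan is to reduce the claim to a term-by-term analysis of the finite sum
$$D_w(\vp_w, \vp_w') = \sum_{k \in \sK_w} \evp_k \cdot \bigl( \log \evp_k - \log \evp_k' \bigr),$$
which is the expansion of the expression in Definition \ref{def:kl}, read with the standard convention $0 \cdot \log 0 = 0$ (justified by $\lim_{x \to 0^+} x \log x = 0$; this is exactly the convention under which $\phi_w$, and hence $D_w$, is well defined). Since $\sK_w$ is a finite index set, $D_w(\vp_w, \vp_w')$ is a finite sum, so it is finite precisely when each of its summands is finite, provided the summands never simultaneously take the values $+\infty$ and $-\infty$. I would dispose of that caveat up front by showing that every summand is either a genuine real number or $+\infty$, so no indeterminate $\infty - \infty$ can occur.

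Next I would classify each index $k \in \sK_w$ into three cases. (a) If $k \notin \supp(\vp_w)$, i.e.\ $\evp_k = 0$, the summand is $0$ by the convention, regardless of the value of $\evp_k'$. (b) If $k \in \supp(\vp_w) \cap \supp(\vp_w')$, then $\evp_k, \evp_k' \in (0,1]$, so $\log \evp_k - \log \evp_k'$ is a finite real and the summand $\evp_k(\log \evp_k - \log \evp_k')$ is a finite real. (c) If $k \in \supp(\vp_w) \setminus \supp(\vp_w')$, then $\evp_k > 0$ while $\evp_k' = 0$, so $\log \evp_k' = -\infty$ and the summand equals $\evp_k \cdot (+\infty) = +\infty$. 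These three cases exhaust all possibilities and confirm that no summand is ever $-\infty$.

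With this classification the equivalence follows at once. If $\supp(\vp_w) \subseteq \supp(\vp_w')$, case (c) never occurs, so every summand is a real number and the finite sum $D_w(\vp_w, \vp_w')$ is finite. Conversely, if $\supp(\vp_w) \not\subseteq \supp(\vp_w')$, some index falls into case (c), contributing $+\infty$, while all remaining summands are real numbers, so $D_w(\vp_w, \vp_w') = +\infty$. The only thing demanding care here is the bookkeeping with extended-real arithmetic — recording the convention $0 \log 0 = 0$ and verifying that $-\infty$ never appears as a summand — and both are settled by the case analysis, so I anticipate no real obstacle; the lemma is essentially the classical observation that KL divergence is finite exactly on pairs of measures with the right absolute-continuity relation, specialized to finite supports.
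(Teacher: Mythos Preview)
Your proof is correct. The paper states this lemma without proof, treating it as a standard property of the KL divergence, so there is no paper argument to compare against; your term-by-term case analysis with the conventions $0\log 0=0$ and $p\log(p/0)=+\infty$ for $p>0$ is exactly the standard justification and fills in what the paper leaves implicit.
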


The following lemma is often known as the generalized triangle inequality for KL divergence.
\begin{lemma}[\citet{chen1993convergence}]
\label{lm:triangle}
    For any three points $\vp_w, \vp_w', \vp_w'' \in \sP_w$, we have
    \begin{equation}
        D_w(\vp_w, \vp_w') + D_w(\vp_w', \vp_w'') - D_w(\vp_w, \vp_w'') = \langle \vp_w - \vp_w', \nabla \phi_w(\vp_w'') - \nabla \phi_w(\vp_w') \rangle.
	\label{eq:triangle}
    \end{equation}
\end{lemma}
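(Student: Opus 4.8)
The plan is to prove the identity \eqref{eq:triangle} by direct substitution of the definition of the KL (Bregman) divergence given in Definition \ref{def:kl}, using only the fact that $D_w$ is the Bregman divergence generated by the convex function $\phi_w$; the explicit logarithmic form of $\nabla \phi_w$ recorded in Lemma \ref{eg:entropy-dual} is never needed, and indeed the identity is the generic three-point relation valid for any differentiable generating function. Writing $D_w(\vp_w, \vp_w') = \phi_w(\vp_w) - \phi_w(\vp_w') - \langle \nabla \phi_w(\vp_w'), \vp_w - \vp_w' \rangle$ and expanding each of the three divergences $D_w(\vp_w, \vp_w')$, $D_w(\vp_w', \vp_w'')$, and $D_w(\vp_w, \vp_w'')$ in this form yields nine terms on the left-hand side of \eqref{eq:triangle}, which I would then regroup into two families.

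First I would collect the six $\phi_w$-value terms. The combination $D_w(\vp_w, \vp_w') + D_w(\vp_w', \vp_w'') - D_w(\vp_w, \vp_w'')$ contributes $\phi_w(\vp_w) - \phi_w(\vp_w')$, then $\phi_w(\vp_w') - \phi_w(\vp_w'')$, and finally $-\phi_w(\vp_w) + \phi_w(\vp_w'')$; these telescope to zero, so every pointwise value of $\phi_w$ drops out. This cancellation is the crux of why the identity is exact: only the linear (inner-product) parts of the three Bregman divergences survive.

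Next I would assemble the three remaining inner-product terms, namely $-\langle \nabla \phi_w(\vp_w'), \vp_w - \vp_w' \rangle$, then $-\langle \nabla \phi_w(\vp_w''), \vp_w' - \vp_w'' \rangle$, and finally $+\langle \nabla \phi_w(\vp_w''), \vp_w - \vp_w'' \rangle$. By bilinearity the two terms carrying the gradient $\nabla \phi_w(\vp_w'')$ combine into $\langle \nabla \phi_w(\vp_w''), \vp_w - \vp_w' \rangle$, since the $\vp_w''$ contributions cancel. What remains is $\langle \nabla \phi_w(\vp_w'') - \nabla \phi_w(\vp_w'), \vp_w - \vp_w' \rangle$, which is exactly the right-hand side of \eqref{eq:triangle} after invoking symmetry of the inner product.

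There is no genuine mathematical obstacle here: the statement is the standard three-point identity for Bregman divergences, and the proof is pure bookkeeping. The only points requiring care are sign discipline when expanding the third divergence with its leading minus sign, and the mild domain caveat that an individual $D_w$ may equal $+\infty$ when the support condition of Lemma \ref{lm:kl-finite} fails, in which case the subtraction $D_w(\vp_w, \vp_w') + D_w(\vp_w', \vp_w'') - D_w(\vp_w, \vp_w'')$ is not meaningful. I would therefore state and apply the identity in the regime where all three divergences are finite, which is precisely the situation arising in the CumLog analysis: whenever $\vs^t < \infty$, the logit map \eqref{eq:logit-w} returns strictly positive probabilities, so the iterates lie in the relative interior of $\sP_w$, all supports coincide, and every term in \eqref{eq:triangle} is a genuine real number.
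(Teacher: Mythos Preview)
Your proposal is correct and is precisely the standard direct-substitution argument for the three-point Bregman identity; the paper itself does not supply a proof but simply cites \citet{chen1993convergence}, so there is nothing to compare against beyond noting that your derivation is the canonical one.
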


The KL divergence can be lower-bounded by the $\ell_1$ norm.
\begin{lemma}[\citet{beck2003mirror}]
\label{lm:bregman-ieq}
    For any $\vp_w, \vp_w' \in \sP_w$, we have $D_w(\vp_w, \vp_w') \geq 1/2 \cdot  \|\vp_w - \vp_w'\|_1^2$.
\end{lemma}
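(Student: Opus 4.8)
The statement is Pinsker's inequality on the simplex $\sP_w$ — equivalently, the $1$-strong convexity of the negative entropy $\phi_w$ with respect to $\|\cdot\|_1$, which by Definition \ref{def:kl} is exactly the claimed lower bound on $D_w$. The plan is to reduce the general inequality to a two-point inequality and then close the latter with a one-variable convexity argument. (If one is willing to cite strong convexity of $\phi_w$ directly, the inequality is immediate from Definition \ref{def:kl}; the route below makes the argument self-contained.)

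\textbf{Step 1 (binary case).} First I would prove that for scalars $a,b\in(0,1)$ the inequality $a\log(a/b)+(1-a)\log((1-a)/(1-b))\ge 2(a-b)^2$ holds. Letting $g(a)$ denote the difference of the two sides, a direct computation gives $g(b)=0$, $g'(b)=0$, and $g''(a)=1/(a(1-a))-4\ge 0$ since $a(1-a)\le 1/4$; hence $g$ is convex with a global minimum equal to $0$ attained at $a=b$, so $g\ge 0$. The boundary cases ($a$ or $b$ in $\{0,1\}$) are handled separately: if $b\in\{0,1\}$ and $a\ne b$ the left side is $+\infty$, and if $a\in\{0,1\}$ the inequality reduces to $-\log(1-b)\ge 2b^2$, whose difference has nonnegative derivative $(2b-1)^2/(1-b)\ge 0$ and vanishes at $b=0$.

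\textbf{Step 2 (lumping coordinates).} Given $\vp_w,\vp_w'\in\sP_w$, set $S=\{k\in\sK_w:(\vp_w)_k\ge(\vp_w')_k\}$, and put $a=\sum_{k\in S}(\vp_w)_k$ and $b=\sum_{k\in S}(\vp_w')_k$, so $a\ge b$. Since both vectors sum to $1$, a short computation gives $\|\vp_w-\vp_w'\|_1=\sum_{k\in S}((\vp_w)_k-(\vp_w')_k)+\sum_{k\notin S}((\vp_w')_k-(\vp_w)_k)=2(a-b)$. Applying the log-sum inequality $\sum_i x_i\log(x_i/y_i)\ge(\sum_i x_i)\log((\sum_i x_i)/(\sum_i y_i))$ — itself a consequence of Jensen's inequality applied to the convex map $t\mapsto t\log t$ — separately over $S$ and over its complement, and adding the two, yields $D_w(\vp_w,\vp_w')=\sum_{k}(\vp_w)_k\log((\vp_w)_k/(\vp_w')_k)\ge a\log(a/b)+(1-a)\log((1-a)/(1-b))$. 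Combining this with Step 1 gives $D_w(\vp_w,\vp_w')\ge 2(a-b)^2=\tfrac12\|\vp_w-\vp_w'\|_1^2$, which is the desired bound.

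\textbf{Main obstacle.} There is no real difficulty here — both ingredients are classical — so the work is purely bookkeeping: verifying the identity $\|\vp_w-\vp_w'\|_1=2(a-b)$, and dealing with coordinates where $(\vp_w')_k=0$, on which $D_w$ is $+\infty$ unless $(\vp_w)_k=0$ as well (cf. Lemma \ref{lm:kl-finite}), so that such terms and the degenerate instances of the log-sum inequality never threaten the bound (the right-hand side being at most $2$). Care is also needed to state the log-sum inequality in precisely the form used and to confirm the edge cases in Step 1 cover $a,b\in\{0,1\}$.
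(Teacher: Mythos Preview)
Your argument is correct; it is the classical proof of Pinsker's inequality via reduction to the binary case and the log-sum inequality. The paper itself does not prove this lemma at all --- it simply cites it as a known result from \citet{beck2003mirror} --- so there is nothing to compare against beyond noting that your self-contained route (binary convexity plus coordinate lumping) is exactly the standard one and that the edge cases you flag are handled correctly.
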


\begin{definition}[KL projection]\label{def:KLproj}
    For any $\vp_w \in \sR^{|\sK_w|}$, the KL projection of $\vp_w$ on $\sP_w$ is defined as\footnote{The KL projection is unique because the KL divergence $D_w(\vp_w', \vp_w)$ is strongly convex in the first argument $\vp_w'$.}
    \begin{equation}
        g_w(\vp_w)  = \argmin_{\vp_w' \in \sP_w} \, D_w(\vp_w', \vp_w).
    \end{equation}
\end{definition}

The next result asserts that the KL projection is analytic when $\sP_w$ is a probability simplex.
\begin{lemma}[\citet{beck2003mirror}]
\label{lm:kl-projection}
    For any $\vp_w \in \sR^{|\sP_w|}$, we have
    $g_w(\vp_w) = \vp_w / \|\vp_w\|_1$.
\end{lemma}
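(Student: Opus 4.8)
The plan is to reduce the claim to the non-negativity of the KL divergence between two probability vectors, which is already available via Pinsker's inequality (Lemma~\ref{lm:bregman-ieq}). Since the definition of $D_w(\cdot,\vp_w)$ involves $\log\vp_w$, the stated formula only makes sense when $\vp_w$ has strictly positive entries (more generally, on the face of $\sP_w$ supported by $\supp(\vp_w)$, where the same argument applies verbatim); I would assume $\vp_w > 0$ and set $\vp_w^\star = \vp_w/\|\vp_w\|_1$, which is a well-defined point of $\sP_w$ since $\|\vp_w\|_1 = \langle \vp_w, \vone\rangle > 0$.

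First I would record the key algebraic identity. Because $\log\evp_{w,i}^\star = \log\evp_{w,i} - \log(\|\vp_w\|_1)$ for each coordinate, i.e. $\log\vp_w = \log\vp_w^\star + \log(\|\vp_w\|_1)\,\vone$, every $\vp_w' \in \sP_w$ (for which $\langle \vp_w',\vone\rangle = 1$) satisfies
\begin{align*}
  D_w(\vp_w', \vp_w) &= \langle \vp_w', \log \vp_w' - \log \vp_w \rangle \\
  &= \langle \vp_w', \log \vp_w' - \log \vp_w^\star \rangle - \log(\|\vp_w\|_1)\,\langle \vp_w', \vone\rangle \\
  &= D_w(\vp_w', \vp_w^\star) - \log(\|\vp_w\|_1).
\end{align*}
Thus minimizing $D_w(\cdot,\vp_w)$ over $\sP_w$ is exactly the same problem as minimizing $D_w(\cdot,\vp_w^\star)$ over $\sP_w$, the additive constant $-\log(\|\vp_w\|_1)$ being irrelevant. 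I would then invoke Lemma~\ref{lm:bregman-ieq}: $D_w(\vp_w',\vp_w^\star) \ge \tfrac12\|\vp_w'-\vp_w^\star\|_1^2 \ge 0$, with equality throughout precisely when $\vp_w' = \vp_w^\star$; and $D_w(\vp_w^\star,\vp_w^\star) = 0$. Hence $\vp_w^\star \in \sP_w$ is the unique minimizer of $D_w(\cdot,\vp_w^\star)$, therefore also of $D_w(\cdot,\vp_w)$, over $\sP_w$, which is the assertion $g_w(\vp_w) = \vp_w/\|\vp_w\|_1$ (consistent with the uniqueness already noted after Definition~\ref{def:KLproj}).

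I do not expect a genuine obstacle here: the argument is short and uses only results already assembled in this appendix. The only point needing a line of care is the domain — one should either state $\vp_w>0$ explicitly or interpret the projection on the face of $\sP_w$ supported by $\supp(\vp_w)$, on which the identity and the Pinsker step are unchanged. An equivalent route, likely the one in the cited reference, is to form the Lagrangian of $\min\{\langle \vp_w', \log\vp_w' - \log\vp_w\rangle : \langle \vp_w',\vone\rangle = 1,\ \vp_w' \ge 0\}$, observe from stationarity that $\log\evp_{w,i}' + 1 - \log\evp_{w,i} + \mu = 0$ forces $\evp_{w,i}' \propto \evp_{w,i}$, fix the multiplier from the simplex constraint, and certify optimality via Proposition~\ref{eg:optimization}; I would present the identity-based proof as the main argument since it dispenses with multipliers entirely.
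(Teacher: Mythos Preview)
Your argument is correct and clean. Note, however, that the paper does not supply its own proof of this lemma: it is stated with a citation to \citet{beck2003mirror} and used as a black box. So there is no ``paper's proof'' to compare against; you have simply filled in a standard fact that the authors chose to quote.

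For what it is worth, your identity-based route is slightly different in spirit from the way this is usually derived in the mirror-descent literature. The cited source (and most treatments) obtain the result via the Lagrangian/KKT calculation you sketch at the end: stationarity gives $\log \evp_{w,i}' = \log \evp_{w,i} + \text{const}$, hence $\vp_w' \propto \vp_w$, and the simplex constraint fixes the constant. Your approach instead absorbs the normalization into an additive constant in the objective and then appeals to nonnegativity of $D_w$ (via Lemma~\ref{lm:bregman-ieq}) to read off the minimizer directly. Both are equally short; yours has the minor advantage of avoiding multipliers and reusing a lemma already on hand, while the Lagrangian route is what a reader following the citation would find. Your caveat about needing $\vp_w>0$ (or restricting to the face indexed by $\supp(\vp_w)$) is well taken and worth stating, since the paper's phrasing ``for any $\vp_w \in \sR^{|\sP_w|}$'' is loose on this point.
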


\smallskip
\textbf{Logit model.}

\begin{lemma}
[\citet{ben1985discrete}]
\label{lm:difference-logit}
    For any $r > 0$, given two vectors $\vs_w, \vs_w' \in \bar \sR^{|\sK_w|}$, 
    \begin{equation}
        \frac{\exp(-r \cdot \vs_w)}{\|\exp(-r \cdot \vs_w\|_1} = \frac{\exp(-r \cdot \vs_w')}{\|\exp(-r \cdot \vs_w'\|_1}
    \end{equation}
if and only if there exists $\eva_w \in \sR$ such that $\vs_w' = \vs_w + \eva_w \cdot \vone_w$.
\end{lemma}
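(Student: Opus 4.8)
Since both implications are elementary consequences of the multiplicative structure of the exponential, the plan is to carry out the two verifications directly, taking care only of the extended-real entries. First I would fix the conventions $\exp(-r\cdot(+\infty)) = 0$ and observe that, for the two normalized vectors in the statement to be well defined, the normalizers $Z := \|\exp(-r\cdot\vs_w)\|_1$ and $Z' := \|\exp(-r\cdot\vs_w')\|_1$ must lie in $(0,\infty)$; equivalently, no component of $\vs_w$ or $\vs_w'$ equals $-\infty$ and not every component equals $+\infty$. Throughout I write $\evs_k$ and $\evs_k'$ for the components of $\vs_w$ and $\vs_w'$, $k\in\sK_w$.

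For the ``if'' direction I would assume $\vs_w' = \vs_w + \eva_w\vone_w$ with $\eva_w\in\sR$ and observe that componentwise $\exp(-r\evs_k') = \exp(-r\eva_w)\,\exp(-r\evs_k)$, an identity that also holds when $\evs_k = +\infty$ (both sides vanish). Hence $\exp(-r\cdot\vs_w') = \exp(-r\eva_w)\cdot\exp(-r\cdot\vs_w)$ as vectors, so $Z' = \exp(-r\eva_w)\,Z$, and dividing each side by its $\ell_1$ norm cancels the strictly positive scalar $\exp(-r\eva_w)$, giving equality of the normalized vectors.

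For the ``only if'' direction I would assume the normalized vectors coincide. Since $Z>0$, there is at least one index $k$ with $\evs_k < +\infty$; for such $k$ the common normalized component is strictly positive, which forces $\evs_k' < +\infty$, and equating the $k$-th components and rearranging gives $\exp\!\big(-r(\evs_k' - \evs_k)\big) = Z'/Z$. The right-hand side does not depend on $k$, so taking logarithms yields $\evs_k' - \evs_k = -\tfrac{1}{r}\log(Z'/Z) =: \eva_w$ for every finite-component index $k$. For an index $k$ with $\evs_k = +\infty$, the common normalized component equals $0$, which forces $\evs_k' = +\infty$ as well, so $\evs_k' = \evs_k + \eva_w$ holds trivially there too. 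Together these give $\vs_w' = \vs_w + \eva_w\vone_w$.

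The only real care is the bookkeeping around infinite valuations, and I expect that to be the (mild) main obstacle: one must check that a component is $+\infty$ on one side precisely when it is on the other --- which follows because the normalized components are equal and the normalizers are finite --- and that the offset $\eva_w$ read off from the finite components is automatically consistent with the infinite ones. With those checks the argument is complete; for the purely finite-dimensional case one may alternatively defer to \citet{ben1985discrete}.
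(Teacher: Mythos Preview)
Your argument is correct. The paper does not supply its own proof of this lemma; it is simply quoted as a known property of the logit model with a citation to \citet{ben1985discrete}. Your direct verification of both directions, together with the careful handling of the $+\infty$ entries and the implicit well-definedness assumption $Z,Z'\in(0,\infty)$, is exactly the elementary computation one would expect and fills in what the paper leaves to the reference.
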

Lemma \ref{lm:difference-logit} indicates that logit choices only depend on the difference in disutilities. 

\begin{proposition}
\label{lm:projection-logit}
    Given any $r > 0$, $\vs \in \bar \sR^{|\sK|}$, and $\vp = q_r(\vs)$, we have $\vp_w = g_w(\nabla \phi_w^*(-r \cdot \vs_w))$ for all $w \in \sW$, where $g_w(\cdot)$ is KL projection (see Definition \ref{def:KLproj}). 
\end{proposition}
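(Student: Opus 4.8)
The plan is to unwind both sides of the asserted identity explicitly and check that they agree. First I would apply Lemma \ref{eg:entropy-dual}, which supplies the closed form $\nabla \phi_w^*(\vq_w) = \exp(\vq_w - 1)$ (componentwise); substituting $\vq_w = -r \cdot \vs_w$ gives $\nabla \phi_w^*(-r \cdot \vs_w) = \exp(-r \cdot \vs_w - 1)$, a nonnegative vector in $\bar\sR^{|\sK_w|}$ whose $k$-th entry is $\exp(-r\cdot \evs_k - 1)$ (interpreted as $0$ when $\evs_k = +\infty$). This vector need not lie in the simplex $\sP_w$, which is exactly why the KL projection $g_w$ is then applied to it.

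Next I would invoke Lemma \ref{lm:kl-projection}, the analytic form of the KL projection onto $\sP_w$, namely $g_w(\vp_w) = \vp_w / \|\vp_w\|_1$. Applying this to the vector from the previous step, the common scalar factor $\exp(-1)$ cancels between numerator and $\ell_1$-norm, leaving
\[
g_w\bigl(\nabla \phi_w^*(-r \cdot \vs_w)\bigr) = \frac{\exp(-r \cdot \vs_w)}{\|\exp(-r \cdot \vs_w)\|_1}.
\]
Finally I would compare this with the definition of the logit choice function $q_r$ in Equation \eqref{eq:logit-w}: for $k \in \sK_w$, the $k$-th component of $q_r(\vs)$ is precisely $\exp(-r\cdot\evs_k)/\sum_{k' \in \sK_w}\exp(-r\cdot\evs_{k'})$, which is the $k$-th component of the right-hand side above. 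Since $\vp = q_r(\vs)$, this yields $\vp_w = g_w(\nabla \phi_w^*(-r \cdot \vs_w))$ for every $w \in \sW$, as claimed.

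The argument is essentially a two-line composition of two already-established identities, so there is no substantive obstacle. The only point needing minor care is the extended-real setting: when some $\evs_k = +\infty$ the corresponding entry of $\nabla \phi_w^*(-r \cdot \vs_w)$ is $0$, and one must ensure $\|\exp(-r \cdot \vs_w)\|_1 > 0$ so that the KL projection is well-defined. This holds because $\sK_w \neq \emptyset$ and because $\vp = q_r(\vs) \in \sP_w$ already presupposes a strictly positive denominator in \eqref{eq:logit-w}; with this caveat the chain of equalities goes through verbatim.
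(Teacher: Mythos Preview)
Your proposal is correct and follows essentially the same chain of equalities as the paper's proof: apply Lemma \ref{eg:entropy-dual} to compute $\nabla\phi_w^*$, apply Lemma \ref{lm:kl-projection} to normalize, cancel the common factor, and identify the result with the definition \eqref{eq:logit-w} of $q_r$. The only cosmetic difference is that the paper attributes the cancellation of the $\exp(-1)$ factor to Lemma \ref{lm:difference-logit} (invariance of the logit map under additive shifts), whereas you observe the cancellation directly; both justifications are equivalent here.
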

\begin{proof}
    For all $w \in \sW$, we have
    \begin{equation}
        g_w(\nabla \phi_w^*(-r \cdot \vs_w)) = g_w(\exp(-r \cdot \vs_w - 1)) = \frac{\exp(- r \cdot \vs_w - \vone_w)}{\|\exp(- r \cdot \vs_w - \vone_w)\|_1} = \frac{\exp(- r \cdot \vs_w)}{\|\exp(- r \cdot \vs_w)\|_1} = \vp_w,
    \end{equation}
    where the three equalities follow from, in the same order, from Lemmas \ref{eg:entropy-dual}, \ref{lm:kl-projection}, and \ref{lm:difference-logit}.
\end{proof}

Proposition \ref{lm:projection-logit} states that given any disutility vector $\vs \in \bar \sR^{|\sK|}$, the corresponding logit choice $\vp = q_r(\vs)$ can be obtained by KL-projecting the Fenchel dual of $-r \cdot \vs$ on $\sP$. This property is widely used in optimization, convex analysis, and information theory. 

\subsection{Proof of Proposition \ref{prop:property}}
\label{app:property}

The following lemma gives sufficient conditions for checking cocoercivity. 
\begin{lemma}[\citet{marcotte1995convergence}]
\label{lm:cocoercive}
    Suppose that a function $f: \sA \subseteq \sR^n \to \sR^n$ is  twice continuously differentiable and $L$-Lipschitz continuous. If both $\nabla f(\va)$ and $(\nabla f(\va))^2$ are positive semi-definite on $\sA$ for all $\va \in \sA$ , then $f$ is $\mu$-cocoercive ($\mu = 1/4L$) on $\sA$.
\end{lemma}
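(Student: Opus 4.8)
The plan is to establish the cocoercivity inequality $\langle f(\va') - f(\va),\, \va' - \va\rangle \ge \tfrac{1}{4L}\|f(\va') - f(\va)\|_2^2$ for arbitrary $\va,\va'\in\sA$ (assuming, as in the intended application, that the segment $[\va,\va']$ lies in the convex domain $\sA$) by reducing it to a pointwise inequality on the Jacobian and then integrating along that segment. Writing $\vd = \va' - \va$ and $J(t) = \nabla f(\va + t\vd)$, the fundamental theorem of calculus gives $f(\va') - f(\va) = \int_0^1 J(t)\vd\, dt =: \int_0^1 \vg(t)\, dt$, while $\langle f(\va')-f(\va),\vd\rangle = \int_0^1 \langle J(t)\vd,\vd\rangle\, dt$. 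The whole argument then rests on the single pointwise claim that, for every $\va\in\sA$ and every $\vv\in\sR^n$,
\begin{equation}
\|\nabla f(\va)\vv\|_2^2 \le 4L\,\langle \nabla f(\va)\vv, \vv\rangle. \label{eq:ptwise}
\end{equation}

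To prove \eqref{eq:ptwise} I would decompose $J = \nabla f(\va)$ into its symmetric and antisymmetric parts, $J = S + A$ with $S = \tfrac12(J + J^\T)$ and $A = \tfrac12(J - J^\T)$. The key observation is that the symmetric part of $J^2$ equals $S^2 + A^2$, since the cross terms $SA + AS$ are antisymmetric and drop out. Hence the hypothesis that $(\nabla f)^2$ is positive semi-definite is exactly the statement $\langle(S^2+A^2)\vv,\vv\rangle = \|S\vv\|_2^2 - \|A\vv\|_2^2 \ge 0$, i.e.\ $\|A\vv\|_2 \le \|S\vv\|_2$ for all $\vv$ --- the antisymmetric part is dominated by the symmetric part. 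Monotonicity (the hypothesis $\nabla f \succeq 0$) gives $S\succeq 0$, and $L$-Lipschitz continuity gives $\|J\|_2\le L$, hence $\|S\|_2 \le L$, so $0 \preceq S \preceq L\mI$ and therefore $\|S\vv\|_2^2 = \langle S^2\vv,\vv\rangle \le L\,\langle S\vv,\vv\rangle$. Combining, $\|J\vv\|_2 \le \|S\vv\|_2 + \|A\vv\|_2 \le 2\|S\vv\|_2$, while $\langle J\vv,\vv\rangle = \langle S\vv,\vv\rangle$ because $\langle A\vv,\vv\rangle = 0$; squaring the first estimate and inserting the second yields $\|J\vv\|_2^2 \le 4\|S\vv\|_2^2 \le 4L\langle S\vv,\vv\rangle = 4L\langle J\vv,\vv\rangle$, which is \eqref{eq:ptwise}.

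Finally I would integrate. Applying \eqref{eq:ptwise} pointwise to $\vg(t) = J(t)\vd$ gives $\|\vg(t)\|_2^2 \le 4L\langle \vg(t),\vd\rangle$. Since the interval has unit length, the triangle inequality for integrals together with Cauchy--Schwarz yields $\big\|\int_0^1 \vg(t)\,dt\big\|_2^2 \le \int_0^1 \|\vg(t)\|_2^2\, dt$. Chaining these, $\|f(\va')-f(\va)\|_2^2 \le \int_0^1\|\vg(t)\|_2^2\,dt \le 4L\int_0^1\langle\vg(t),\vd\rangle\,dt = 4L\,\langle f(\va')-f(\va),\,\va'-\va\rangle$, which is precisely $\tfrac{1}{4L}$-cocoercivity.

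I expect the main obstacle to be the pointwise estimate \eqref{eq:ptwise}, and within it the recognition that the second hypothesis $(\nabla f)^2\succeq 0$ is equivalent to $\|A\vv\|_2\le\|S\vv\|_2$; this is what tames the otherwise unmanageable asymmetric part of the Jacobian and is also the source of the factor $4$ (versus the factor $1$ in the symmetric Baillon--Haddad case). A secondary point to handle with care is that $C^1$ smoothness already suffices for the integral representation, so the assumed twice-continuous differentiability is only needed to keep $\nabla f$ regular enough for the hypotheses to be stated cleanly; and I would note that Lemma \ref{lm:bound} provides an alternative, purely inequality-based route to \eqref{eq:ptwise} should the spectral argument $S^2 \preceq LS$ be judged too heavy.
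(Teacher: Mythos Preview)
The paper does not prove this lemma at all: it is stated as a citation to \citet{marcotte1995convergence} and used as a black box in the proof of Proposition~\ref{prop:property}. So there is no ``paper's own proof'' to compare against.

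That said, your argument is correct and is essentially the one given in the cited reference. The key step --- recognising that positive semi-definiteness of the symmetric part of $(\nabla f)^2$ is equivalent to $\|A\vv\|_2 \le \|S\vv\|_2$ for the symmetric/antisymmetric decomposition $J = S + A$ --- is exactly right, and the spectral inequality $S^2 \preceq LS$ (valid since $0 \preceq S \preceq L\mI$) cleanly produces the factor~$4$. The integration step via Jensen's inequality on the unit interval is standard. One small remark: your assumption that the segment $[\va,\va']$ lies in $\sA$ is indeed needed and is implicit in the intended application (where $\sA$ is the convex set $\sP$ or $\sX$); you are right to flag it.
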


We are now ready to prove Proposition \ref{prop:property}.

\begin{proof}[Proof of Proposition \ref{prop:property}]

    \text{Property (i)}.  As $\nabla c(\vp) = \bar \mLambda^{\T} \nabla u(\bar \mLambda \vp) \mLambda$, where $\bar \mLambda = \mLambda \diag(\mSigma^{\T} \vd)$, the symmetry (resp., positive semi-definiteness) of $\nabla u(\vx)$ implies of the symmetry (resp., positive semi-definiteness) of $c(\vp)$. Hence,  by Lemma \ref{lm:cocoercive}, we  claim $c(\vp)$ is $1/4L$-cocoercive  ($L = \max_{\vp \in \sP} \|\nabla c(\vp)\|_2$) under Assumptions \ref{ass:differentiable} and \ref{ass:cocoercive} whenever $\nabla u(\vx)$ is symmetric. If $\nabla u(\vx)$ is asymmetric, Lemma \ref{lm:cocoercive} still implies the $1/4H$-cocoercivity of $u(\vx)$ on $\sX = \{\vx \in \sR_+^{|\sE|}: \vx = \bar \mLambda \vp, \ \vp \in \sP\}$, where $H = \max_{\vx \in \sX} \|u(\vx)\|_2$. Thus, given any two $\vp, \vp' \in \sP$ and setting $\vx = \bar \mLambda \vp$ and $\vx' = \bar \mLambda \vp'$, we have
    \begin{equation}
    \begin{split}
        \langle c(\vp) - c(\vp'), \vp - \vp' \rangle &\geq \frac{1}{d_{\max}} \cdot
        \langle u(\vx) - u(\vx'), \vx - \vx' \rangle \\
        &\geq \frac{1}{4d_{\max} \cdot H} \cdot \|u(\vx) - u(\vx')\|_2^2 \geq \frac{1}{4d_{\max} \cdot H \cdot \|\mLambda\|_2^2} \cdot \|c(\vp) - c(\vp')\|_2^2,
    \end{split}
    \end{equation}
    where $d_{\max} = \max_{w \in \sW} \{\evd_w\}$ gives the first inequality, the cocoercivity of $u(\vx)$ leads to the second inequality, and the third follows from classical results in linear algebra. %
    Hence,  $c(\vp)$ is  $1/4L$-cocoercive if we set $L = d_{\max} \cdot H \cdot \|\mLambda\|_2^2$.
    
    \text{Property (ii).} Given any $\vp \in \sP$ and $\vp^* \in \sP^*$, we decompose $\langle c(\vp), \vp - \vp^*\rangle$ to
    \begin{equation}
        \langle c(\vp), \vp - \vp^*\rangle = \langle c(\vp) - c(\vp^*), \vp - \vp^*\rangle + \langle c(\vp^*), \vp - \vp^*\rangle,
      \label{eq:decomposition}
    \end{equation}
    where the non-negativity of the first and second terms is guaranteed by, respectively, the monotonicity of $c(\vp)$ and Proposition \ref{prop:ue-vi}. Thus, we have $\langle c(\vp), \vp - \vp^*\rangle \geq 0$. 
  
    \textit{Sufficiency}. If $\vp \in \sP^*$, then Proposition \ref{prop:ue-vi} implies $\langle c(\vp), \vp^* - \vp \rangle \geq 0$, combined with  $\langle c(\vp), \vp - \vp^*\rangle \geq 0$ then leads to $\langle c(\vp), \vp - \vp^*\rangle = 0$. 
    
    \textit{Necessity}. If $\langle c(\vp), \vp - \vp^*\rangle = 0$, then both terms on the right-hand side of Equation \eqref{eq:decomposition} must be zero as they are both non-negative. Property (i) then gives
    $
        0 = \langle c(\vp) - c(\vp^*), \vp - \vp^*\rangle \geq  1/4L \cdot \|c(\vp) - c(\vp^*) \|_2^2,
    $
    which  implies $c(\vp) = c(\vp^*)$.
    Denoting $\vc^* = c(\vp^*)$ and $b_w^* = \min_{k' \in \sK_w} \evc_k^*$ for all $w \in \sW$, we next claim $\evc_k^* > b_w^* \Rightarrow \evp_k = 0$  for all $w \in \sW$ and $k \in \sK_w$.  To see this, suppose there exist $w' \in \sW$ and $k' \in \sK_{w'}$ such that $c^*_{k'} = b_{w'}^* + \delta$ ($\delta > 0$) and $\evp_{k'} > 0$, then we have $\langle \vc^*_{w'}, \vp_{w'} \rangle \geq  b_{w'} + \delta \cdot \evp_{k'} \rightarrow \langle \vc^*, \vp \rangle  = \sum_{w \in \sW} \langle \vc_w^*, \vp_w \rangle \geq \sum_{w \in \sW} b_w^* + \delta \cdot \evp_{k'}$. Noting that $\langle \vc^*, \vp^*\rangle = \sum_{w \in \sW} b_w^*$, we arrive at $\langle \vc^*, \vp \rangle > \langle \vc^*, \vp^*\rangle$, which contradicts $\langle \vc^*, \vp - \vp^*\rangle  = 0$.  Finally, since $c(\vp) = c(\vp^*)$, it follows  $c_k(\vp) > b_w \Rightarrow \evp_k = 0$  for all $w \in \sW$ and $k \in \sK_w$, where $b_w = \min_{k' \in \sK_w} c_k(\vp)
    $.  This confirms that $\vp \in \sP^*$.
    
    \text{Property (iii)}. Given any $\vp^* \in \sP^*$, Property (ii) implies $\sP^* = \{\vp \in \sP: \langle c(\vp), \vp - \vp^*\rangle = 0\}$, which in turn implies $\sP^*$ is closed and hence compact. Given any two $\vp^*, (\vp^*)' \in \sP^*$, we next prove $\bar \vp = \lambda \cdot \vp^* + (1 - \lambda) \cdot (\vp^*)' \in \sP^*$ fro any $\lambda \in [0, 1]$. By Property (ii), we have
    \begin{equation}
        \begin{cases}
            \displaystyle \langle c(\bar \vp), \vp^* - (\vp^*)'\rangle = \frac{1}{\lambda} \cdot \langle c(\bar \vp), \bar \vp - (\vp^*)'\rangle \geq 0, \\[5pt]
            \displaystyle \langle c(\bar \vp), \vp^* - (\vp^*)'\rangle = -\frac{1}{1 - \lambda} \cdot \langle c(\bar \vp), \bar \vp - \vp^*\rangle \leq 0.
        \end{cases}
    \end{equation}
    Combing the two relations yields $\langle c(\bar \vp), \vp^* - (\vp^*)'\rangle = 0\rightarrow \langle c(\bar \vp), \bar \vp - (\vp^*)'\rangle = 0$. Thus, $\sP^*$ is also convex.
\end{proof}

\subsection{Proof of Lemma \ref{lm:iteration}}
\label{app:iteration}
\begin{proof}[Proof of Lemma \ref{lm:iteration}]
    Denoting $\vs^{t + 1} =  \vs^t + \eta^t \cdot c(\vp^t)$, Proposition \ref{lm:projection-logit} implies $\tilde \vp = q_r(\vs^{t + 1})$ if and only if
    \begin{equation}
    \begin{split}
        \tilde \vp_w &= \argmin_{\vp_w \in \sP_w} \, D_w(\vp_w, \nabla \phi_w^*(-r \cdot \vs_w^{t + 1})) = \argmin_{\vp_w \in \sQ_w^t} \, D_w(\vp_w, \nabla \phi_w^*(-r \cdot \vs_w^{t + 1})), \quad \forall w \in \sW,
        \label{eq:fi-1}
    \end{split}
    \end{equation} 
    where the second equality holds because $D_w(\vp_w, \nabla \phi_w^*(-r \cdot \vs_w^{t + 1}))$ is finite if and only if $\vp_w \in \sQ_w^t$ (see Lemma \ref{lm:kl-finite}).  Then Proposition \ref{eg:optimization} implies that $\tilde \vp_w$ solves the KL projection problem \eqref{eq:fi-1} if and only if
    \begin{equation}
        \langle \log(\tilde \vp_w) + 1 - \log(\nabla \phi_w^*(-r \cdot  \vs_w^{t + 1})),  \vp_w - \tilde \vp_w\rangle \geq 0, \quad \forall \vp_w \in \sQ_w^t. 
        \label{eq:i-1}
    \end{equation}
    Consider the term $\log(\vp_w^{t + 1}) + 1 - \log(\nabla \phi_w^*(-r \cdot  \vs_w^{t + 1}))$ in the above inequality. First, we have $\nabla \phi_w(\tilde \vp_w) = \log(\tilde \vp_w) + 1$ by Lemma \ref{eg:entropy-dual}. Second, noting that  there must exist some $\eva_w \in \sR^{|\sW|}$ such that $-r \cdot \vs_w^t = \log(\vp_w^t) + 1$ (see  Lemma \ref{lm:difference-logit}), we
    have
    \begin{equation}
    \begin{split}
        &\log(\nabla \phi_w^*(-r \cdot \vs_w^{t + 1}))  = \nabla \phi_w (\nabla \phi_w^*(-r \cdot \vs_w^{t + 1})) - 1 = -r \cdot \vs_w^{t + 1} - 1 \\
        &\qquad = -\tilde \eta^t \cdot c_w(\vp^t) + \log(\vp_w^t) + \eva_w \cdot \vone_w - 1 =  -\tilde \eta^t \cdot c_w(\vp^t) + \nabla \phi_w(\vp_w^t) + \eva_w \cdot \vone_w, 
        \label{eq:p-1}
    \end{split}
    \end{equation}
    where the first and second equalities come from Lemmas \ref{eg:entropy-dual} and \ref{lm:fenchel}, respectively, while the last equality again follows from Lemma \ref{eg:entropy-dual}. Plugging the above results into Equation \eqref{eq:i-1}, we obtain
    \begin{equation}
    \begin{split}
        \langle \nabla \phi_w(\vp_w^t) -  \nabla \phi_w(\tilde \vp_w),  \vp_w - \tilde \vp_w \rangle &\leq \langle \tilde \eta^t \cdot c_w(\vp^t) - \eva_w \cdot \vone_w,  \vp_w -\tilde \vp_w\rangle = \tilde \eta^t \cdot  \langle c_w(\vp^t),  \vp_w - \tilde \vp_w\rangle,  \quad \forall \vp_w \in \sQ_w^t,
    \end{split}
    \end{equation}
    where the equality holds because $\langle \eva_w \cdot \vone_w,  \vp_w - \tilde \vp_w\rangle = \eva_w \cdot (\langle \vone_w, \vp_w \rangle - \langle \vone_w, \tilde \vp_w \rangle)  = 0$.
\end{proof}

\subsection{Proof of Lemma \ref{lm:mid}}
\label{app:mid}

\begin{proof}[Proof of Lemma \ref{lm:mid}]
    For all $\vp^* \in \sP^*$, Lemma \ref{lm:triangle}  implies that 
    \begin{equation}
        D_w(\vp_w^*, \vp_w^{t + 1}) = D_w(\vp_w^*, \vp_w^t) - D_w(\vp_w^{t + 1},  \vp_w^t) + \langle \nabla \phi_w(\vp_w^{t}) - \nabla \phi_w(\vp_w^{t + 1}), \vp_w^* - \vp_w^{t + 1}\rangle.
    \label{eq:p-two-lemmas}
    \end{equation}
    Recalling that $\vs^0 < \infty$ implies $\sQ_w^t = \sP_w$, we can set $\vp_w = \vp_w^*$ in Lemma \ref{lm:iteration}, which gives rise to
    \begin{equation}
        \langle \nabla \phi_w(\vp_w^{t}) -  \nabla \phi_w(\vp_w^{t + 1}),  \vp_w - \vp_w^{t + 1} \rangle \leq \tilde \eta^t \cdot  \langle c_w(\vp^t),  \vp_w^* - \vp_w^{t + 1}\rangle.
        \label{eq:pp-3}
    \end{equation}
    Summarizing \eqref{eq:p-two-lemmas} and \eqref{eq:pp-3} for all $w \in \sW$ and combining the result leads to
    \begin{equation}
        D(\vp^*, \vp^{t + 1}) \leq D(\vp^*, \vp^t) - D(\vp^{t + 1},  \vp^t) + \tilde \eta^t \cdot  \langle c(\vp^t),  \vp^* - \vp^{t + 1}\rangle.
        \label{eq:p-m-1}
    \end{equation}
    By applying Lemmas \ref{lm:bregman-ieq} and \ref{lm:l1-l2}, we obtain the following
    \begin{equation}
        D(\vp^{t + 1}, \vp^t) \geq \frac{1}{2} \cdot \sum_{w \in \sW} \|\vp_w^t - \vp_w^{t + 1}\|_1^2 \geq \frac{1}{2} \cdot \sum_{w \in \sW} \|\vp_w^t - \vp_w^{t + 1}\|_2^2 =  \frac{1}{2} \cdot \|\vp^t - \vp^{t + 1} \|_2^2.
        \label{eq:s-1}
    \end{equation}
    Plugging \eqref{eq:s-1} into 
    \eqref{eq:p-m-1} completes the proof.
\end{proof}

\subsection{Proof of Proposition \ref{prop:decreasing}}
\label{app:decreasing}

\begin{proof}[Proof of Proposition \ref{prop:decreasing}]
    For any $\vp^* \in \sP^*$, using the results from Proposition \ref{prop:ue-vi} and Lemma \ref{lm:mid}, we can obtain
    \begin{equation}
        D(\vp^*, \vp^{t + 1}) \leq D(\vp^*, \vp^t) - \frac{1}{2} \cdot \|\vp^t - \vp^{t + 1} \|_2^2 + \tilde \eta^t \cdot  \langle c(\vp^t) - c(\vp^*),  \vp^* - \vp^{t + 1}\rangle.
        \label{eq:term-0}
    \end{equation}
    Per $1/4L$-cocoercivity of $c(\vp)$ (guaranteed by Proposition \ref{prop:property}),
    \begin{equation}
    \begin{split}
        &\langle c(\vp^t) - c(\vp^*), \vp^{t + 1} - \vp^*\rangle =  \langle c(\vp^t) - c(\vp^*), \vp^{t + 1} - \vp^t + \vp^t - \vp^*\rangle \\
        &\qquad \geq \langle c(\vp^t) - c(\vp^*), \vp^{t + 1} - \vp^t \rangle + \frac{1}{4L} \cdot \|c(\vp^t) - c(\vp^*)\|_2^2 \\
        &\qquad=-\frac{1}{L} \cdot \left( \langle c(\vp^t) - c(\vp^*), L \cdot(\vp^{t} - \vp^{t + 1}) \rangle - \frac{1}{4} \cdot \|c(\vp^t) - c(\vp^*)\|_2^2 \right) \ge -L \cdot \|\vp^t - \vp^{t + 1} \|_2^2,
        \label{eq:term-2}
    \end{split}
    \end{equation}
    where the first and second inequalities follow, respectively, from the cococercivity of $c(\vp)$ and Lemma \ref{lm:bound}. Combining Equations \eqref{eq:term-0} and \eqref{eq:term-2} then concludes the proof.
\end{proof}

\subsection{Proof of Theorem \ref{thm:convergence-ue}}
\label{app:convergence-ue}

\begin{proof}[Proof of Theorem \ref{thm:convergence-ue}]

    Proposition \ref{prop:decreasing} indicates that for any $\vp^* \in \sP^*$, the KL divergence $D(\vp^*, \vp^t)$ decreases with $t$ as long as $\tilde \eta^t = r \cdot \eta^t < 1/2L$, which is satisfied by Condition (ii) and by Conditions (i) for sufficiently large $t$. Thus, without loss of generality, we assume $\tilde \eta^t < 1/2L$ for all $t \geq 0$. It follows
    \begin{equation}
        0 \leq \frac{1 - 2\tilde \eta^t L}{2} \cdot \|\vp^t - \vp^{t + 1} \|_2^2 \leq D(\vp^*, \vp^t) - D(\vp^*,  \vp^{t + 1}).
        \label{eq:main-proof}
    \end{equation}
    Thus,  $D(\vp^*, \vp^t)$ is a monotonically decreasing sequence. According to the monotone convergence theorem, the limit of $D(\vp^*, \vp^t)$ exists for every $\vp^* \in \sP^*$. Furthermore, by letting $t \to \infty$ in \eqref{eq:main-proof} and applying the squeeze theorem, we have $\| \vp^t - \vp^{t + 1}\|_2 \to 0$. 
    
    So far, we have established that (a) $D(\vp^*, \vp^t)$ converges (\textit{not necessarily to 0}) when $t \to \infty$ for any $\vp^* \in \sP$, and (b) $\|\vp^t - \vp^{t + 1}\|$ converges to 0 when $t \to \infty$.  Both properties are necessary for the convergence of $\vp^t$ to $\sP^*$. However, taken together, they are still insufficient.  Other possibilities include but are not limited to: (a) $\vp^t$ converges before reaching a fixed point because $\eta^t$ decreases too fast; (b) $\vp^t$ converges to a fixed point that is not an equilibrium; (c) $\vp^t$ never converges, just keeping a constant ``distance" (judged by the KL divergence) with every $\vp^* \in \sP^*$.  
    
    To establish sufficiency we need to find a convergent subsequence $\{\vp^{t_j}\} \subseteq \{\vp^t\}$ with $\vp^{t_j} \to \sP^*$ when $j \to \infty$. Denote $\hat \vp \in \sP^*$ as the limit of $\vp^{t_j}$. Then, $\vp^{t_j} \to \hat \vp$ implies $D(\hat \vp, \vp^{t_j}) \to 0$. Recalling that the limit of $D(\vp^*, \vp^t)$ exists for all $\vp^* \in \sP^*$, it follows $D(\hat \vp, \vp^t) \to 0$ and hence $\vp^t \to \hat \vp$.  We proceed to prove that such a subsequence can be found under either of the two conditions stated in Theorem \ref{thm:convergence-ue}.
    
    \textbf{Condition (i).} 
    If there does \textit{not} exist a subsequence of $\{\vp^t\}$ that converges to $\sP^*$, then there must exist $\epsilon > 0$ such that $\|\vp^t - \vp^*\|_2 > \epsilon$ for all $\vp^* \in \sP^*$ and all sufficiently large $t$. Without loss of generality, we simply assume it holds for all $t \geq 0$.
    From Property (ii) in Proposition \ref{prop:property}, we have
    $\langle c(\vp^t), \vp^t - \vp^* \rangle \geq 0$, where the equality holds if and only if $\vp^t \in \sP^*$. Because $\sP^*$ is a compact set (Property (iii) in Proposition \ref{prop:property}) and the function $\langle c(\vp), \vp - \vp^*\rangle$ is continuous in $\vp$, there must exist $\delta > 0$ such that
    \begin{equation}
          \langle c(\vp^t), \vp^t - \vp^* \rangle > \delta, \quad \forall t \geq 0. 
          \label{eq:vs-bound}
    \end{equation}
    From the proof of Lemma \ref{lm:mid}, we have
    \begin{equation}
    \begin{split}
        D(\vp^*, \vp^{t + 1}) &\leq D(\vp^*, \vp^t) - \frac{1}{2} \cdot \|\vp^t - \vp^{t + 1} \|_2^2 + \tilde \eta^t \cdot  \langle c(\vp^t),  \vp^* - \vp^t + \vp^t -\vp^{t + 1}\rangle \\
        &\leq D(\vp^*, \vp^t) - \delta \cdot \tilde \eta^t - \frac{1}{2} \cdot \|\vp^t - \vp^{t + 1} \|_2^2 +  \tilde \eta^t \cdot  \langle c(\vp^t),  \vp^t -\vp^{t + 1}\rangle \\
        &\leq D(\vp^*, \vp^t) - \delta \cdot \tilde \eta^t + \frac{(\tilde \eta^t)^2}{2} \cdot \|c(\vp^t)\|_2^2 \leq  D(\vp^*, \vp^t) - \delta \cdot \tilde \eta^t + \frac{G^2}{2} \cdot (\tilde \eta^t)^2,
        \label{eq:vs-key}
    \end{split}
    \end{equation}
    where the second inequality is a direct result of \eqref{eq:vs-bound}, the third inequality follows from Lemma \ref{lm:bound}, and $G = \max_{\vp \in \sP} \|c(\vp)\|_2$ (the compactness of $\sP$ guarantees $G < \infty$).
    Telescoping Equation \eqref{eq:vs-key} yields
    \begin{equation}
         D(\vp^*, \vp^{t + 1}) \leq D(\vp^*, \vp^0) - \delta \cdot \sum_{i = 0}^t \tilde \eta^i + \frac{G^2}{2} \cdot \sum_{i = 0}^t (\tilde \eta^i)^2 = D(\vp^*, \vp^0) - \tilde \tau^t \cdot \left(\delta - \frac{G^2}{2} \cdot \tilde \gamma^t \right),
         \label{eq:decreasing-main}
    \end{equation}
    where $\tilde \tau^t = \sum_{i = 0}^t \tilde \eta^i$ and $\tilde \gamma^t = \sum_{i = 0}^t (\tilde \eta^i)^2 / \sum_{i = 0}^t \tilde \eta^i$. Per the second part of Condition (i), $\tilde \tau^t \to \infty$.  Condition (i) also ensures $\tilde \gamma^t \to 0$ (see Remark \ref{rm:rate}). Letting $t \to \infty$ in Equation \eqref{eq:decreasing-main} then leads to 
    \begin{equation}
        0 \leq \lim_{t \to \infty} D(\vp^*, \vp^{t + 1}) \leq -\infty,
    \end{equation}
    a contradiction. 
    The above proof does not apply to Condition (ii). To see this, note that if $\tilde \eta^t$ is fixed as a constant $\tilde \eta$, then we have $\tilde \gamma^t = \tilde \eta$ for all $t \geq 0$. However, it is not necessarily true that  $G^2 \cdot \tilde \eta / 2 < \delta$. Consequently, letting $t \to \infty$ in Equation \eqref{eq:decreasing-main} may not produce a contradiction. Alternatively, we establish the convergence under Condition (ii) as follows.

    \textbf{Condition (ii).}
    Since $\sP$ is a compact set, the Bolzano-Weierstrass theorem guarantees the sequence $\{\vp^t\}$ have a convergent subsequence $\{\vp^{t_j}\}$. Denoting $\hat{\vp}$ as the limit of $\vp^{t_j}$ when $j \to \infty$, we first prove for any $\hat \vs \in \bar \sR^{|\sK|}$ with $\hat \vp = q_r(
    \hat \vs)$, $\hat \vp = q_r(\hat \vs + \eta \cdot c(\hat \vp))$ (that is, $\hat \vp$ is a fixed point). Assuming there exists some $\delta > 0$ such that $\|\hat \vp- q_r(\hat \vs + \eta \cdot c(\hat \vp))\|_2 \geq \delta$. We proceed to establish a contradiction. Using the triangular inequality, we have
    \begin{equation}
    \begin{split}
        \| \vp^{t_j + 1} -  \vp^{t_j}\|_2 &=\|q_r(\vs^{t_j} + \eta \cdot c(\vp^{t_j})) -  \vp^{t_j}\|_2 \\
        &= \|q_r(\vs^{t_j} + \eta \cdot c(\vp^{t_j})) - q_r(\hat \vs + \eta \cdot c(\hat \vp)) + q_r(\hat \vs + \eta \cdot c(\hat \vp)) - \hat \vp +  \hat \vp -  \vp^{t_j}\|_2  \\
        &\geq \|q_r(\hat \vs + \eta \cdot c(\hat \vp)) - q_r(\hat \vs) \|_2 - \|
        q_r(\vs^{t_j} + \eta \cdot c(\vp^{t_j})) - q_r(\hat \vs + \eta \cdot c(\hat \vp)) +  \hat \vp -  \vp^{t_j}
        \|_2 \\
        &\geq \delta - \|  q_r(\vs^{t_j} + \eta \cdot c(\vp^{t_j})) - q_r(\hat \vs + \eta \cdot c(\hat \vp))\|_2 - \|\hat \vp -  \vp^{t_j}\|_2 \\
        &= \delta -  \|
        q_r(-\log(\vp^{t_j}) /  r + c(\vp^{t_j})) - q_r(-\log(\hat \vp) /  r + c(\hat \vp))
        \|_2 - \|\hat \vp -  \vp^{t_j}\|_2,
    \end{split}
    \label{eq:bound-h}
    \end{equation}
    where the last equality holds by Lemma \ref{lm:difference-logit}.
    Letting $j \to \infty$ on both sides of Equation \eqref{eq:bound-h}, we can first derive $\| \vp^{t_j + 1} -  \vp^{t_j}\|_2 \to 0$ on the left side. Meanwhile, on the right side, as the function $q_r(-\log(\vp) /  r + c(\vp))$ is continuous in $\vp$, $\|\hat \vp -  \vp^{t_j}\|_2 \to 0$ implies $\|q_r(-\log(\vp^{t_j}) /  r + c(\vp^{t_j})) - q_r(-\log(\hat \vp) /  r + c(\hat \vp)) \|_2 \to 0$. Thus, we arrive at $0 \geq \delta - 0 - 0 > 0$, a contradiction.
    
    According to Lemma \ref{lm:iteration}, 
    $\hat \vp = q_r(\hat \vs + \eta \cdot  c(\hat \vp))$ if and only if for all $w \in \sW$,
    \begin{equation}
    \begin{split}
        0 = \langle \nabla \phi_w(\hat \vp_w) -  \nabla \phi_w(\hat \vp_w),  \vp_w - \vp_w^{t + 1} \rangle \leq r \cdot \langle \eta \cdot c_w(\hat \vp),  \vp_w - \hat \vp_w\rangle,  \quad \forall \vp_w \in \sQ_w(\hat \vp_w),
        \label{eq:f-1}
    \end{split}
    \end{equation}
    where $\hat \sQ_w = \{\vp_w \in \sP_w: \supp(\vp_w) \subseteq \supp(\hat \vp_w)\}$.
    Summarizing Equation \eqref{eq:f-1} for all $w \in \sW$ then gives
    \begin{equation}
        \langle c(\hat \vp),  \vp - \hat \vp\rangle \geq 0, \quad \forall \vp \in \hat \sQ := \prod_{w \in \sW} \hat \sQ_w.
        \label{eq:fixed-vi}
    \end{equation}
    We then show $\hat \vs \in \sP^*$, which will conclude the proof.  We first claim $\vp^* \in \hat \sQ$; otherwise, we would have $D_{\phi}(\vp^*, \hat \vp) = \infty$, which is impossible given  $D_{\phi}(\vp^*, \vp^t)$ is  monotonically decreasing. We  then set $\vp = \vp^*$ in Equation \eqref{eq:fixed-vi}, which leads to $\langle c(\hat \vp),  \vp^* - \hat \vp\rangle \geq 0$. Meanwhile, we also have $\langle c(\hat \vp),  \hat \vp - \vp^* \rangle \geq 0$ by Property (ii) in Proposition \ref{prop:property}. Combining both gives $\langle c(\hat \vp),  \hat \vp - \vp^*\rangle = 0$, which ensures $\hat \vp \in \sP^*$ per Property (ii) in Proposition \ref{prop:property}.

    It is also worth noting that \eqref{eq:bound-h} holds only for a fixed $\eta$. Hence, the above proof is not applicable under Condition (i), which allows $\eta$ to decrease with $t$. 

    \smallskip
    \begin{remark}
    \label{rm:rate}
        If $\tilde \gamma^t$ does not converge to 0, then there must exist a subsequence $\{\tilde \gamma^{t_j}\}_{j = 0}^{\infty}$ and a constant $\xi > 0$ such that $\tilde \gamma^{t_j} \geq \xi$ for all $j \geq 0$, which  implies that
        \begin{equation}
            \sum_{i = 0}^{t_j}  \tilde \eta^i \cdot (\tilde \eta^i - \xi) = \sum_{i = 0}^{t_j} (\tilde \eta^i)^2 - \xi \cdot \sum_{i = 0}^{t_j} \tilde \eta^i \geq 0, \quad \forall j \geq 0.
            \label{eq:step-size-1}
        \end{equation}
        Meanwhile, the first part of Condition (i) guarantees that $\tilde \eta^t \to 0$. Hence, there must exist $\bar t \geq 0$ such that $\tilde \eta^t < \xi / 2$ for all $t \geq \bar t$. Without loss of generality, we assume $t_j \geq \bar t$ for all $j \geq 0$; otherwise, we can drop the first few elements of the subsequence. Then we have
        \begin{equation}
            \sum_{i = 0}^{t_j}  \tilde \eta^i \cdot (\tilde \eta^i - \xi)  \leq \sum_{i = 0}^{t_j}  \tilde \eta^i \cdot (\xi / 2 - \xi) = -\xi / 2 \cdot \sum_{i = 0}^{t_j}  \tilde \eta^i, \quad \forall j \geq 0.
            \label{eq:step-size-2}
        \end{equation}
        Letting $j \to \infty$ in Equation \eqref{eq:step-size-2} lead to $ \sum_{i = 0}^{t_j}  \tilde \eta^i \cdot (\tilde \eta^i - \xi) \to -\infty$, which contradicts Equation \eqref{eq:step-size-1}.
    \end{remark}
\end{proof}

\end{appendices}

\section*{Acknowledgements}

This research is funded by US National Science Foundation under the awards  CMMI \#2225087 and ECCS \#2048075.  The authors are grateful for the valuable comments offered by Mr. Boyi Liu and Prof. Hani Mahmassani at Northwestern University and Prof. Yafeng Yin at the University of Michigan, Ann Arbor. The remaining errors are our own.

\bibliographystyle{ims}
\begin{small}
\bibliography{example_paper}
\end{small}

\end{document}